\documentclass[12pt]{article}

\usepackage{amsmath,amsfonts,amsbsy,amssymb,amsthm}

\usepackage[english]{babel}

\usepackage[parfill]{parskip}
\usepackage{fullpage}
\usepackage{dsfont}
\usepackage{verbatim}
\usepackage{mathtools}
\usepackage{xy}
\usepackage{hyperref}
\usepackage{changepage}
\usepackage{enumerate}
\usepackage{enumitem}
\usepackage{cancel}
\usepackage{float}
\usepackage{subcaption}
\usepackage[dvipsnames]{xcolor}
\usepackage[margin=1cm]{caption}

\setlist{leftmargin=*}

\numberwithin{equation}{section}

\makeatletter
\newtheoremstyle{corsivo}
{\medskipamount}{\medskipamount}%
{\itshape}{}%
{\bfseries}{}%
{ }
{\thmname{#1}\thmnumber{\@ifnotempty{#1}{ }\@upn{#2}}%
	\thmnote{ {\bfseries(#3)}}.}%
\makeatother

\theoremstyle{corsivo}
\newtheorem{thm}{Theorem}[section]
\newtheorem{lemma}[thm]{Lemma}
\newtheorem{crl}[thm]{Corollary}
\newtheorem{prop}[thm]{Proposition}

\makeatletter
\newtheoremstyle{dritto}
{\medskipamount}{\medskipamount}%
{\rmfamily}{}%
{\bfseries}{}%
{ }
{\thmname{#1}\thmnumber{\@ifnotempty{#1}{ }\@upn{#2}}%
	\thmnote{ {\bfseries(#3)}}.}%
\makeatother

\theoremstyle{dritto}

\newtheorem{dfn}[thm]{Definition}
\newtheorem{rmk}[thm]{Remark}

\newcommand{\ga}{\gamma}
\newcommand{\Ga}{\Gamma}

\newcommand{\eps}{\varepsilon}

\newcommand{\ph}{\varphi}
\newcommand{\la}{\lambda}  
\newcommand{\La}{\Lambda}

\newcommand{\Id}{\mathds{1}} 

\newcommand{\eu}{\mathrm{e}}
\newcommand{\iu}{\mathrm{i}}  
\newcommand{\di}{\mathrm{d}}

\newcommand{\N}{\mathbb{N}}
\newcommand{\Z}{\mathbb{Z}}
\newcommand{\Q}{\mathbb{Q}}
\newcommand{\R}{\mathbb{R}}
\newcommand{\C}{\mathbb{C}}
\newcommand{\E}{\mathbb{E}}

\newcommand{\D}{\mathfrak{D}}

\newcommand{\Ci}{\mathcal{C}}
\newcommand{\Hi}{\mathcal{H}}
\newcommand{\U}{\mathcal{U}}
\newcommand{\G}{\mathcal{G}}

\newcommand{\norm}[1]{\left\| #1 \right\|}

\newcommand{\bra}[1]{\left\langle #1 \right|}
\newcommand{\ket}[1]{\left| #1 \right\rangle}

\newcommand{\set}[1]{ \left\{  #1 \right\}} 

\DeclareMathOperator{\Tr}{Tr}

\DeclareMathOperator{\Span}{Span}

\DeclareMathOperator{\supp}{supp} 

\DeclareMathOperator{\dist}{dist}

\newcommand{\ie}{{\sl i.\,e.\ }} 
\newcommand{\eg}{{\sl e.\,g.\ }}

\newcommand{\virg}[1]{``#1''}

\begin{document}

\title{Dynamical delocalization in \\
disordered 2D Chern insulators}
\date{}
\author{Gianluca Panati, Constanza Rojas-Molina, Vincenzo Rossi}
\maketitle

\begin{abstract}
We show the existence of energies exhibiting dynamical delocalization in discrete 
2D Chern insulators perturbed by a random potential in a general setting.  Our proof exploits two main features of the model: jumps in the integer value of the Chern character and continuity of averaged spectral projections in both energy and disorder parameters. This allows us to show robustness of the topological index in the presence of disorder, which, combined with existing methods to prove dynamical localization, allows us to provide detailed information on the phase diagram of the model. The novelty of our approach is that we are able to show dynamical delocalization in the disorder parameter, and not only in the energy parameter, which allows to prove Anderson metal-insulator transition even when spectral gaps close due to the strength of disorder.
\end{abstract}

\section*{Introduction}

Topological phases of quantum matter have become a focal point of modern condensed matter physics, driven by both their fundamental conceptual relevance and their potential for  technological applications 
\cite{Hasan Kane 2010, Qi Zhang 2011, BernevigHughes book}. 

In the idealized case of periodic systems, different phases are distinguished by the values of a topological invariant associated with the space of occupied states, 
which - when decomposed with respect to quasi-momentum - forms the Bloch bundle. \\
For 2D systems belonging to class A in the Kitaev (or Altland-Zirnbauer) symmetry classification \cite{Kitaev 2009}, the relevant topological index is provided by the Chern number of the Bloch bundle which equals, up to a universal constant, the transverse conductivity. In view of that, topologically non-trivial systems in that class are usually called \emph{Chern insulators}, or alternatively \emph{Quantum anomalous Hall insulators}, as they exhibit a non-zero transverse conductivity in absence of any external magnetic field. The first model for such systems was introduced by Haldane in  \cite{Haldane 1988},  
a work which pioneered  the investigation of Chern insulators and, more generally, of topological insulators.

Some topological indices - dubbed as \virg{strong} - are expected to be robust w.r.t.\ moderate disorder, which represents impurities, vacancies, or thermal fluctuations. From a mathematical perspective, this robustness is non-trivial since disorder breaks the translation invariance required for the standard Bloch-Floquet construction, thus requiring a completely different approach to the problem. Consequently, the topological nature of the state is no longer captured by the geometry of a bundle over the Brillouin zone, but rather by the algebraic and spectral properties of the Fermi projector itself.

In the mathematical literature, this fact has been clearly understood in the similar setting of Quantum Hall systems \cite{Thouless Kohmoto Nightingale de Nijs 1982}, see the review \cite{Graf 2007}. A variety of results has been obtained by $C^*$-algebraic approaches, in connection with non-commutative geometry \cite{Bellissard 1986, Bellissard van Elst Schulz-Baldes 1994, Richter Schulz-Baldes 2001} or with crossed product algebras, see \cite{Schulz-Baldes Kellendonk Richter 2002, Kellendonk Schulz-Baldes 2004} where the authors prove bulk-edge correspondence in discrete and continuum disordered systems. We refer to \cite{Prodan Schulz-Baldes 2016} for a comprehensive review.  An alternative research approach has been based on analytic methods, \ie random Schr\"{o}dinger operators \cite{Aizenman Graf 1998, Combes Germinet 2004, Elgart Graf Schenker 2005, Bouclet Germinet Klein Schenker 2005, Germinet Klein Schenker 2007, Taarabt 2014, Becker Han 2022, Becker Oltman Vogel 2025}  providing detailed estimates on the (de)localization behavior. 

Our purpose is two-fold: to show the existence of delocalized energies in the spectrum of a 2D disordered Chern insulator in both the energy and the disorder parameter, even when the spectral gap closes, and to give a rigorous description of the metal insulator transition in disordered discrete 2D Chern insulators, in an axiomatic way, that recovers previous results known for disordered operators in the presence of magnetic fields. 

We first detail the regimes of dynamical localization that can be proven with existing techniques (Fractional Moment Method \cite{Aizenman Molchanov 1993} and Multiscale Analysis \cite{Frohlich Spencer 1983}): strong disorder regime, energies away from the unperturbed spectrum, and (external and internal) band edge localization, specifying the minimal assumptions required on the model for the methods to be applicable. While these results are not necessarily new, we believe that a detailed and clear exposition of all techniques available and how they apply to 2D Chern insulators is valuable for further work on these models.

Our main novel results, Theorems \ref{DD disorder} and \ref{curve}, contribute to the understanding of the localization-delocalization transition for 2D Chern insulators, conjectured in \cite{Tan 1994}, based on numerical simulations, and studied in \cite{Bellissard van Elst Schulz-Baldes 1994, Richter Schulz-Baldes 2001} in the similar setting of Quantum Hall insulators. It is expected that there exist continuous lines in the energy-disorder plane where the states are delocalized and which divide the plane in regions of localization in which the Chern character is constant. We are able to prove that any continuous curve in the energy-disorder plane connecting two  regions of dynamical localization in the perturbed spectrum corresponding to different Chern numbers, must contain  at least one point for which the corresponding energy is dynamically delocalized, see Fig. \ref{fig 1}. 

\begin{figure}[ht]
    \begin{subfigure}{\textwidth}
    \centering
    \includegraphics[scale=0.3]{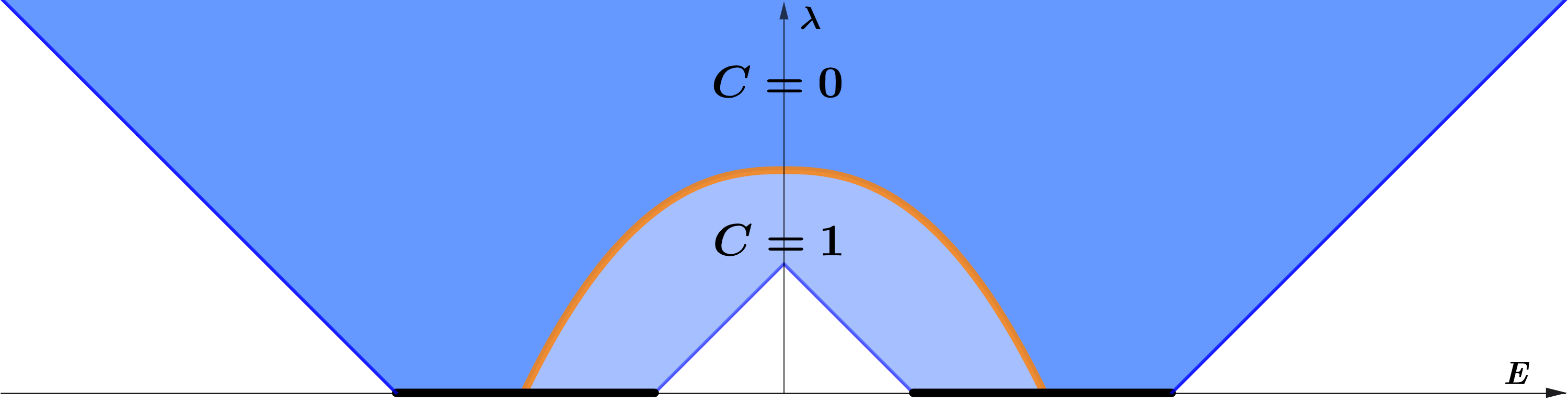}
    \caption{}
    \label{subfig:a}
    \end{subfigure}
    \\[0.2cm]
    \begin{subfigure}{\textwidth}
    \centering
    \includegraphics[scale=0.3]{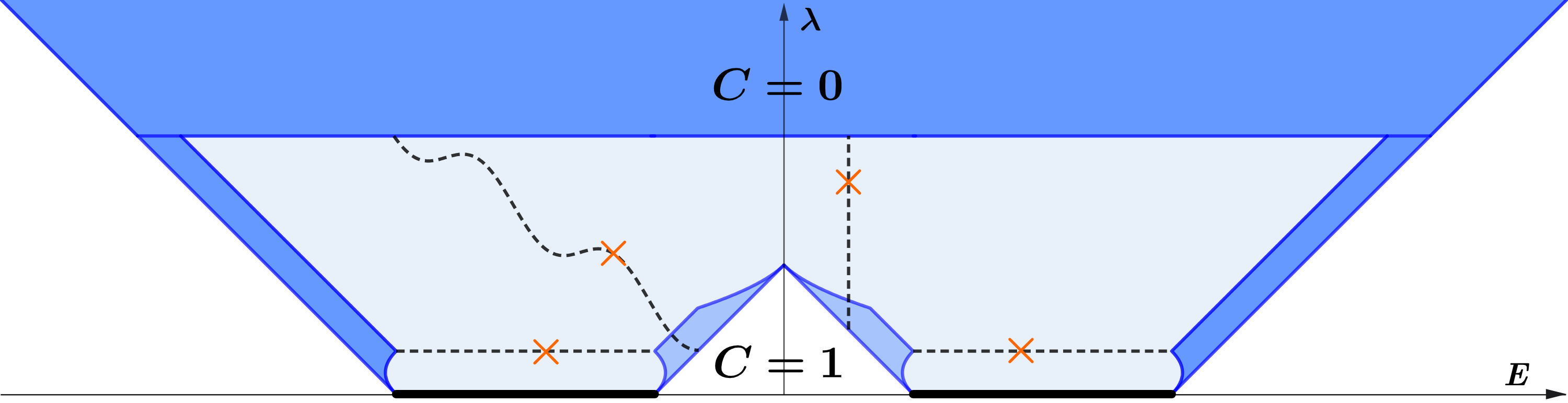}
    \caption{}
    \label{subfig:b}
    \end{subfigure}
    \caption{\small{Qualitative dynamical (de)localization phase diagrams in energy-disorder parameters for the Haldane-Anderson model.\\ Panel (a): conjectured phase diagram. A continuous line (orange), where the metal-insulator transition happens, divides the plane in connected regions of dynamical localization, where the Chern character is constant.\\ Panel (b): partial phase diagram proved in this article. In dark and light blue, the regions of dynamical localization, which enclose the region where the metal-insulator transition may happen. The orange crosses indicate the possible location of the pairs $(E,\lambda)$ for which dynamical delocalization holds, illustrating our main results, Theorems \ref{DD disorder} and \ref{curve}.}}\label{fig 1}
\end{figure}

We illustrate our findings with a thorough analysis of the Haldane-Anderson model, that is, the Haldane model perturbed by a random potential with independent and identically distributed random variables modeling the impurities in the system \cite{Anderson 1958}. 

Comparing with the previous literature, the constancy of the Chern character w.r.t.\ the disorder parameter was first proved in \cite{Richter Schulz-Baldes 2001}. Their proof is based on the homotopic invariance of the Chern character in a $C^*$-algebraic context \cite{Bellissard van Elst Schulz-Baldes 1994}, and rely on the assumption of exponentially decaying Fractional Moments Bounds, that is, a regime of localization. In this article we weaken the latter assumption, as described in Theorem \ref{continuity chern disorder},  by using the continuity in energy and disorder of the averaged spectral projection, which we prove by combining techniques in \cite{Germinet Klein Schenker 2009} and \cite{Krishna 2007}. Differently from \cite{Richter Schulz-Baldes 2001}, this continuity result does not rely on localization assumptions, but only on the regularity of the single-site probability distribution. Our proof exploits analytic methods, in the spirit of \cite{Aizenman Graf 1998}, and we implement techniques coming from the Multiscale Analysis, developed in \cite{Germinet Klein 2001, Germinet Klein 2004}, in order to obtain quantitative estimates on the behavior of delocalized states (see Proposition~\ref{rmk characterization}).

We expect that our results will contribute to a deeper understanding of the robustness of topological indices in disordered topological insulators (either in higher dimension or in other classes of the Kitaev table of topological phases of matter) and 
of the disorder-induced transition from localized to delocalized phases for disordered non-trivial Chern insulators. 

The article is structured as follows: in Section \ref{Setting} we introduce a general framework for 2D Chern insulators and state our main results, before moving to their application to the Haldane-Anderson model. In Section \ref{Chern section} we show the robustness of the Chern character with respect to a random perturbation, by showing that it is well defined in the disordered setting, it is integer valued and is jointly continuous on regions of dynamical localization, making it constant there. In Section \ref{localization section} we provide a detailed description of the different dynamical localization regimes that can be applied to our setting, emphasizing the minimal assumptions under which the results hold. In Section \ref{proof section} everything comes together to prove our main results, which are then applied in Section \ref{section far haldane} to give a thorough description of the Anderson metal-insulator transition for the disordered Haldane model. To make our manuscript as self-contained as possible, we gather in the Appendix known results adapted to our general setting, including an estimate on the region of band-edge localization in terms of the disorder parameter.

{\bf Acknowledgments} C.\,R.-M.\ thanks A. Deleporte for interesting and enriching discussions. V.R.\ is indebted to G.M.\ Graf for valuable discussions. V.R. gratefully acknowledges Laboratoire AGM at CY Cergy Paris University for its hospitality. \\
G.P.\ gratefully acknowledges the financial support of MUR-Italian Ministry of University and Research and Next Generation EU within the PRIN project 2022AKRC5P “Interacting Quantum Systems: Topological Phenomena  and Effective Theories”, and from the National Quantum Science and Technology Institute (PNRR MUR project PE0000023-NQSTI). 
The authors gratefully acknowledge the financial support of GSSI and the European Research Council through the ERC StG MaTCh grant agreement no.101117299. This work was partially supported by the LYSM lab, CNRS, and INdAM.

\newpage

\section{Setting and main results}\label{Setting}

{\bf Notation:} Given $a_1,a_2$ linearly independent vectors in $\R^2$, consider the Bravais lattice
\begin{equation}\label{bravais}
    \Ga=\set{\ga\in\R^2:\ga=\ga_1a_1+\ga_2a_2, \ \ga_1,\ga_2\in \Z}=\Span_{\Z}\set{a_1,a_2}\cong \Z^2. 
\end{equation}
Given $\ga=\ga_1a_1+\ga_2a_2\in\Ga$, $\xi=\xi_1a_1+\xi_2a_2\in\Ga$ we set 
\begin{gather*}
    |\ga|=\sqrt{|\ga_1|^2+|\ga_2|^2},\ |\ga|_\infty=\max\set{|\ga_1|,|\ga_2|}, \ 
    \langle \ga\rangle=\sqrt{1+|\ga|^2}, \ \ga\wedge \xi=\ga_2\xi_1-\ga_1\xi_2.
\end{gather*}
Given $n\in\N^*=\N\setminus\set{0}$, consider the Hilbert space \begin{equation*}
    \ell^2(\Ga;\C^n)=\set{\psi:\Ga\to\C^n:\sum_{\ga\in \Ga}\norm{\psi(\ga)}_{\C^n}^2<\infty}.
\end{equation*}
Denote by $\{\delta_{\ga,i}\}_{\ga\in\Ga,i\in\{1,\ldots,n\}}$ the canonical basis in $\ell^2(\Ga;\C^n)$. 
Given $j=1,2$, define the multiplication operator acting on $\ell^2(\Ga;\C^n)$ by
\begin{equation*}
    (X_j\ph)(\ga)=\ga_j\ph(\ga) \quad \textup{for} \ \ph\in \D(X_j)=\set{\psi\in \ell^2(\Ga;\C^n):\sum_{\ga\in\Ga}|\ga_j|^2\norm{\psi(\ga)}_{\C^n}^2<\infty}.
\end{equation*}
Given $L\in\N^*$, define the box of side $L$ by 
\begin{equation}\label{box}
    \La_L=\set{\ga\in \Ga:\ga=\ga_1 a_1+\ga_2a_2, \ \ga_1,\ga_2\in\set{0,\cdots,L-1}-\lfloor L/2\rfloor}\subset \Ga
\end{equation}
where $\lfloor x\rfloor$ is the integer part of $x$, and denote by $\chi_{\La_L}\in\mathcal{B}(\ell^2(\Ga;\C^n))$ the characteristic function of the set $\La_L$. Denote by $C^\infty_{c,+}(\R)$ the set of non-negative compactly supported, infinitely differentiable functions on $\R$.

\subsection{Ergodic setting}
Given $n\in \N^*$ and a Bravias lattice $\Ga\cong \Z^2$ as defined in \eqref{bravais}, consider the Hamiltonian operator acting on $\ell^2(\Ga;\C^n)$ defined by
\begin{equation}\label{Anderson model}
    H_{\la,\omega}=H_0+ \la V_{\omega}
\end{equation}
where $\la>0$ is called disorder parameter, $H_0 , V_\omega$ are bounded self-adjoint operators. Throughout this article we will assume (some of) the following properties:
\begin{enumerate}[label=$(\mathrm{P}_{\arabic*})$,ref=$(\mathrm{P}_{\arabic*})$]

    \item \label{item: P1} ({\bf magnetic $\Ga$-periodicity}) we have that 
    \begin{equation*}
        (T_\ga^B)^* H_0T_\ga^B=H_0 \quad \textup{for all} \ \ga\in\Ga
    \end{equation*}
    with $(T^B_\ga\psi)(\xi)=\eu^{-\iu B (\ga\wedge \xi)}\psi(\xi-\ga)$, $\xi\in\Ga$, where $B\in\R$.
    \item \label{item: P2} ({\bf finite hopping}) there exists $r\in\N^*$ such that the hopping matrix
    \begin{equation}\label{finite hopping}
        H_0(\ga,\xi)=\Big(\langle\delta_{\ga,i},H_0 \delta_{\xi,j}\,\rangle\Big)_{i,j=1}^n\in M_n(\C)
    \end{equation}
    satisfies $H_0(\ga,\xi)=0$ if $|\ga-\xi|_\infty >r$.
    \item\label{item: P3} ({\bf gapped spectrum}) there exist $N\in\N,\, N>1$, $\set{\alpha_i}_{i=1}^N,\set{\beta_i}_{I=1}^N\subset \R$ with $\alpha_i\leq\beta_i<\alpha_{i+1}\leq\beta_{i+1}$ for all $i\in\set{1,\ldots,N-1}$, such that
    \begin{equation*}
        \sigma(H_0)=\bigsqcup_{i=1}^N[\alpha_i,\beta_i].
    \end{equation*}
    We call the closed intervals $[\alpha_i,\beta_i]$ bands, denoted by $\mathcal{B}_i$ for $i\in\set{1,\ldots,N}$, the open intervals $(\beta_i,\alpha_{i+1})$ internal gaps, denoted by $\G_i$ for $i\in\set{1,\ldots,N-1}$, and we call the open intervals $(-\infty,\alpha_1)$ and $(\beta_N,+\infty)$ external gaps, denoted respectively by $\G_0$ and $\G_N$.
    \item \label{item: P4} ({\bf i.i.d. random variables}) $V_\omega$ is a random multiplication operator given by
    \begin{equation*}
        V_\omega=\sum_{\ga\in\Ga}\sum_{i=1}^n\omega_{\ga,i}\ket{\delta_{\ga,i}}\bra{\delta_{\ga,i}}
    \end{equation*}
    where $\set{\omega_{\ga,i}}_{\ga\in\Ga,i\in\set{1,\ldots n}}$ are independent identically distributed (i.i.d.) random variables, with common probability distribution $\rho$, supported in $[-a,b]$, with $0\leq a,b<\infty$ and $a+b>0$. The probability space is $(\Omega,\mathbb{P})=([-a,b]^{\Ga\times \set{1,\dots,n}},\bigotimes_{(\ga,i)\in\Ga\times\set{1,\ldots,n}}\rho)$ and we denote the expectation w.r.t. $\mathbb{P}$ by $\E[\cdot]$.
    
\end{enumerate}

We will make the following regularity assumptions on the distribution $\rho$:

\begin{enumerate}[label=$(\mathrm{R}_{\arabic*})$,ref=$(\mathrm{R}_{\arabic*})$]
    \item\label{item: R1} ({\bf uniform $\tau$-H\"{o}lder continuity}) there exist $\tau\in (0,1]$, $0<C_{\tau,\rho}<\infty$ such that
    \begin{equation*}
        \sup_{u\in\R}\rho([u,u+t])\leq C_{\tau,\rho} t^\tau \quad \textup{for all} \ t\in (0,\infty).
    \end{equation*}
    \item\label{item: R2} ({\bf decay at the edges of the support}) 
    there exist $\beta>2$ and $0<C<\infty$ such that
    \begin{align*}
        \rho\big([b-\eps,b])\leq C\eps^\beta \quad \textup{and} \quad\rho([-a,-a+\eps])\leq C\eps^\beta, \quad \textup{for small} \ \eps>0.
    \end{align*}
    
\end{enumerate}

Note that if $\rho$ is a uniform distribution, for example $\rho(\di v)=\frac{1}{2}\chi_{[-1,1]}(v)\di v$, then \ref{item: R1} holds with $\tau=1$, however \ref{item: R2} does not hold. If $\rho$ has a bounded density, compactly supported with fast decay at the edge, then \ref{item: R1} holds with $\tau=1$ and \ref{item: R2} is satisfied as well.

As a consequence of \ref{item: P1} and \ref{item: P4}, the operator $H_{\la,\omega}$ is  a bounded self-adjoint operator and ergodic in the standard sense \cite{Pastur 1980, Carmona Lacroix 1990, Pastur Figotin 1992}. Hence standard arguments imply that the spectrum is deterministic and given by
\begin{equation*}
    \sigma(H_{\la,\omega})=\sigma(H_0)+\la [-a,b] \quad \textup{for} \ \mathbb{P}\textup{-almost all} \ \omega\in\Omega
\end{equation*}
where  the sum of sets is defined as $A+B=\set{x+y\in\R: x\in A,y\in B}$ \mbox{(Fig. \ref{fig 2})}. See \eg \cite[Thm. 3.9]{Kirsch 2008}.

\begin{figure}[H]
    \centering
    \includegraphics[width=0.73\textwidth]{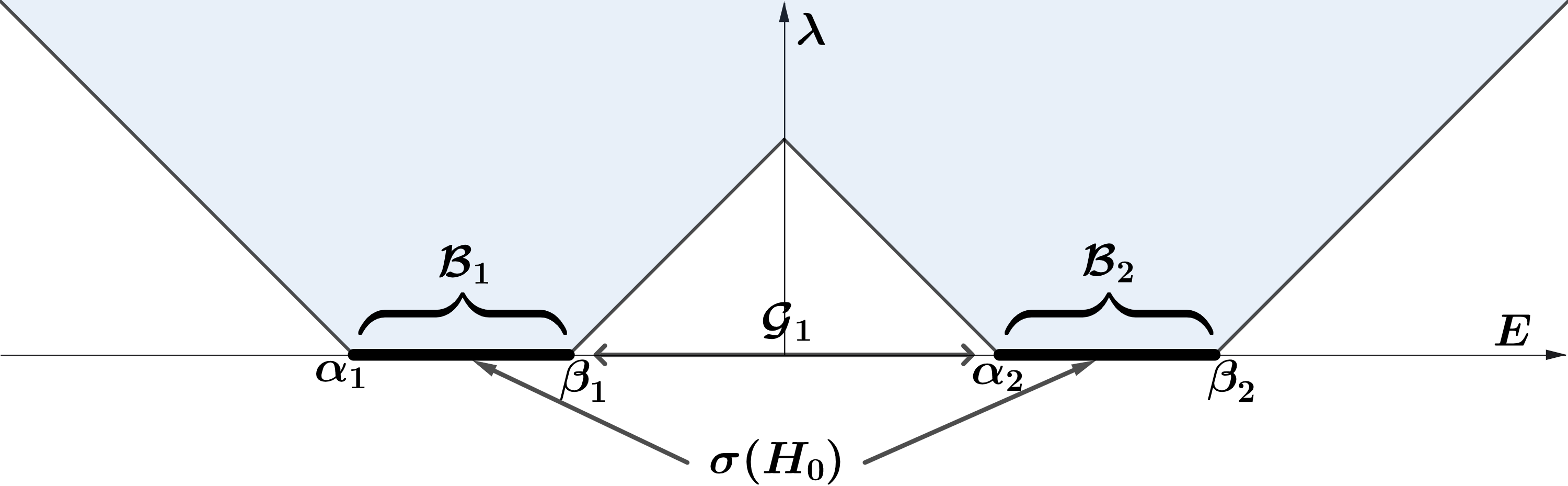}
    \caption{Almost sure spectrum of $H_{\la,\omega}$ plotted in blue in an energy-disorder $(E,\la)$ plane, when $\sigma(H_0)$ is composed of $N=2$ disjoint bands and the distribution $\rho$ is supported in $[-1,1]$.}
    \label{fig 2}
\end{figure}

Under hypotheses \ref{item: P1}, \ref{item: P3}, \ref{item: P4} we have that the spectrum of $H_{\la,\omega}$ can be written, almost surely, as
\begin{equation*}
    \sigma(H_{\la,\omega})=\mathcal{B}_1(\la)\cup\ldots\cup \mathcal{B}_N(\la) \quad \textup{where} \ \mathcal{B}_i(\la)=[\alpha_i-a\la,\beta_i+b\la].
\end{equation*}
If $\la$ is small compared to the size of the unperturbed spectral gaps, that is if 
\begin{equation}\label{gap condition}
        \la<\frac{\min\set{|\G_{i-1},|,|\G_i|}}{a+b} \quad \textup{with} \ \ |\G_i|=
        \begin{cases}
            \alpha_{i+1}-\beta_i &\textup{if}  \ i=1,\ldots ,N-1 \\ \infty &\textup{if} \ i=0,N
        \end{cases}
\end{equation} 
then $\dist (\mathcal{B}_i(\la),\sigma(H_{\la,\omega})\setminus\mathcal{B}_i(\la))>0$ almost surely. Hence the open intervals $\G_{i-1}(\la)$, $\G_i(\la)$, defined by
\begin{gather*}
    \G_i(\la)=
    \begin{cases}
        (-\infty,\alpha_1-a\la) &\textup{if}  \ i=0 \\
        (\beta_i+b\la,\alpha_{i+1}-a\la) &\textup{if} \ i=1,\ldots,N-1 \\
        (\beta_N+b\la,+\infty) &\textup{if} \ i=N
    \end{cases}
\end{gather*}
are nonempty spectral gaps of $H_{\la,\omega}$ almost surely. For the unperturbed operator we also use the notation $\mathcal B_i=\mathcal B_i(0)$, $\mathcal G_i=\mathcal G_i(0)$ for clarity.

\subsection{Main results}

In the following we will denote the spectral projection $\chi_{(-\infty,E]}(H_{\la,\omega})$ by $P_{E,\la,\omega}$ and  $\chi_{(-\infty,E]}(H_0)$ by $P_{E,\la=0}$ for $E\in\R$. Given $L\in\N^*$ consider the box $\La_L$ of side $L$ as defined in \eqref{box}. We recall the concept of Chern character, that generalizes the notion of Chern number, originally defined in the periodic case. This definition does not rely on periodicity and is suitable for our setting, see \cite{Bellissard 1986, Nakamura Bellissard 1990, Bellissard van Elst Schulz-Baldes 1994,Elgart Graf Schenker 2005,Marcelli Moscolari Panati 2023}.

\begin{dfn}[Chern character]
    Let $P\in\mathcal{B}(\ell^2(\Ga;\C^n))$ be an orthogonal projection. Assume that $\chi_{\La_L}P[[X_1,P],[X_2,P]]P\chi_{\La_L}$ is trace class for all $L\in\N^*$. Under these assumptions, the Chern character of $P$ is defined, whenever the limit exists, by
    \begin{equation*}
        C(P)=\lim_{\substack{L\to\infty}}\frac{2\pi\iu}{|\La_L|}\Tr_{\ell^2(\Ga;\C^n)}\Big(\chi_{\La_L}P[[X_1,P],[X_2,P]]P\chi_{\La_L}\Big)
    \end{equation*}
    where $|\La_L|=L^2$ is the number of points of $\Ga$ inside the box $\La_L$.
\end{dfn}

The Chern character $C(P)$ is well defined whenever $P$ is a spectral projection associated to suitable gapped Schr\"{o}dinger-type operators \cite{Elgart Graf Schenker 2005,Marcelli Moscolari Panati 2023}. We say an operator $H_0$ satisfying conditions \ref{item: P1},\ref{item: P2} and \ref{item: P3} is a tight-binding \emph{non-trivial Chern insulator} if there exists (at least) a band $\mathcal B_i =[\alpha_i,\beta_i]$ in its spectrum such that $C(P_{E_1,\la=0})\neq C(P_{E_2,\la=0})$ for $E_1\in \mathcal G_{i-1}$ and $E_2\in \mathcal G_{i}$, that is, the Chern character changes from one spectral gap to the next.

We are interested in the study of non-trivial Chern insulators perturbed by an Anderson-like potential, of the type \eqref{Anderson model}. In order to state our main results, we recall the notion of dynamical (de)localization for random operators:

\begin{dfn}\label{definition dynamical localization}
    We say that $H_{\la,\omega}$ exhibits {\it dynamical localization} (DL) in $I\subset\R$ if there exist constants $C_\la<\infty,\mu_\la>0,\beta_\la\in(0,1]$ such that
    \begin{equation}\label{DL condition}
        \E\bigg[\sup_{E\in I}\Tr_{\C^n}|P_{E,\la,\omega}(\ga,\xi)|\bigg]\leq C_\la \exp(-\mu_\la|\ga-\xi|^{\beta_\la}).
    \end{equation}
    We say that $H_{\la,\omega}$ exhibits dynamical localization in $E\in\R$ if there exists $\delta>0$ such that $H_{\la,\omega}$ exhibits dynamical localization in $(E-\delta,E+\delta)$. Otherwise we say that $H_{\la,\omega}$ exhibits {\it dynamical delocalization} (DD) in $I$ or in $E$.
\end{dfn}

\begin{rmk}
    It can be shown, see \eg \cite[Thm. 4.2]{Germinet Klein 2004} and \cite[Thm. 3]{Germinet Klein 2005}, that, under assumptions \ref{item: P1},\ref{item: P2},\ref{item: P4} and \ref{item: R1}, condition \eqref{DL condition} holding in an open interval $I\subset \R$ is equivalent to absence of transport, in the following sense: for all $g\in C_{c,+}^\infty(I)$
    \begin{equation}
        \E\left[\sup_{t\in\R}\norm{\langle X\rangle^{\frac{p}{2}}\eu^{-\iu tH_{\la,\omega}}g(H_{\la,\omega})\chi_0}\right]<\infty 
    \end{equation}
    for all $p>0$, where $(\langle X\rangle \ph)(\ga)=\langle\ga\rangle \ph(\ga)$, and $\chi_0=\chi_{\La_1}\in \mathcal{B}(\ell^2(\Ga;\C^n))$.
\end{rmk}

In order to measure the property of DL, we follow \cite{Germinet Klein 2004} and introduce the  average moment of order $p\geq 0$, for the time evolution of wave packets initially localized in energy by a functions $g\in C_{c,+}^\infty(\R)$
\begin{equation}
    \mathcal{M}_{\la}(p,g,T)=\frac{2}{T}\int_0^\infty \di t \, \eu^{-\frac{2t}{T}}\E\left[\norm{\langle X\rangle^\frac{p}{2}\eu^{-\iu t H_{\la,\omega}}g(H_{\la,\omega})\chi_0 }^2\right].
\end{equation}
The average moments allow to characterize the region of dynamical localization, as shown in \cite[Thm. 2.11]{Germinet Klein 2004}, which we state in our setting as follows:

\begin{prop}[\protect{\cite[Thm. 2.11]{Germinet Klein 2004}}]\label{rmk characterization}
    Under assumptions \ref{item: P1},\ref{item: P2},\ref{item: P4} and \ref{item: R1} the following two statements are equivalent:
    \begin{enumerate}
        \item $H_{\la,\omega}$ exhibits dynamical localization in $E$;
        \item there exists $g\in C_{c,+}^\infty(\R)$ with $g\equiv 1$ on an open interval containing $E$, $\alpha\geq 0$ and $p>\frac{4\alpha}{\tau}+\frac{24}{\tau}+12$ such that
        \begin{equation*}
            \liminf_{T\to\infty}\frac{1}{T^{\alpha}}\mathcal{M}_\la(p,g,T)<\infty.
        \end{equation*}\end{enumerate}
\end{prop}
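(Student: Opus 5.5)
The plan is to obtain the statement by checking that our setting fits the abstract framework of \cite{Germinet Klein 2004} and then invoking their Theorem 2.11. Their hypotheses on a random operator are: ergodicity, an a priori polynomial decay (SGEE-type) estimate, a Wegner estimate, and independence at a distance. These yield the equivalence between dynamical localization (expressed through the multiscale analysis or the decay of $\E[\sup_E|P_{E}(\ga,\xi)|]$) and slow growth of the averaged moments $\mathcal{M}_\la$. The direction 1$\Rightarrow$2 is elementary: if \eqref{DL condition} holds on $(E-\delta,E+\delta)$, take $g\in C^\infty_{c,+}((E-\delta,E+\delta))$ with $g\equiv1$ near $E$. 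Using the Remark after Definition \ref{definition dynamical localization} (the Germinet–Klein equivalence with absence of transport), the moment $\sup_t\|\langle X\rangle^{p/2}\eu^{-\iu tH}g(H)\chi_0\|^2$ has finite expectation for every $p$. This gives $\mathcal{M}_\la(p,g,T)\le C$ uniformly in $T$, so the liminf is finite already with $\alpha=0$.

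The substantive direction is 2$\Rightarrow$1, and I will follow the Germinet–Klein strategy. The first step is to verify, for $H_{\la,\omega}$ on $\ell^2(\Ga;\C^n)$, the ingredients they need:
\begin{enumerate}
\item \emph{Independence at a distance.} Restrictions of $H_{\la,\omega}$ to boxes separated by more than $r$ depend on disjoint families of $\omega_{\ga,i}$. This follows from \ref{item: P2} and \ref{item: P4}.
\item \emph{Covariance.} This holds under the magnetic translations $T^B_\ga$, composed with the shift on $\Omega$, by \ref{item: P1}.
\item \emph{A Wegner estimate.} I expect the bound $\E[\Tr\chi_{\La_L}\chi_I(H_{\la,\omega,\La_L})]\le C_W\,\la^{-\tau}|I|^\tau L^{2}$ via spectral averaging in a single $\omega_{\ga,i}$, using \ref{item: R1}. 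The multiplicity $n$ only affects the constants, since each $\ket{\delta_{\ga,i}}\bra{\delta_{\ga,i}}$ is rank one.
\item \emph{Generalized eigenfunction expansion and the SGEE.} These are trivial in the discrete bounded setting: $\Tr\langle X\rangle^{-2\nu}<\infty$ for $\nu>1$ in dimension $2$.
\end{enumerate}

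The second step is to run their argument. A finite liminf of $T^{-\alpha}\mathcal{M}_\la(p,g,T)$ along a sequence $T_k\to\infty$ implies, via a Laplace-transform identity and Chebyshev, that the resolvent $\chi_{x}R(E+\iu/T_k)\chi_{0}$ is small in probability at scale $L_k\sim T_k^{\kappa}$. Summing over boundary boxes turns this into the initial-scale estimate of the multiscale analysis (a ``$(\theta,E)$-suitable box''). The bootstrap multiscale analysis of \cite{Germinet Klein 2001} then gives sub-exponential decay of the kind in \eqref{DL condition} on an interval around $E$. The threshold $p>\frac{4\alpha}{\tau}+\frac{24}{\tau}+12$ arises from balancing the $T^\alpha$ growth against the Wegner exponent $\tau$ and the volume factors $L^{d}$ with $d=2$. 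I will track these exponents by substituting $d=2$ into the general formula of \cite[Thm.~2.11]{Germinet Klein 2004}.

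The main obstacle is bookkeeping rather than new ideas. One issue is that Germinet–Klein work with scalar-valued functions on $\Z^d$, whereas here the fibres are $\C^n$-valued and the magnetic phases appear. I will handle this by identifying $\ell^2(\Ga;\C^n)\cong\ell^2(\Z^2)\otimes\C^n$, replacing traces by $\Tr_{\C^n}$ throughout, and observing that the phases $\eu^{-\iu B(\ga\wedge\xi)}$ are unimodular, so they do not affect any norm estimates. The other issue is checking that the Wegner constant is uniform for $E$ in a neighbourhood, which follows because \ref{item: R1} is a uniform bound.
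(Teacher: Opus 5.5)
Your proposal is correct and takes the same route as the paper. The paper gives no proof of this proposition: it imports \cite[Thm.~2.11]{Germinet Klein 2004}, with \ref{item: P1}, \ref{item: P2}, \ref{item: P4}, \ref{item: R1} placing $H_{\la,\omega}$ in the Germinet--Klein framework with $d=2$ and Wegner exponent $\tau$, so your hypothesis check plus the easy direction 1$\Rightarrow$2 via the preceding Remark is the intended argument.

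Two small touch-ups are needed.
\begin{enumerate}
\item The Remark bounds $\E[\sup_t\|\cdot\|]$, not its square. You recover the square from $\|\langle X\rangle^{p/2}A\|^2\le\|g\|_\infty\|\langle X\rangle^{p}A\|$ with $A=\eu^{-\iu tH_{\la,\omega}}g(H_{\la,\omega})\chi_0$.
\item The Germinet--Klein hypotheses also include the Simon--Lieb, eigenfunction-decay and number-of-eigenvalues inequalities. All three are immediate here from finite hopping and the finite dimensionality of box restrictions.
\end{enumerate}
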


Our first result shows that, for disordered non-trivial Chern-insulators with Hamiltonian $H_{\la,\omega}$, there exists (at least) a disordered broadened spectral band $\mathcal B_i(\lambda)$, between spectral gaps, where $H_{\la,\omega}$ exhibits delocalization in (at least) one energy $E^*\in\mathcal B_i(\lambda)$. Moreover, we have quantitative estimates for the growth of the time-averaged moments of the wave packets at these energies.
\begin{thm}\label{DD energy}
    Assume $H_{\la,\omega}=H_0+\la V_\omega$ satisfies hypothesis \ref{item: P1}-\ref{item: P4} and assume that there exists $i\in\set{1,\ldots,N}$ such that 
    \begin{equation*}
        C(P_{E_1,\la=0}) - C(P_{E_2,\la=0}) \neq 0
    \end{equation*}
     for $E_1\in \mathcal G_{i-1}(0)$ and $E_2\in \mathcal G_{i}(0)$. Then: 
    \begin{enumerate}
        \item \label{point 1 dd energy}
        For all $\la<\frac{\min\set{|\G_{i-1}(0)|,|\G_i(0)|}}{a+b}$, there exists $E^*\in\mathcal{B}_i(\la)$ such that $H_{\la,\omega}$ exhibits dynamical delocalization in $E^*$.
        \item\label{point 2 dd energy} If in addition \ref{item: R1} hold, then for all $g\in C_{c,+}^\infty(\R)$, $g\equiv 1$ on an open interval containing $E^*$, for all $p>0$ there exists  $C_{p,g,\la}>0$ such that
        \begin{equation*}
            \mathcal{M}_\la(p,g,T)\geq C_{p,g,\la}T^{\frac{\tau p}{4}-\kappa}
        \end{equation*}
        for all $\kappa>3\tau +6,$ and $T\geq 0$.
        \item\label{point 3 dd energy} If in addition \ref{item: P1} holds with $B\in 2\pi\Q$ and \ref{item: R1},\ref{item: R2} hold, then for all $\la<\frac{\min\set{|\G_{i-1}(0)|,|\G_i(0)|}}{a+b}$, there exists $\delta_{i-1}(\la),\delta_i(\la)>0$ such that $H_{\la,\omega}$ exhibits dynamical localization in $[\alpha_i-a\lambda,\alpha_i-a\la +\delta_{i-1}(\la)]\cup [\beta_i+b\la-\delta_i(\la),\beta_i+b\la ] $ and there exists $E^*\in(\alpha_i-a\la+\delta_{i-1}(\la), \beta_i+b\la-\delta_i(\la))$ such that $H_{\la,\omega}$ exhibits dynamical delocalization in $E^*$. 
    \end{enumerate}
\end{thm}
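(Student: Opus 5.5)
We argue by contradiction, in the spirit of \cite{Aizenman Graf 1998, Germinet Klein Schenker 2007}. Fix $\la$ below the gap threshold \eqref{gap condition}. Then $\G_{i-1}(\la)$ and $\G_i(\la)$ are nonempty spectral gaps of $H_{\la,\omega}$ almost surely.

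The first step is the behaviour of the Chern character in the gaps. For $E$ in a gap, the map $E\mapsto P_{E,\la,\omega}$ is locally constant. Using the Riesz formula together with Combes--Thomas estimates (from \ref{item: P2}), $P_{E,\la,\omega}$ has exponentially decaying kernel uniformly in $\omega$. We then show that $C(P_{E,\la,\omega})$ exists almost surely, is $\omega$-independent by ergodicity, and equals $2\pi\iu\,\E\Tr\big(\chi_0 P[[X_1,P],[X_2,P]]P\chi_0\big)$. Next we let $\la'$ vary in $[0,\la]$ with $E$ chosen in $\G_{i-1}(\la')$. The gap $\G_{i-1}(\la')$ shrinks continuously in $\la'$, so we can pick a continuous path $E(\la')$ inside it, for instance $E(\la')=\beta_{i-1}+b\la'+\epsilon$ for small $\epsilon$. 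Along this path the resolvent is norm continuous with uniform Combes--Thomas bounds. Hence the averaged Chern expression is continuous in $\la'$, and it is integer valued by the index theorem of \cite{Elgart Graf Schenker 2005}. It is therefore constant, so $C(P_{E_1,\la,\omega})=C(P_{E_1,\la=0})$ for gap energies. The same holds for $\G_i$. Consequently the Chern characters at $E_1'\in\G_{i-1}(\la)$ and $E_2'\in\G_i(\la)$ differ.

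For point~\ref{point 1 dd energy}, suppose DL holds at every $E\in\mathcal B_i(\la)$. By compactness DL then holds on a neighbourhood of $[E_1',E_2']$, since the gap energies are trivially localized by Combes--Thomas. Under the DL bound \eqref{DL condition} with the supremum over $E$, the kernel $\E|P_{E,\la,\omega}(\ga,\xi)|$ decays stretched-exponentially uniformly in $E$. This is enough to make the Chern character well defined and integer valued on the whole interval, again via \cite{Elgart Graf Schenker 2005}, where only summable decay of $\E|P(\ga,\xi)|$ is needed. The key point is that $E\mapsto C(P_{E,\la,\omega})$ is continuous there. We plan to write $P_{E}-P_{E'}=\chi_{(E',E]}(H)$ and bound $\E\,\Tr|\chi_0\chi_{(E',E]}(H)\,\cdot\,|$ by interpolating between the decay bound and the smallness of $\E\langle\delta_0,\chi_{(E',E]}(H)\delta_0\rangle=\mathcal N(E)-\mathcal N(E')$. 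Here $\mathcal N$ is the integrated density of states, which is continuous: in general by ergodicity with no point masses, and H\"older under \ref{item: R1} by Wegner. An integer-valued continuous function on an interval is constant, contradicting the jump, so some $E^*\in\mathcal B_i(\la)$ is delocalized. We expect this continuity estimate with only the DL hypothesis to be the main obstacle: trace products of three or more projection kernels must be controlled using only expectations of single kernels. We would try H\"older's inequality together with $|P(\ga,\xi)|\le1$ to reduce to products of single kernel expectations.

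Point~\ref{point 2 dd energy} follows from Proposition~\ref{rmk characterization}. Since DL fails at $E^*$, for every $\alpha\ge0$ and every $p>\frac{4\alpha}{\tau}+\frac{24}{\tau}+12$ we have $\liminf T^{-\alpha}\mathcal M_\la(p,g,T)=\infty$, so $\mathcal M_\la\ge C T^{\alpha}$ for large $T$. Given $p$, choose $\alpha$ just below $\frac{\tau p}{4}-3\tau-6$, that is $\alpha=\frac{\tau p}{4}-\kappa$. Boundedness below by positivity handles small $T$, and for $p$ small the claimed exponent is negative, which is trivial.

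For point~\ref{point 3 dd energy}, the rational flux $B$ allows Floquet reduction. Together with \ref{item: R1}, \ref{item: R2} and the Lifshitz-tail and band-edge localization results of Section~\ref{localization section} and the Appendix, this gives DL near both edges of $\mathcal B_i(\la)$. The delocalized energy from point~\ref{point 1 dd energy} must then lie strictly inside the band.
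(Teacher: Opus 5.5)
Your proposal is correct and follows essentially the paper's own route. The steps line up as follows: robustness of the Chern character in the open gaps via continuity in $\la$ (Corollary~\ref{robusteness}); a contradiction argument based on continuity in $E$ of the averaged Chern character under DL across the whole band; Proposition~\ref{rmk characterization} for the moment lower bound; and band-edge DL from Theorem~\ref{DL regimes}-\eqref{point band edge} for part (3). The continuity-in-energy step you flag as the main obstacle is what the paper imports as Theorem~\ref{continuity chern energy} (Aizenman--Graf), and your proposed fix is how that estimate is proved: H\"older on the triple product, $\|P(\ga,\xi)\|\le 1$ to pass from third to first moments, and positivity/Cauchy--Schwarz to bound $\E\,\Tr_{\C^n}|\Delta P(0,\ga)|$ by the IDS increment (cf.\ Remark~\ref{off diagonal}).
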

Theorem \ref{DD energy} is denoted “dynamical delocalization (DD) in energy parameter" throughout this article. Note that Part \eqref{point 1 dd energy} does not require any regularity assumption on the single-site probability distribution. This has been proved by \cite[Thm. 1]{Bellissard van Elst Schulz-Baldes 1994} for ergodic Landau Hamiltonians in a tight binding representation using non-commutative geometry techniques, and by \cite[Thm. 5]{Aizenman Graf 1998} for ergodic Landau Hamiltonians in $\ell^2(\Z^2)$ using functional analytic tools. This strategy was used in \cite{Germinet Klein Schenker 2007} to show delocalization for random Landau Hamiltonians in $L^2(\mathbb R^2)$ and in \cite{Becker Oltman Vogel 2025}, to show delocalization for a random Bistritzer-MacDonald Hamiltonian, a model for disordered twisted bilayer graphene. The case of the magnetic discrete Laplacian perturbed by an Anderson potential in a honeycomb structure was treated in \cite[Thm. 2]{Becker Han 2022}. Assuming the regularity of the single-site probability distribution, \cite{Germinet Klein 2004} gave a quantitative estimate of delocalization using time-averaged moments, which applies to our setting and yields Part \eqref{point 2 dd energy}. This was also exploited in \cite{Becker Oltman Vogel 2025}. Under an additional assumption on the tails of the probability distribution, we show band-edge localization using the Multiscale Analysis as in \cite{Figotin Klein 1994} to obtain Theorem \ref{DD energy}-\eqref{point band edge}. This allows us to give more information on the location of the mobility edge in each band, improving the statement \ref{DD energy}-\eqref{point 1 dd energy}. All these aspects considered, Theorem \ref{DD energy} has the advantage of identifying sufficient requirements on disordered Chern insulators to prove a metal-insulator transition in energy, under the condition of open spectral gaps.

When the disorder parameter $\la$ reaches a certain value (see \eqref{gap condition}), the spectral gaps close and the previous Theorem does not apply anymore. However, in this case it is still possible to show existence of dynamical delocalization in the spectrum, which is the content of our next and main result. More precisely, we prove that for each energy $E$ in the spectral gap of the unperturbed spectrum, whenever the associated Chern character is different from zero, there exists (at least) one value of the disorder parameter $\la^*$ such that $E\in\sigma(H_{\la^*,\omega})$ and $H_{\la^*,\omega}$ exhibits dynamical delocalization in $E$. This result relies on the well-known fact that, if $H_{\la,\omega}$ satisfies \ref{item: P2}, \ref{item: P4} and \ref{item: R1}, there exists $\lambda_{\rho}(H_0)>0$, such that $H_{\la,\omega}$ exhibits dynamical localization throughout its a.s. spectrum for all  $\la>\lambda_{\rho}(H_0)$, see e.g. Theorem \ref{DL regimes}-\eqref{strong disorder regime thm}. This is called the strong disorder regime, and we call $\la_\rho(H_0)$ the \emph{strong disorder threshold}. 

To introduce the next result, we consider an energy $E\in \G_i(0)$ with $i\in\set{1,\ldots,N-1}$, and we denote by $\la_0(E)$ the smallest value for which $E\in \sigma(H_0)+\la[-a,b]$, \ie $E$ is either a left or right internal band edge for $\sigma(H_{\la_0(E),\omega})$ almost surely. Since in hypothesis \ref{item: P4} we cannot have $a=b=0$, we see that $\la_0(E)$ can be expressed as
\begin{equation}\label{lambda zero}
    \la_0(E)=
    \begin{cases}
        \min\set{\frac{E-\beta_i}{b},\frac{\alpha_{i+1}-E}{a}} &\textup{if} \ a\neq0, b\neq 0 \\
        \frac{E-\beta_i}{b} & \textup{if}  \ a=0 \\
        \frac{\alpha_{i+1}-E}{a} &\textup{if} \ b=0
    \end{cases}.
\end{equation}

\begin{figure}[ht]
    \centering
    \includegraphics[width=0.9\textwidth]{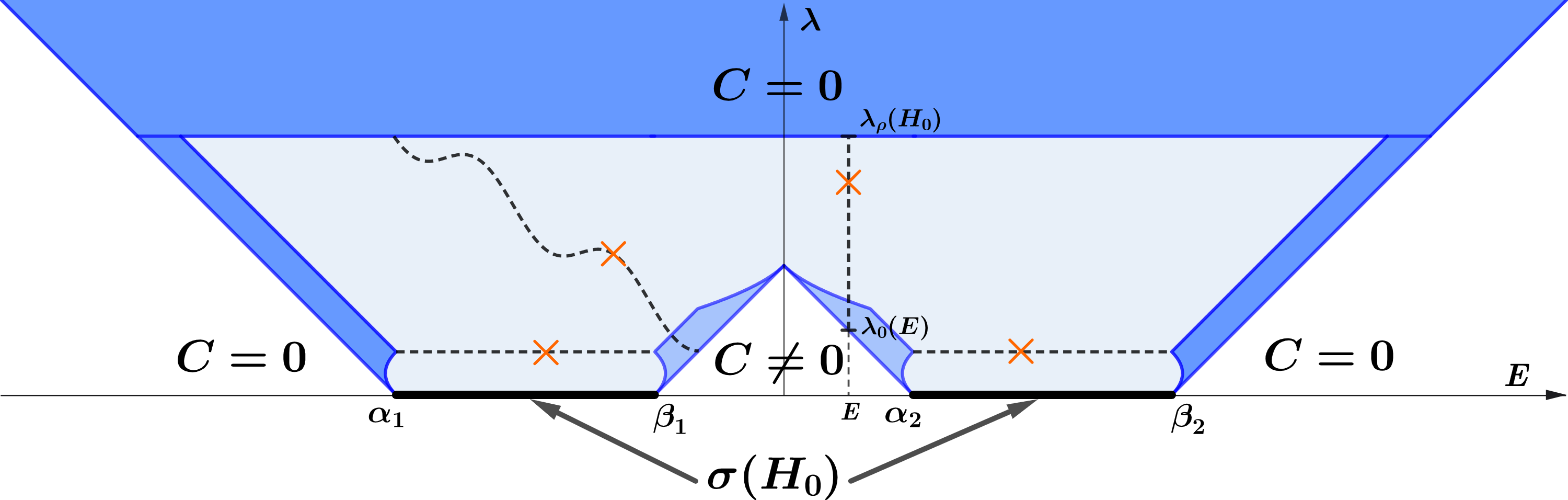}
    \caption{\small{Phase diagram in the case of $\sigma(H_0)$ composed of two bands with non trivial Chern number, with random variables supported in $[-1,1]$. The dark blue regions are the regimes of strong disorder and of external band edge localization, in which the Chern character is zero, while the light blue region is the regime of internal band edge localization, where the Chern character retains the same value of the unperturbed system. The inner region is where it is possible to prove existence of dynamical delocalization (orange crosses).}}
    \label{fig 3}
\end{figure}

\begin{thm}\label{DD disorder}
    Assume $H_{\la,\omega}=H_0+\la V_\omega$ satisfies hypothesis \ref{item: P1}-\ref{item: P4} and \ref{item: R1}. Let $E\in \G_i(0)$ with $i\in\set{1,\ldots,N-1}$ and assume $C(P_{E,\la=0})\neq 0$. Let $\la_0(E)$ be as in \eqref{lambda zero} and let $\lambda_\rho(H_0)$ be the strong disorder threshold.
    \begin{enumerate}
        \item \label{point 1 dd disorder} There exists $\la^*\in[\la_0(E),\la_\rho(H_0)]$ such that $H_{\la^*,\omega}$ exhibits dynamical delocalization in $E$.         Furthermore, for all $g\in C_{c,+}^\infty(\R)$ with $g\equiv 1$ on an open interval containing $E$, for all $p>0$ there exists $C_{p,g,\la^*}>0$ such that
        \begin{equation}\label{transport exponent}
            \mathcal{M}_{\la^*}(p,g,T)\geq C_{p,g,\la^*}T^{\frac{\tau p}{4}-\kappa}
        \end{equation}
        for all $\kappa> 3\tau+6$ and $T\geq 0$. 
        \item\label{point 2 dd disorder} If in addition we assume \ref{item: P1} holds with $B\in 2\pi\Q$ and \ref{item: R2} holds, then for each $E\in \G_i(0)$ such that $E\neq \frac{b\alpha_{i+1}+a\beta_i}{a+b}$, $H_{\la_0(E),\omega}$ exhibits dynamical localization at the band edge $E$, therefore (1) holds with  $\la^*\in (\la_0(E),\la_\rho(H_0)]$.
    \end{enumerate}
\end{thm}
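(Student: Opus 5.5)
The proof is by contradiction along the vertical segment $\{E\}\times[0,\la_\rho(H_0)]$ in the energy--disorder plane. I would use the map $\la\mapsto C(P_{E,\la,\omega})$, with the Chern character evaluated on the (almost surely deterministic) Fermi projection at fixed energy $E$. The tools are the results announced for Section~\ref{Chern section}:
\begin{itemize}
\item in any region of dynamical localization (including spectral gaps, where exponential decay of the Fermi projection kernel is automatic), $C(P_{E,\la,\omega})$ is almost surely well defined, deterministic and integer valued;
\item by Theorem~\ref{continuity chern disorder}, which rests on the joint continuity in $(E,\la)$ of the averaged projection $\E[P_{E,\la,\omega}(\ga,\xi)]$ under \ref{item: R1}, the Chern character is jointly continuous, hence locally constant, on open sets of the $(E,\la)$-plane where dynamical localization holds.
\end{itemize}

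For $\la\in[0,\la_0(E))$ the energy $E$ lies in a spectral gap of $H_{\la,\omega}$. By continuity in the gap region, $C(P_{E,\la,\omega})=C(P_{E,\la=0})\neq 0$ there. For $\la>\la_\rho(H_0)$, Theorem~\ref{DL regimes}-\eqref{strong disorder regime thm} gives dynamical localization throughout the spectrum. The Chern character there is $0$: for large $\la$, the fractional moment bound for the localization length makes the exponential decay rate $\mu_\la$ blow up, and a localization estimate such as $\E|P(\ga,\xi)|\le Ce^{-\mu|\ga-\xi|}$ with large $\mu$ forces $|C(P)|<1$, hence $C(P)=0$ since it is an integer. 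Constancy on the connected strong-disorder region then propagates the value $0$ down to $\la_\rho(H_0)$. This uses that the strong-disorder constants can be chosen locally uniformly in $\la$, so that the region is open.

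Now suppose, for contradiction, that $H_{\la,\omega}$ exhibits dynamical localization at $E$ for every $\la\in[\la_0(E),\la_\rho(H_0)]$. Then each such $\la$ has a neighbourhood in $\la$ (and by Definition~\ref{definition dynamical localization} in $E$) where DL holds. The delicate point is that the definition gives DL in an energy interval at fixed $\la$. I would need openness in $\la$ as well, which I would obtain from the Germinet--Klein characterization, Proposition~\ref{rmk characterization}, through finite-volume initial-scale estimates that are stable under small changes of $\la$. Granting this, compactness of $[\la_0(E),\la_\rho(H_0)]$ together with the gap region and the strong-disorder region covers the whole segment by open sets on which $C$ is locally constant. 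Connectedness of the segment then forces $C(P_{E,\la=0})=0$, a contradiction. So some $\la^*$ in the interval has DD at $E$.

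The transport lower bound \eqref{transport exponent} follows from Proposition~\ref{rmk characterization}. If $\mathcal{M}_{\la^*}(p,g,T)\le CT^{\tau p/4-\kappa}$ along a sequence $T\to\infty$ for some $p$, take $\alpha=\tau p/4-\kappa$. Then $p>4\alpha/\tau+24/\tau+12$ reduces to $\kappa>6+3\tau$, and the proposition yields DL at $E$, a contradiction. Adjusting constants for bounded $T$ gives the bound for all $T\ge0$.

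For part (2), under $B\in2\pi\Q$ and \ref{item: R2}, the band-edge Lifshitz-tail localization of Section~\ref{localization section} (Figotin--Klein multiscale analysis) gives DL in a neighbourhood of $(E,\la_0(E))$ when $E$ is a genuine single internal band edge. The excluded value $E=\frac{b\alpha_{i+1}+a\beta_i}{a+b}$ is exactly where the two broadened bands touch simultaneously. Hence $\la^*\neq\la_0(E)$.

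The main obstacle is the joint openness of the DL region in $(E,\la)$, together with continuity of $C$ across it; everything else is bookkeeping.
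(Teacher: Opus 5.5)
Your proof is correct in outline. It differs from the paper's in two places: how it obtains decay along the segment that is uniform in $\la$, and how it shows $C=0$ at strong disorder.

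\textbf{Uniform decay along the segment.} The paper derives the theorem from Theorem~\ref{curve}, applied to the vertical segment $\la\mapsto(E,\la)$ with $\la\in[\la_-,\la_+]$, $\la_-\in(0,\la_0(E))$ and $\la_+>\la_\rho(H_0)$. It never proves that the localization region is open in $\la$. Instead, it uses Remark~\ref{off diagonal} and Proposition~\ref{ids continuity disorder}-(2) to show that $t\mapsto\E[\Tr_{\C^n}|P_{E(t),\la(t)}(0,\ga)|]$ is continuous. It then bounds the supremum over $t$ by the value at a maximizer $t'$ and applies the localization constants $C_{t'},\mu_{t'},\zeta_{t'}$ there. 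Since $t'$ depends on $\ga$, this step alone does not give a bound that is summable in $\ga$ uniformly in $t$.

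Your route supplies exactly the uniform bound that Theorem~\ref{continuity chern disorder} requires. Dynamical localization at $(E,\la)$ gives the Germinet--Klein initial-scale estimate. That estimate is a finite-volume condition, stable under small changes of $\la$ by the resolvent identity and a Wegner estimate uniform for $\la$ in compact subsets of $(0,\infty)$. Hence localization holds with locally uniform constants, and a finite subcover finishes the argument. The cost is invoking the bootstrap multiscale analysis, which you correctly identify as the real work.

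A small correction: the continuity input to Theorem~\ref{continuity chern disorder} is the bound \eqref{ids disorder} on $\E[\Tr_{\C^n}|(P_{E_1,\la_1}-P_{E_2,\la_2})(0,\ga)|]$. Continuity of $\E[P_{E,\la,\omega}(\ga,\xi)]$ alone would not suffice.

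\textbf{Vanishing at strong disorder.} The paper's Proposition~\ref{chern zero strong disorder} is simpler than your asymptotic argument. At each fixed $\la>\la_\rho(H_0)$, the Aizenman--Graf bound is uniform in energy. Theorem~\ref{continuity chern energy} then lets you slide $E$ below the spectrum, where $P=0$ and so $C=0$. This needs no control of the prefactor as $\la\to\infty$, which your version requires (it does hold, by the scaling $H_{\la,\omega}=\la(V_\omega+\la^{-1}H_0)$). It also needs no continuity in $\la$ on $(\la_\rho(H_0),\infty)$.

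The transport lower bound and part (2) follow the paper's argument.
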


Theorem \ref{DD disorder} is denoted “dynamical delocalization (DD) in disorder parameter" through this article. This shows the existence of a metal-insulator transition  for disordered Chern insulators in the disorder parameter.  \\
The proof of the result is postponed to Section \ref{proof section} and is based on two results: Theorem \ref{continuity chern disorder}, which states that the Chern character is well defined, quantized and remains constant  for $(E,\lambda)$ in a connected region of dynamical localization, as long as the spectral projection is H\"{o}lder continuous in energy and disorder; and Proposition \ref{chern zero strong disorder}, which states that the Chern character is constantly equal to zero for all energies in the spectrum for disorder parameter larger than the strong disorder threshold. In a region where the Chern character associated to a spectral gap of the unperturbed operator is non-zero, the latter implies a jump in the value of the Chern character in the disorder parameter. This is the mechanism for delocalization.

If the disorder parameter is such that the spectral gaps remain open, the existence of delocalized energies can be deduced either by Theorem \ref{DD energy} or by Theorem \ref{DD disorder}. However, when the gaps close, only the latter applies. To illustrate the advantage of Theorem \ref{DD disorder}, in Section \ref{section far haldane} we will consider a disordered Haldane model with a suitable probability distribution, and we prove, by using Theorem \ref{DL regimes}-\eqref{point 2 DL}, the existence of localized energies in the bulk of the spectrum of the gapless Hamiltonian (gapless spectrum). Hence Theorem \ref{DD disorder} yields the existence of delocalized energies well inside the gapless spectrum, where the assumptions of Theorem \ref{DD energy} are not satisfied.

We can extend the previous delocalization results by considering any continuous curve in the energy-disorder plane connecting regions with different Chern characters, without necessarily restricting to horizontal or vertical lines.

\begin{thm}\label{curve}
    Assume $H_{\la,\omega}=H_0+\la V_\omega$ satisfies hypothesis \ref{item: P1}-\ref{item: P4} and \ref{item: R1}. Assume that there exists $i\in \set{1,\ldots,N-1}$ such that $C(P_{E,\la=0})\neq 0$ for $E\in \G_i$. Given $T>0$, let $C(\cdot)=(E(\cdot),\la(\cdot))\in C([0,T];\R\times (0,\infty))$ be a continuous curve with $\min_{t\in [0,T]}|\la(t)|>0$. If $E(0)\in \G_i(\la(0))$ and, either $\la(T)>\la_\rho(H_0)$, or $E(T)\in \G_j(\la(T))$ with $i\neq j$ and $C(P_{E(0),\la=0})\neq C(P_{E(T),\la=0})$, then there exists $t^*\in (0,T)$ such that $H_{\la(t^*),\omega}$ exhibits DD in $E(t^*).$
\end{thm}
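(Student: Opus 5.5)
We argue by contradiction, using a connectedness argument along the curve. Suppose that for every $t\in(0,T)$ the operator $H_{\la(t),\omega}$ exhibits dynamical localization in $E(t)$. At the endpoints we also have localization, for two reasons. At $t=0$, $E(0)\in\G_i(\la(0))$ lies in an almost-sure spectral gap, and away from the spectrum the Combes--Thomas estimate gives exponential decay of $P_{E,\la,\omega}(\ga,\xi)$ uniformly for $(E,\la)$ in a neighborhood. At $t=T$, either $\la(T)>\la_\rho(H_0)$ (the strong disorder regime of Theorem \ref{DL regimes}), or $E(T)\in\G_j(\la(T))$ is again in a gap. Hence every point of the compact set $K=C([0,T])$ has a neighborhood in the $(E,\la)$-plane on which DL holds.

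The key step is to upgrade this pointwise localization to a statement about an open set. For each $t$ we want an open rectangle $U_t\ni(E(t),\la(t))$ on which the bound \eqref{DL condition} holds with the supremum over energies and with constants locally uniform in $\la$. Then $U=\bigcup_t U_t$ is an open set containing the connected set $K$, so the connected component of $U$ containing $K$ is a connected region of DL.

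On this region we invoke Theorem \ref{continuity chern disorder}. It gives three things: the Chern character $C(P_{E,\la,\omega})$ is well defined and almost surely a deterministic quantity (by ergodicity); it is integer valued; and it is jointly continuous in $(E,\la)$. Continuity is where \ref{item: R1} enters, through the Hölder continuity of the averaged projection in energy and disorder. A continuous integer-valued function on a connected set is constant, so $t\mapsto C(P_{E(t),\la(t),\omega})$ is constant on $[0,T]$.

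It remains to evaluate the endpoints. At $t=0$, the segment $\{(E(0),s):0\le s\le\la(0)\}$ stays inside the gap $\G_i(s)$ for every $s$, since the gaps $\G_i(s)$ shrink monotonically in $s$. The same argument applied along this segment therefore gives $C(P_{E(0),\la(0),\omega})=C(P_{E(0),\la=0})$. At $t=T$ there are two cases. In the first, Proposition \ref{chern zero strong disorder} gives Chern character $0$, which contradicts $C(P_{E(0),\la=0})\neq0$. In the second, the analogous vertical segment gives $C(P_{E(T),\la=0})$, which differs from the value at $t=0$ by hypothesis, again a contradiction. Hence some $t^*\in[0,T]$ exhibits DD, and since the endpoints are localized, $t^*\in(0,T)$.

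The main obstacle is the key step, namely passing from DL \emph{in energy} at fixed $\la$ (Definition \ref{definition dynamical localization}) to an open neighborhood in the $\la$ direction as well. The hypothesis is pointwise in $t$ and gives no control for nearby disorder values. I would first try to run the argument with the weaker, pointwise hypothesis. Concretely, I would check that the continuity and quantization in Theorem \ref{continuity chern disorder} require localization only at the points of the curve itself, so that joint Hölder continuity of $\E[P_{E,\la,\omega}]$ together with quantization at each point of $K$ gives local constancy along $K$ directly. If this fails, the fallback is an open-set property of localization: the finite-volume criterion behind Proposition \ref{rmk characterization} (Germinet--Klein) is stable under small perturbations of $\la$, which yields the neighborhoods $U_t$.
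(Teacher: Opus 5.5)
Your architecture matches the paper's. You argue by contradiction and apply Theorem~\ref{continuity chern disorder} on $\mathcal{R}=C([0,T])$, with \eqref{ids disorder} supplied by Remark~\ref{off diagonal} and Proposition~\ref{ids continuity disorder}-(2); this is where $\min_t\la(t)>0$ and \ref{item: R1} enter. The endpoint values then come from Corollary~\ref{robusteness} and Proposition~\ref{chern zero strong disorder}. (A minor point: on the vertical segment down to $\la=0$ you cannot rerun the \emph{same} argument, because the constant in Proposition~\ref{ids continuity disorder}-(2) blows up as $\la\to0$. There you need part (1), which is exactly Corollary~\ref{robusteness}.)

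The genuine gap is the step you yourself flag as the main obstacle and leave open, and it is the only non-routine point. Your first option cannot work. In the proof of Theorem~\ref{continuity chern disorder}, the difference $|\E[C(P_{E_1,\la_1,\omega})]-\E[C(P_{E_2,\la_2,\omega})]|$ is bounded by the H\"older modulus times products of $\sum_{\ga}|\ga|\,\E[\Tr_{\C^n}|P(0,\ga)|^3]^{1/3}$ evaluated at \emph{both} points. In the second and third telescoping terms, the weight $|\ga\wedge\xi|$ must be carried by factors of $P_{E_2,\la_2,\omega}$. So continuity at $t_0$ requires these sums to stay bounded for $t$ near $t_0$. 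Negating DD only gives, for each $t$, constants $C_t,\mu_t,\zeta_t$ at the single disorder value $\la(t)$, and these may degenerate as $t\to t_0$. Quantization at each point is no substitute, since an integer-valued function can jump.

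What is needed is your fallback, and it must be proved, not asserted. DL at $(E(t),\la(t))$ has to yield \eqref{DL condition} with \emph{common} constants for all $(E,\la')$ in a rectangle around that point. This does follow from the Germinet--Klein finite-volume criterion, in three steps:
\begin{enumerate}
\item At a fixed scale, the suitability event of Definition~\ref{definition good box} depends on finitely many random variables, and continuously on $(E,\la)$ through the finite-volume resolvent. Using the strict inequality (equivalently, a slightly larger $\theta$ at the base point) and Fatou, its probability is lower semicontinuous in $(E,\la)$.
\item The MSA inputs are uniform for $\la$ in compact subsets of $(0,\infty)$; for instance, the Wegner constant is $\propto\la^{-\tau}$.
\item So the bootstrap MSA gives uniform constants on a neighborhood of each point, and compactness of $[0,T]$ yields $\sup_t\sum_\ga|\ga|\E[\Tr_{\C^n}|P_{E(t),\la(t),\omega}(0,\ga)|^3]^{1/3}<\infty$.
\end{enumerate}

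Do not try to replace this with the paper's shortcut. The paper shows that $t\mapsto\E[\Tr_{\C^n}|P_{E(t),\la(t),\omega}(0,\ga)|]$ is continuous for each fixed $\ga$, and bounds its supremum by the value at a maximizer $t'$. But $t'=t'(\ga)$, so $C_{t'}\eu^{-\mu_{t'}|\ga|^{\zeta_{t'}}}$ is not a summable majorant uniform in $t$. As an abstract counterexample, take $f_t(\ga)=t\,\psi(t|\ga|)$, with $\psi$ a bump supported in $[1/2,2]$ and $\psi(1)=1$. It is Lipschitz in $t$ uniformly in $\ga$ and finitely supported in $\ga$ for every $t$, yet $\sup_t f_t(\ga)\ge|\ga|^{-1}$. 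You are right that an openness property of the localization region is required, but neither your proposal nor the paper's written argument actually establishes it.
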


The proof of Theorem~\ref{curve} is provided in Section~\ref{proof section}.

\goodbreak

\subsection{A paradigmatic example: the Haldane model}

In this section, we describe the Haldane model in first quantization, following the extended review in \cite{Marcelli Monaco Moscolari Panati 2018}. The Haldane model is a 2D tight-binding model which describes non interacting electrons on a honeycomb structure $\Ci\subset\R^2$ (see Fig. \ref{fig 4}) which is characterized by the displacement vectors
\begin{equation*}
    d_1=\bigg(\frac{1}{2},-\frac{\sqrt{3}}{2}\bigg),\quad d_2=\bigg(\frac{1}{2},\frac{\sqrt{3}}{2}\bigg), \quad d_3=(-1,0).
\end{equation*}
In order to describe the periodicity of the system, one introduces the periodicity vectors
\begin{equation*}
    a_1=d_2-d_3=\bigg(\frac{3}{2},\frac{\sqrt{3}}{2}\bigg), \  a_2=d_3-d_1=\bigg(-\frac{3}{2},\frac{\sqrt{3}}{2}\bigg), \ a_3=d_1-d_2=(0,-\sqrt{3}).
\end{equation*}
Given the Bravais lattice $\Ga=\Span_\Z\set{a_1,a_2,a_3}\cong\Z^2$, the honeycomb structure can be written as $\Ci=\Ga_A\cup \Ga_B, \quad \Ga_A=\Ga, \ \Ga_B=\Ga+\nu$, for $\nu\in\set{d_1,d_2,d_3}$. Notice that it is sufficient to choose only two $a_i$-vectors and one $d_i$-vector in order to reconstruct the whole crystalline structure. Each of these possible choices, which are called dimerizations of $\Ci$, results in an identification $\Ci\cong \Ga\times \set{0,\nu}$, where $\nu$ is a displacement vector, providing an isomorphism $\ell^2(\Ci)\cong \ell^2(\Ga;\C^2)$.
\begin{figure}[ht]
    \centering
    \includegraphics[width=0.3\textwidth]{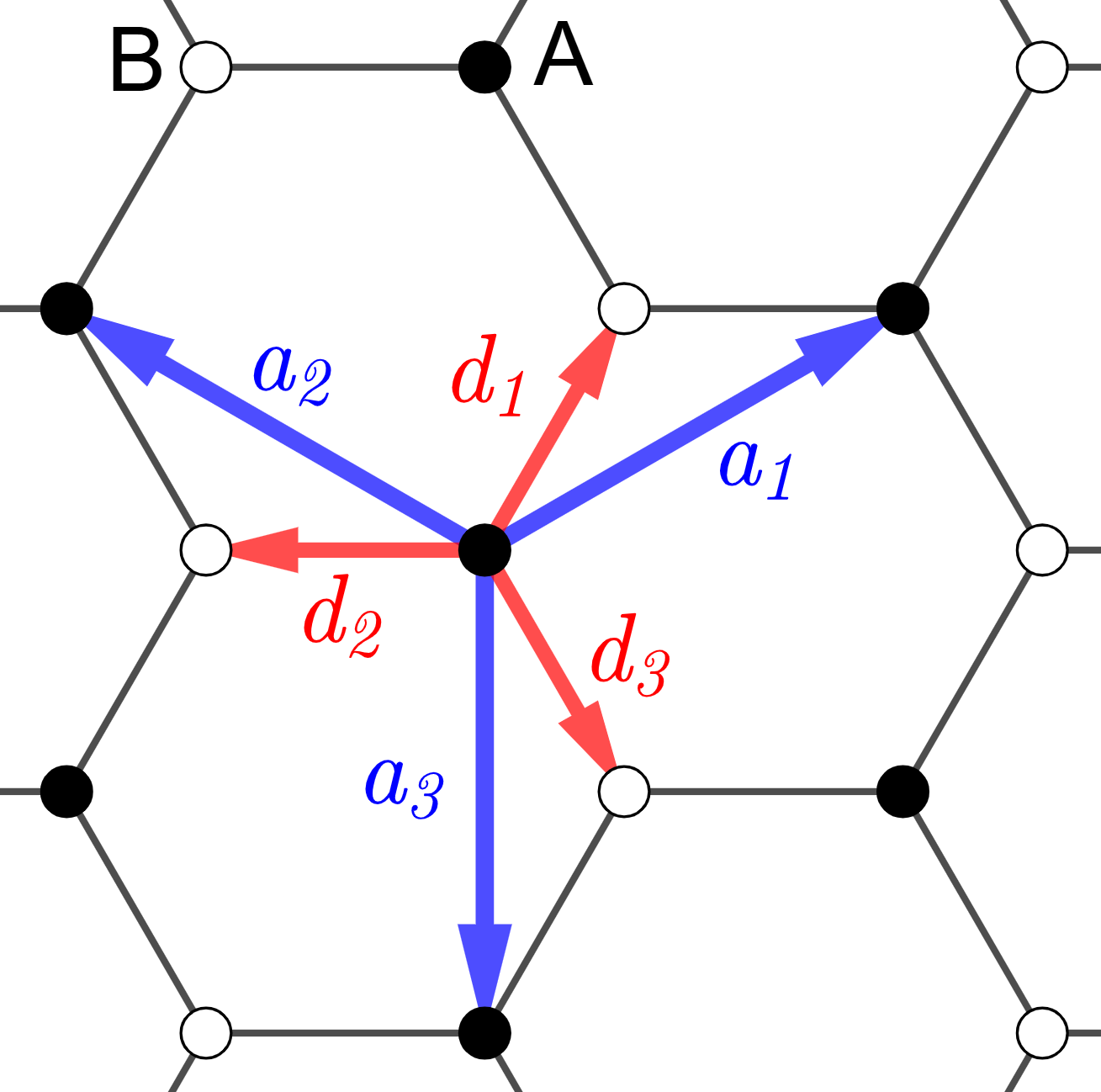}
    \caption{\small{Periodicity vectors (blue) and displacement vectors (red) for the honeycomb structure $\Ci$.}}
    \label{fig 4}
\end{figure}

The dynamics of electrons in the Haldane model is described by a Hamiltonian that acts on $\ell^2(\Ci)$ and depends on several parameters $(t_1,t_2,\phi,M)$, where $t_1>0,t_2>0$ are hopping energies, $\phi\in (-\pi,\pi]$ is a local magnetic flux and $M\in\R$ is an on-site energy which distinguishes among $\Ga_A$ and $\Ga_B$. 
    
In order to introduce the Hamiltonian, we define the translation operator by $u\in\R^2$ as
\begin{equation}
        (\tilde{T}_u\psi)(x)=
        \begin{cases}
            \psi(x-u) &\textup{if} \ x-u\in\Ci \\
            0 &\textup{otherwise}
        \end{cases} 
        \quad \textup{for} \ \psi\in\ell^2(\Ci).
    \end{equation}
We denote by $\chi_A$, $\chi_B$ the characteristic functions over the sublattices $\Ga_A$, $\Ga_B$, respectively. Then we define the Haldane operator acting on $\ell^2(\Ci)$ as $H_{\textup{Hal}}=H_{t_1,t_2,\phi,M}=H_{NN}+H_{NNN}+V$, where
\begin{gather*}
H_{NN}=t_1\sum_{i=1}^3(\tilde{T}_{d_i}+\tilde{T}_{d_i}^*),
  \quad \quad V=M(\chi_A-\chi_B),\\
H_{NNN}=t_2(\cos\phi)\sum_{i=1}^3(\tilde{T}_{a_i}+\tilde{T}_{a_i}^*)+ t_2(\iu\sin\phi)(\chi_{A}-\chi_B)\sum_{i=1}^3(\tilde{T}_{a_i}-\tilde{T}_{a_i}^*). 
\end{gather*}
Given the dimerization isomorphism $\Ci\cong \Ga\times \set{0,\nu}$, with $\nu\in \set{d_1,d_2,d_3}$, consider the associated unitary operator $U:\ell^2(\Ci)\xrightarrow[]{\cong}\ell^2(\Ga;\C^2)$. Define the operator $\Hi_{\textup{Hal}}=UH_{\textup{Hal}}U^*\in\mathcal{B}(\ell^2(\Ga;\C^2))$ and consider, for $\ga,\xi\in\Ga$, the hopping matrix
\begin{equation*}
    \Hi_{\textup{Hal}}(\ga,\xi)=
    \begin{bmatrix}
        \langle \delta_{\ga,1},\Hi_{\textup{Hal}}\delta_{\xi,1}\rangle & \langle\delta_{\ga,1},\Hi_{\textup{Hal}}\delta_{\xi,2}\rangle \\ \langle\delta_{\ga,2},\Hi_{\textup{Hal}}\delta_{\xi,1}\rangle &\langle \delta_{\ga,2},\Hi_{\textup{Hal}}\delta_{\xi,2}\rangle
    \end{bmatrix}=
    \begin{bmatrix}
        \langle \tilde{\delta}_\ga,H_{\textup{Hal}}\tilde{\delta}_\xi\rangle & \langle \tilde{\delta}_\ga,H_{\textup{Hal}}\tilde{\delta}_{\xi+\nu}\rangle \\ \langle \tilde{\delta}_{\ga+\nu},H_{\textup{Hal}}\tilde{\delta}_\xi\rangle & \langle \tilde{\delta}_{\ga+\nu},H_{\textup{Hal}}\tilde{\delta}_{\xi+\nu}\rangle
    \end{bmatrix}
\end{equation*}
where $\set{\tilde{\delta}_x}_{x\in\Ci}$ is the canonical basis in $\ell^2(\Ci)$. Since $H_{\textup{Hal}}$ commutes with $\tilde{T}_\ga$ for all $\ga\in \Ga$, we have that $\Hi_{\textup{Hal}}$ satisfies \ref{item: P1} with $B=0$. Moreover, since $H_{NN}$ and $H_{NNN}$ are hopping operators between nearest and next-to-nearest neighbor, we get that $\Hi_{\textup{Hal}}(\ga,\xi)=0$ if $|\ga-\xi|_\infty>1$, so \ref{item: P2} holds with $r=1$. 

As far as the spectrum is concerned, the Hamiltonian has absolutely continuous spectrum, as a consequence of periodicity, composed of two intervals which touch each other whenever
\begin{equation}\label{parameter haldane}
    M=\pm 3\sqrt 3\,t_2\sin\phi.
\end{equation}
Therefore, for parameters for which \eqref{parameter haldane} is not satisfied, we have that $H_{\textup{Hal}}$ satisfies \ref{item: P3} with $N=2$ bands. Each open region in the parameter space described by condition \eqref{parameter haldane} is labeled by an integer which corresponds to the Chern number of the Bloch bundle (\cite[Thm. 1]{Panati 2007}) corresponding to the spectral projection $P_F$ associated to the energy band below the non-empty gap (see Fig. \ref{fig 5}). To compute the Chern number, we use the so called Bloch-Floquet-Zak transform
\begin{equation*}
    \U_Z:\ell^2(\Ci)\to L^2(\mathbb{T}^2_*;\C^2)
\end{equation*}
where $\mathbb{T}^2_*=\R^2/\Ga^*$ and  $\Ga^*$ is the dual lattice of $\Ga$ which is defined as
\begin{equation*}
    \Ga^*=\Span_\Z\set{b_1,b_2}, \quad \textup{with} \  a_i\cdot b_j=2\pi\delta_{i,j}, \ i,j\in\set{1,2}.
\end{equation*} In the Bloch-Floquet-Zak representation, we can write the eigenprojection $P_F$ as
\begin{equation*}
    (\U_Z P_F \U_Z^{-1}\ph)(k)=P_F(k)\ph(k)\in\C^2, \quad \textup{for} \  k\in\mathbb{T}^2_*
\end{equation*}
for any $\ph\in L^2(\mathbb{T}_*^2;\C^2).$ Through the operator $P_F(k)$, we have that the (first) Chern number associated to $P_F$ can be expressed as
\begin{equation*}
    c_1(P_F)=\frac{1}{2\pi\iu}\int_{\mathbb{T}^2_*} \Tr_{\C^2}\Big(P_F(k) \,[\partial_{k_1}P_F(k),\partial_{k_2}P_F(k)]\Big) \di k_1\wedge \di k_2\in\Z.
\end{equation*}
As depicted in Fig. \ref{fig 5}, the Chern number associated to $P_F$, for the Haldane model, takes value +1 in the orange region, -1 in the blue region and 0 in the white one.

\begin{figure}[ht]
    \centering
    \includegraphics[width=0.55\textwidth]{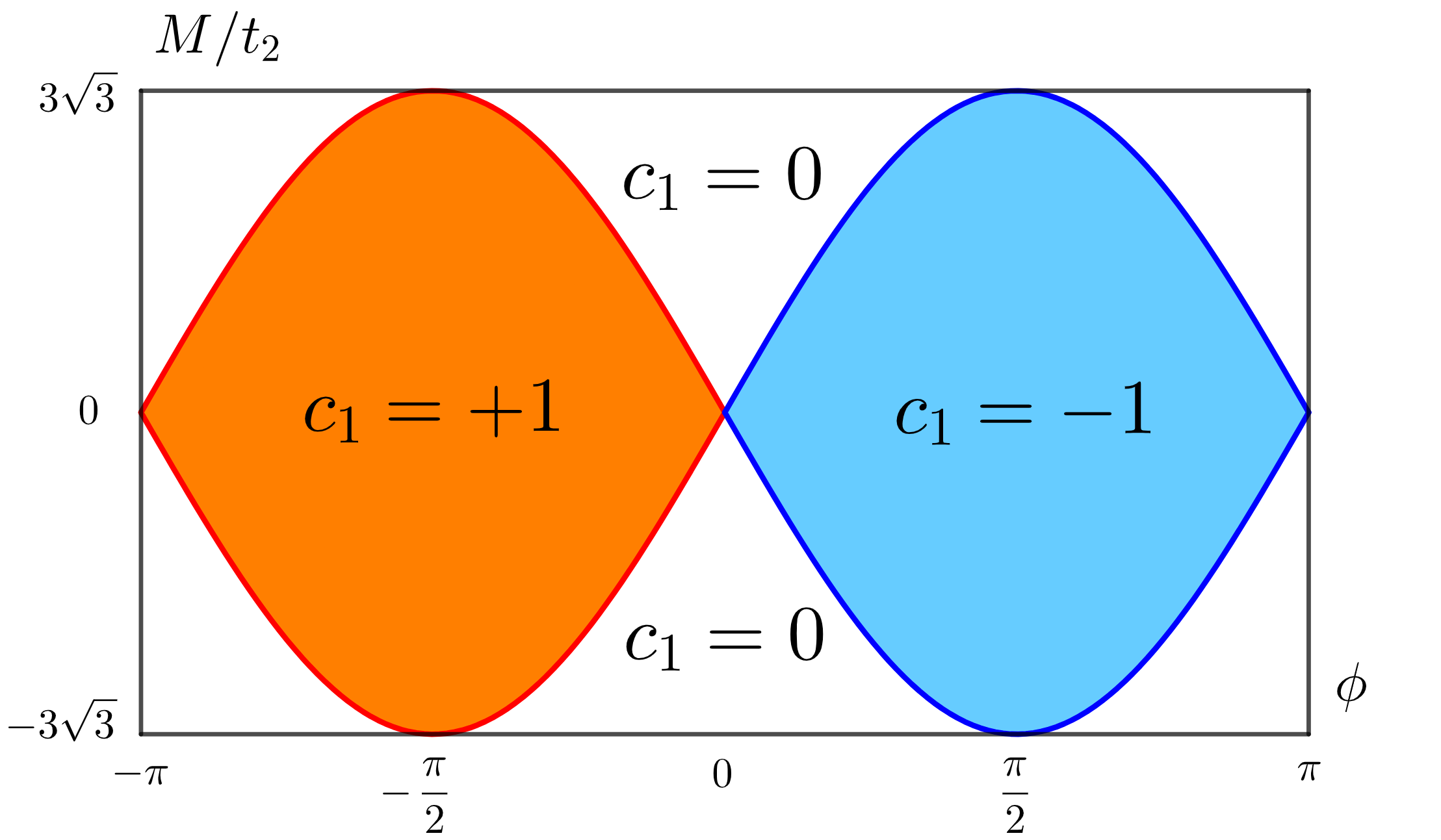}
    \caption{\small{The bold continuous line corresponds to $M=\pm 3\sqrt 3\,t_2\sin\phi$ and represents the set of parameters $(\phi,M/t_2)$ for which the Hamiltonian is gapless. The line divides the phase diagram into regions where the (gapped) Hamiltonian exhibits constant Chern numbers given by $0,-1,+1$.}}
    \label{fig 5}
\end{figure}

Finally, we introduce the Haldane model with an Anderson-type potential.

\begin{dfn}[Haldane-Anderson model]\label{defn:haldane-anderson}
    Consider a dimerization of the honeycomb structure $\Ci\cong \Ga\times \set{0,\nu}$, with $\nu\in\set{d_1,d_2,d_3}$, which yields a unitary isomorphism $U:\ell^2(\Ci)\xrightarrow[]{\cong} \ell^2(\Ga;\C^2)$.  We define the Haldane-Anderson Hamiltonian, acting on $\ell^2(\Ga;\C^2)$, as
    \begin{equation*}
        \Hi_{\textup{Hal},\la,\omega}=\mathcal{H}_{\textup{Hal}}+\la V_\omega,\quad \la>0
    \end{equation*}
    where $\Hi_{\textup{Hal}}=U H_{\textup{Hal}} U^*\in\mathcal{B}(\ell^2(\Ga;\C^2))$ and $V_\omega$ is an Anderson-type potential acting on $\ell^2(\Ga;\C^2)$ satisfying \ref{item: P4}.
\end{dfn}

Theorems \ref{DD energy} and \ref{DD disorder}  apply to the Haldane-Anderson model, as detailed in the following corollary. For the sake of a simpler statement, we specialize to the case $\phi=\pm \frac{\pi}{2}$, so that the spectrum is symmetric w.r.t.\ zero.  Anyhow, our general results apply to the Haldane-Anderson model for any parameters $(\phi, M/t_2)$ corresponding to a \emph{non-trivial} topological phase. 

\begin{crl}\label{corollary haldane}
    Consider the Haldane Hamiltonian $\Hi_{\textup{Hal}}$  with parameters $(\phi,M/t_2)$ satisfying $-3\sqrt 3t_2<M<3\sqrt{3}t_2$ and $\phi=\pm \frac{\pi}{2}$.
    Then, besides satisfying \ref{item: P1} and \ref{item: P2}, $\Hi_{\textup{Hal}}$ satisfies \ref{item: P3}, with $\sigma(\Hi_{\textup{Hal}})=\mathcal{B}_1\sqcup\mathcal{B}_2$, with
    \begin{equation*}
        \mathcal{B}_1=\bigg [-\norm{\Hi_{\textup{Hal}}},-\frac{|\G_1|}{2}\bigg], \quad \mathcal{B}_2=\bigg [\frac{|\G_1|}{2},\norm{\Hi_{\textup{Hal}}}\bigg ],
    \end{equation*}
    where $|\G_1|>0$ is the size of the internal spectral gap and $C(P_{E,\la=0})=\mp1$ for $E\in\G_1=\left(-\frac{|\G_1|}{2},\frac{|\G_1|}{2}\right)$. Let $\Hi_{\textup{Hal},\la,\omega}$ be the Haldane-Anderson Hamiltonian satisfying assumptions \ref{item: R1} and \ref{item: R2}. Then:
    \begin{enumerate}
        \item(DL at band edges) There exists $\delta(\la)>0$ such that $H_{\textup{Hal},\la,\omega}$ exhibits dynamical localization in $\Big[-\norm{\Hi_{\textup{Hal}}}-a\lambda,-\norm{\Hi_{\textup{Hal}}}-a\la +\delta(\la)\Big]\cup\Big[\norm{\Hi_{\textup{Hal}}}+b\lambda-\delta(\la),\norm{\Hi_{\textup{Hal}}}+b\la\Big]$. For all $\la<\frac{|\G_1|}{a+b}$, there exists $\delta_1(\la)>0$ such that $H_{\textup{Hal},\la,\omega}$ exhibits dynamical localization in $\Big[-\frac{|\G_1|}{2}+b\la-\delta_1(\la),-\frac{|\G_1|}{2}+b\la\Big]\cup\Big[\frac{|\G_1|}{2}-a\la,\frac{|\G_1|}{2}-a\la+\delta_1(\la)\Big]$.
        \item (DL at strong disorder) There exists $\la_\rho(\Hi_{\textup{Hal}})>0$ such that, for all $\la>\la_\rho(\Hi_{\textup{Hal}})$, we have that $\Hi_{\textup{Hal},\la,\omega}$ exhibits dynamical localization throughout its spectrum.
        \item (DD in energy) For all $\la<\frac{|\G_1|}{a+b}$, there exist $\delta(\la),\delta_1(\la)>0$ and (at least) two energies $E_1^*\in \Big(-\norm{\Hi_{\textup{Hal}}}-a\la+\delta(\la), -\frac{|\G_1|}{2}+b\la-\delta_1(\la)\Big)$, $E_2^*\in \Big(\frac{|\G_1|}{2}-a\la+\delta_1(\la),\norm{\Hi_{\textup{Hal}}}+b\la-\delta(\la)\Big)$ such that $\Hi_{\textup{Hal},\la,\omega}$ exhibits dynamical delocalization in $E_1^*$ and $E_2^*$. 
        \item (DD in disorder) For all $E\in \G_1$ there exists (at least) one disorder value $\la^*\in \displaystyle[\la_0(E),\la_\rho(\Hi_{\textup{Hal}})]$, where $\la_0(E)$ is defined in Theorem \ref{DD disorder}, such that $\Hi_{\textup{Hal},\la^*,\omega}$ exhibits dynamical delocalization in $E$. Moreover $\la^*>\la_0(E)$ if $E\neq \frac{|\G_1|(b-a)}{2(a+b)}$.
    \end{enumerate}
\end{crl}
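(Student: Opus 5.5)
The plan is to verify the hypotheses of Theorems \ref{DD energy} and \ref{DD disorder}, together with the localization regimes of Theorem \ref{DL regimes}, for $\Hi_{\textup{Hal}}$, and then read off the four items.

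\textbf{Spectral and topological input.} Properties \ref{item: P1} (with $B=0$) and \ref{item: P2} (with $r=1$) were already checked above. For \ref{item: P3} I would compute the Bloch symbol at $\phi=\pm\frac{\pi}{2}$. The $\cos\phi$ term vanishes, so
\begin{equation*}
H(k)=t_1\,(\mathrm{Re}\,f(k)\,\sigma_1 - \mathrm{Im}\,f(k)\,\sigma_2)+\big(M\mp 2t_2\textstyle\sum_i\sin(k\cdot a_i)\big)\sigma_3 ,
\end{equation*}
where $f(k)=\sum_i \eu^{\iu k\cdot d_i}$. The symbol is traceless, so its eigenvalues are $\pm|h(k)|$ and the spectrum is symmetric about zero. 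It consists of two bands $\pm[\min|h|,\max|h|]$, with $\max|h|=\norm{\Hi_{\textup{Hal}}}$ and $|\G_1|=2\min|h|$.

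The gap is open precisely when \eqref{parameter haldane} fails, i.e.\ when $|M|\neq 3\sqrt3 t_2$. The minimum of $|h|$ can only vanish where $f=0$, which happens at the Dirac points. There the mass term equals $M\mp 3\sqrt3 t_2$ and $M\pm3\sqrt3 t_2$ respectively, and both are nonzero for $|M|<3\sqrt 3t_2$.

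For the Chern character, I would use the periodic equality $C(P_F)=c_1(P_F)$ (e.g.\ \cite{Marcelli Moscolari Panati 2023}, up to the sign convention). Then I would compute $c_1=\mp1$ by the standard count: the two Dirac-point masses have opposite signs, so the degree of $h/|h|$ is $\pm1$, as recorded in Fig.~\ref{fig 5}. Finally, for $E\in\G_0$ and $E\in\G_2$ the projection is $0$ or $\Id$, so $C=0$.

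\textbf{Items (1)--(4).} With these inputs the items follow directly.

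\emph{Item (1).} External band-edge DL comes from Theorem \ref{DL regimes}, which applies for every $\la$ since the external gaps never close. Internal band-edge DL for $\la<|\G_1|/(a+b)$ comes from Theorem \ref{DD energy}-\eqref{point 3 dd energy} applied to both bands. Both use $B=0\in2\pi\Q$, \ref{item: R1} and \ref{item: R2}.

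\emph{Item (2).} This is the strong disorder statement of Theorem \ref{DL regimes}-\eqref{strong disorder regime thm}; it needs only \ref{item: P2}, \ref{item: P4} and \ref{item: R1}.

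\emph{Item (3).} Apply Theorem \ref{DD energy}-\eqref{point 3 dd energy} with $i=1$ (the Chern character jumps from $0$ to $\mp1$) and with $i=2$ (it jumps from $\mp1$ to $0$). This produces $E_1^*$ and $E_2^*$ in the stated intervals.

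\emph{Item (4).} Apply Theorem \ref{DD disorder} with $i=1$, using $C(P_{E,\la=0})=\mp1\neq0$. In part \eqref{point 2 dd disorder}, substituting $\alpha_2=|\G_1|/2$ and $\beta_1=-|\G_1|/2$ into the exceptional point $\frac{b\alpha_2+a\beta_1}{a+b}$ gives exactly $\frac{|\G_1|(b-a)}{2(a+b)}$.

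\textbf{Main obstacle.} The only nonroutine step is justifying $C(P_{E,\la=0})=\mp1$. This requires identifying the trace-per-unit-area Chern character with the Bloch-bundle Chern number, and then computing the degree of the symbol map.

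For the identification, I would cite the known equivalence for periodic gapped projections. For the degree, I would avoid an explicit integral: deform $M$ to $0$ within the gapped region, so that $c_1$ is unchanged by continuity, and evaluate the winding by the local contributions at the two Dirac points. Each contributes $\frac12\,\mathrm{sgn}(\text{mass})\cdot\text{chirality}$, and the two contributions sum to $\mp1$.
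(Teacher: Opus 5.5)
Your proposal is correct and matches the paper's implicit argument. There, the corollary follows by checking \ref{item: P1}--\ref{item: P3} with $N=2$, $B=0\in 2\pi\Q$, and Chern values $0,\mp1,0$ on $\G_0,\G_1,\G_2$, and then reading off items (1)--(4) from Theorem \ref{DL regimes}, Theorem \ref{DD energy}-\eqref{point 3 dd energy} with $i=1,2$, and Theorem \ref{DD disorder} with $i=1$ (including the exceptional energy $\frac{|\G_1|(b-a)}{2(a+b)}$). The only difference is that you sketch the Bloch-symbol computation of the two bands and the degree count giving $C(P_{E,\la=0})=\mp1$, which the paper takes from the literature and Fig.~\ref{fig 5}.
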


Last but not least, we spend a few words on the so-called \emph{unit cell consistency}, that is the requisite of independence of the results from the choice of a unit cell \cite{Yang He Zheng 2020} and from the choice of a position operator. As recalled above, the Haldane model is originally formulated on the honeycomb structure $\Ci$, invariant under the action of a Bravais lattice $\Ga\cong \Z^2$. The choice of an origin and of a linear basis $\set{a_1,a_2}$ for $\Ga$ yields a unit cell, and hence a decomposition $\Ci\cong \Z^2\times \set{0,\nu}$, where $\nu$ is a displacement vector. Moreover, one can define the position operator w.r.t.\ the canonical basis $\set{e_1,e_2}\subset\R^2$, or w.r.t.\ the lattice basis $\set{a_1,a_2}\subset\R^2$. The fact that the whole theory is independent of these choices is, in our opinion, a physically relevant aspect of the problem. However, to keep the paper to a reasonable length, we decided to postpone the discussion of the consistency issues to a forthcoming short note \cite{PR next}. 

\goodbreak

\section{Continuity \& quantization of Chern character}\label{Chern section}

\subsection{Well-posedness and quantization of Chern character}

In this section we recall some preliminary results about the Chern character, concerning its well-posedness and its quantization property, that is, the fact that it takes only integer values. In the following we will use the shortened notation 
\begin{equation*}
    \mathfrak{C}_P=P[[X_1,P],[X_2,P]]P,
\end{equation*}
so that the Chern character reads
\begin{equation}\label{Chern marker}
    C(P)=\lim_{\substack{L\to\infty}}\frac{1}{|\La_L|}\Tr_{\ell^2(\Ga;\C^n)}(\chi_{\La_L}\mathfrak{C}_P\chi_{\La_L}).
\end{equation}
Since the multiplication operators appearing \eqref{Chern marker} are unbounded, we need to require some decay properties from the integral kernel of the projection, at least in expectation when dealing with random operators, in order for $\chi_{\La_L}\mathfrak{C}_P\chi_{\La_L}$ to be trace class for all $L\in\N^*$. Under this assumption, we have that the Chern character associated to $P_{E,\la,\omega}$ is almost surely constant, by a standard application of the Birkhoff's ergodic theorem. 

We summarize these results in the following Proposition, which shows that the Chern character is well defined and deterministic. For the reader's convenience, we give a proof in Appendix \ref{appendix deterministic chern}.  

\begin{prop}\label{deterministic chern}
    Let $H_{\la,\omega}$ be the Anderson model given in \eqref{Anderson model}, satisfying hypothesis \ref{item: P1} and \ref {item: P4}. Let $E\in\R$ and assume that
    \begin{equation}\label{sobolev condition}
        \sum_{\ga\in\Ga} |\ga|\, \E \Big[\Tr_{\C^n}|P_{E,\la,\omega}(0,\ga)|^2\Big]^{\frac{1}{2}}<\infty.
    \end{equation}
    Then there exists $\Omega_0\subset \Omega$ with $\mathbb{P}(\Omega_0)=1$ such that:
    \begin{enumerate}
        \item \label{point 1 chern} $\chi_{\La_L}\mathfrak{C}_{P_{E,\la,\omega}}\chi_{\La_L}$ is trace class for all $\omega\in \Omega_0$ and $L\in\N^*;$
        \item\label{chern point 2} we have that
        \begin{align*}
            &C(P_{E,\la,\omega})=\E[C(P_{E,\la,\omega})]=2\pi\iu\,\E\Big[\Tr_{\C^n}\mathfrak{C}_{P_{E,\la,\omega}}(0,0)\Big]\\
            &=2\pi\iu\sum_{\ga,\xi\in\Ga}\E\Big[\Tr_{\C^n}\big(P_{E,\la,\omega}(0,\ga)P_{E,\la,\omega}(\ga,\xi)P_{E,\la,\omega}(\xi,0)\big)\Big](\ga\wedge \xi) \quad \textup{for all} \ \omega\in\Omega_0.
        \end{align*}
    \end{enumerate}
\end{prop}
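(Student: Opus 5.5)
We write $P=P_{E,\la,\omega}$ and expand the commutators through the kernel. Since $X_j$ acts by multiplication, $[X_j,P](\ga,\xi)=(\ga_j-\xi_j)P(\ga,\xi)$. A direct computation then gives
\begin{equation*}
\mathfrak{C}_P(\ga,\ga)=\sum_{\xi,\eta\in\Ga} P(\ga,\xi)P(\xi,\eta)P(\eta,\ga)\,(\xi-\ga)\wedge(\eta-\ga),
\end{equation*}
up to the sign convention fixed by $\wedge$. We may use $P[\cdot,\cdot]P$ together with $P^2=P$ to reduce $P[[X_1,P],[X_2,P]]P$ to $P[X_1,P][X_2,P]P-P[X_2,P][X_1,P]P$. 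Each term has the form $P\,(\text{commutator})\,(\text{commutator})\,P$, so we factor
\begin{equation*}
\chi_{\La_L}\mathfrak{C}_P\chi_{\La_L}=\big(\chi_{\La_L}P[X_1,P]\big)\big([X_2,P]P\chi_{\La_L}\big)-(1\leftrightarrow2).
\end{equation*}
Trace class then follows from Hölder's inequality once each factor is Hilbert–Schmidt.

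\textbf{Step 1 (trace class).} We have
\begin{equation*}
\|[X_j,P]\chi_{\{\ga\}}\|_2^2=\sum_\xi|\xi_j-\ga_j|^2\Tr_{\C^n}|P(\xi,\ga)|^2 .
\end{equation*}
By covariance, $P_{E,\la,\omega}(\xi,\ga)$ has the same distribution as $P_{E,\la,\tau_\ga\omega}(\xi-\ga,0)$ up to magnetic phases, which do not affect the moduli. Hence
\begin{equation*}
\E\|[X_j,P]\chi_{\{\ga\}}\|_2\le\Big(\sum_\xi|\xi|^2\E\Tr|P(0,\xi)|^2\Big)^{1/2}\le\sum_\xi|\xi|\,\E[\Tr|P(0,\xi)|^2]^{1/2}<\infty,
\end{equation*}
where the last inequality uses $\ell^1\subset\ell^2$ and \eqref{sobolev condition}. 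The expectation being finite, the quantity is finite almost surely for every $\ga$, and countably many $\ga$ give a full-measure set $\Omega_0$. On $\Omega_0$ the operator $[X_j,P]\chi_{\La_L}$ is Hilbert–Schmidt, and $P$ is bounded. Since $[X_j,P]^*=-[X_j,P]$, the operator $\chi_{\La_L}P[X_1,P]=-(\,[X_1,P]P\chi_{\La_L})^*$ is also Hilbert–Schmidt. This proves point 1.

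\textbf{Step 2 (ergodic average).} Write $\Tr(\chi_{\La_L}\mathfrak{C}_P\chi_{\La_L})=\sum_{\ga\in\La_L}f(\tau_\ga\omega)$ with $f(\omega)=\Tr_{\C^n}\mathfrak{C}_{P_{E,\la,\omega}}(0,0)$. This identity relies on magnetic covariance $T^B_\ga H_{\la,\omega}(T^B_\ga)^*=H_{\la,\tau_\ga\omega}$. One must check that $\mathfrak{C}_P(\ga,\ga)$ is gauge invariant: the phases $\eu^{-\iu B\,\ga\wedge\xi}$ cancel on the diagonal because $T^B_\ga$ is unitary and $X_j$ transforms as $(T^B_\ga)^*X_jT^B_\ga=X_j+\ga_j$. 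The shift $\ga_j$ drops out of the commutators.

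We also need $f\in L^1(\Omega)$. Writing $f=\sum\Tr\big((P[X_1,P])(0,\cdot)([X_2,P]P)(\cdot,0)\big)-\ldots$ and applying Cauchy–Schwarz gives
\begin{equation*}
|f|\le 2\|[X_1,P]\delta_0\|_2\,\|[X_2,P]\delta_0\|_2 ,
\end{equation*}
with $\delta_0$ ranging over the $n$ basis vectors at $0$. Its expectation requires a second moment, $\E\|[X_j,P]\delta_0\|_2^2=\sum|\xi|^2\E\Tr|P(0,\xi)|^2$. This is finite, bounded by the square of the sum in \eqref{sobolev condition}. Birkhoff's ergodic theorem for the $\Z^2$-action, which is ergodic because the variables are i.i.d., applied along the boxes $\La_L$ then gives almost sure convergence to $\E f$. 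We shrink $\Omega_0$ accordingly.

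\textbf{Step 3 (kernel formula).} Insert the kernel expansion of Step 1 at $\ga=0$, which gives the double sum with weight $\xi\wedge\eta$, up to the paper's sign convention. Exchanging expectation and summation is justified by absolute convergence, using the bound $|\xi||\eta|\le$ the Hilbert–Schmidt estimate above together with $\|P(\xi,\eta)\|\le1$.

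The main obstacle is Step 2: getting the gauge invariance of the diagonal under magnetic translations right, and controlling integrability of $f$ using only the summability hypothesis \eqref{sobolev condition}.
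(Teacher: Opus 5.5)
Your argument is correct in substance, but it is organized differently from the paper's. The paper works entirely with kernels. It expands $\mathfrak{C}_{P_\omega}(\ga,\ga)$ into the eight terms $(P X_j^m P^{m'} X_k^{m''} P)(\ga,\ga)$ and bounds each in expectation using $\Tr|ABC|\le n(\Tr|A|^2)^{1/2}(\Tr|C|^2)^{1/2}$ for $\|B\|\le 1$, plus Cauchy--Schwarz in $\omega$. This gives $n\big(\sum_\xi(1+|\xi|)\E[\Tr|P(\ga,\xi)|^2]^{1/2}\big)^2<\infty$. Almost sure finiteness of the diagonal entries then makes the finite matrix $\chi_{\La_L}\mathfrak{C}_P\chi_{\La_L}$ trivially trace class. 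The same bound gives integrability for Birkhoff, and the cancellation down to $PX_1PX_2P-PX_2PX_1P$ gives the double-sum formula. So the paper uses the full $\ell^1$-type hypothesis \eqref{sobolev condition} at every stage, with uncentred weights (hence its factors $1+|\ga|$).

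Your factorization $\chi_{\La_L}\mathfrak{C}_P\chi_{\La_L}=-A_1^*A_2+A_2^*A_1$, with $A_j=[X_j,P]P\chi_{\La_L}$, does three things differently. It gives an honest operator meaning to the product of unbounded commutators. It uses centred weights. And it needs only the second-moment bound $\sum_\xi|\xi|^2\E\Tr|P(0,\xi)|^2<\infty$ for point 1 and for $C(P_\omega)=2\pi\iu\,\E f$. So your route shows that point 1 and the first line of point 2 already hold under the weaker Sobolev condition recalled after the proposition. The paper's route is more elementary and gets the kernel formula from the same single estimate.

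Two places need an extra line.

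First, boundedness of $P$ does not by itself make $[X_2,P]P\chi_{\La_L}$ Hilbert--Schmidt, because $P$ sits between the commutator and $\chi_{\La_L}$. You need the identity $[X_j,P]P=(1-P)[X_j,P]$, which follows from $P^2=P$ and holds on $\delta_{\ga,i}$ once $P\delta_{\ga,i}\in\D(X_j)$. Then $A_j=(1-P)[X_j,P]\chi_{\La_L}$ and $\|A_j\|_2\le\|[X_j,P]\chi_{\La_L}\|_2$. The same identity is what justifies your bound $|f|\le 2\sum_i\|[X_1,P]\delta_{0,i}\|\,\|[X_2,P]\delta_{0,i}\|$.

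Second, in Step 3 the Hilbert--Schmidt ($\ell^2$) estimate does not give absolute convergence of the double sum. The bound that does is
\[
\sum_{\xi,\eta\in\Ga}|\xi||\eta|\,\E\big[\|P(0,\xi)\|_2\,\|P(\eta,0)\|_2\big]\le\Big(\sum_{\xi\in\Ga}|\xi|\,\E\big[\Tr_{\C^n}|P(0,\xi)|^2\big]^{1/2}\Big)^2 .
\]
It follows from $|\Tr(ABC)|\le\|A\|_2\|B\|\|C\|_2$, $\|P(\xi,\eta)\|\le 1$, $|\xi\wedge\eta|\le|\xi||\eta|$ and Cauchy--Schwarz in $\omega$. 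This is exactly where \eqref{sobolev condition} itself is needed, and it coincides with the paper's estimate.

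Finally, your sign hedge is warranted. With $\ga\wedge\xi=\ga_2\xi_1-\ga_1\xi_2$, the direct computation gives the weight $\xi_1\eta_2-\xi_2\eta_1=-\,\xi\wedge\eta$. This is immaterial where the formula is used later, since only $|\ga\wedge\xi|$ appears there.
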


Condition \eqref{sobolev condition} is closely related to the so-called \emph{Sobolev condition}, which appears in \cite{Bellissard van Elst Schulz-Baldes 1994} in the following form
\begin{equation*}
    \sum_{j=1}^2\E\Big[\Tr_{\C^n}\big|\big[X_j,P_{E,\la,\omega}]\big|^2(0,0)\Big]=\sum_{j=1}^2\sum_{\ga\in\Ga}|\ga_j|^2 \, \E\Big[\Tr_{\C^n}|P_{E,\la,\omega}(0,\ga)|^2\Big]<\infty.
\end{equation*}
Under the latter condition, the authors were able to show that the Chern character is well-defined and quantized, using a $C^*$-algebraic approach.
    
Another way of proving quantization, based on functional analytic methods, was shown in \cite{Aizenman Graf 1998} under the stronger decay assumption \eqref{sobolev 3 condition} below. The idea of the proof goes through the concept of index of pair of orthogonal projections given in \cite{Bellissard 1986, Nakamura Bellissard 1990, Avron Seiler Simon 1994}, whose definition we recall next.

\begin{dfn}
    Let $P,Q$ be orthogonal projections on a Hilbert space $\Hi$, whose difference $P-Q$ is a compact operator. The index of pair of projections $P,Q$ is defined by
    \begin{equation*}
        \textup{Index}(P,Q)=\dim\ker(P-Q-1)-\dim\ker(P-Q+1)\in\Z.
    \end{equation*}
\end{dfn}

Given $p=p_1a_1+p_2a_2\in\R^2\setminus\Ga$, consider the unitary operator $U_p\in \mathcal{B}(\ell^2(\Ga;\C^n))$ defined by
\begin{equation}\label{unitary}
    (U_p\psi)(\ga)=\eu^{-\iu\theta_p(\ga)}\psi(\ga) \quad \textup{for all} \ \psi\in\ell^2(\Ga;\C^n)
\end{equation}
where $\theta_p(\ga)=\textup{Arg}((\ga_1,\ga_2)-(p_1,p_2))$. Assuming fast enough decay in expectation of the spectral projection $P_{E,\la,\omega}$, the Chern character can be represented by the index of the pair of projections $P_{E,\la,\omega}$ and $U_p P_{E,\la,\omega}U_p^*$, as shown in \cite{ Aizenman Graf 1998} in the case of $\ell^2(\mathbb Z^2)$. This result readily generalizes to $\ell^2(\Ga;\C^n)$ with minimal changes, taking the following form:

\begin{prop}[\protect\cite{Aizenman Graf 1998}]
    Assume that $H_{\la,\omega}$ satisfies hypothesis \ref{item: P1} and \ref {item: P4}. Let $E\in\R$ and assume that
    \begin{equation}\label{sobolev 3 condition}
        \sum_{\ga\in\Ga} |\ga| \E \Big[\Tr_{\C^n}|P_{E,\la,\omega}(0,\ga)|^3\Big]^{\frac{1}{3}}<\infty.
    \end{equation}
    Given $p\in \R^2\setminus\Ga$ consider the unitary operator defined in \eqref{unitary}.
    Then there exists $\Omega_0\subset \Omega$ with $\mathbb{P}(\Omega_0)=1$ such that:
    \begin{enumerate}
        \item $\textup{Index}(U_pP_{E,\omega}U_p^*,P_{E,\omega})$ is well defined for all $\omega\in\Omega_0$ and 
        \begin{equation*}
            \textup{Index}(U_pP_{E,\la,\omega}U_p^*,P_{E,\la,\omega})=\E\Big[\textup{Index}(U_pP_{E,\la,\omega}U_p^*,P_{E,\la,\omega})\Big] \quad \textup{for all} \ \omega\in\Omega_0;
        \end{equation*}
        \item we have that
        \begin{equation*}
            C(P_{E,\la,\omega})=\E[C(P_{E,\la,\omega})]=\E\Big[\textup{Index}(U_pP_{E,\la,\omega}U_p^*,P_{E,\la,\omega})\Big]\in\Z \quad \textup{for all} \ \omega\in\Omega_0.
        \end{equation*}
    \end{enumerate}
\end{prop}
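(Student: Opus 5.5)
The plan follows the Aizenman--Graf strategy, adapted to $\ell^2(\Ga;\C^n)$ with magnetic translations. Write $P=P_{E,\la,\omega}$ and $U=U_p$. The key algebraic fact is that $P-UPU^*$ is, almost surely, in the Schatten class $\mathcal{S}_3$. For such pairs of projections, Avron--Seiler--Simon gives
\begin{equation*}
\textup{Index}(UPU^*,P)=\Tr\big((P-UPU^*)^{2k+1}\big)\quad\text{for every } k\geq1 ,
\end{equation*}
and this is an integer. The proof then has three steps: (i) show $P-UPU^*\in\mathcal{S}_3$ almost surely; (ii) show that the index is deterministic; (iii) show that its expectation equals the formula for $C(P)$ in Proposition \ref{deterministic chern}.

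For (i), I would note that the kernel is
\begin{equation*}
(P-UPU^*)(\ga,\xi)=\big(1-\eu^{-\iu(\theta_p(\ga)-\theta_p(\xi))}\big)P(\ga,\xi),
\end{equation*}
and that $|1-\eu^{-\iu(\theta_p(\ga)-\theta_p(\xi))}|\leq C|\ga-\xi|/(1+|\ga-p|)$. To bound the $\mathcal{S}_3$ norm I would use the kernel criterion $\norm{K}_3^3\leq\sum_{\ga}\big(\sum_\xi\norm{K(\ga,\xi)}^{3/2}\big)^2$, or, as Aizenman--Graf do, the triple-product expansion $\Tr|K|^3\le\sum_{\ga,\xi,\eta}\norm{K(\ga,\xi)}\norm{K(\xi,\eta)}\norm{K(\eta,\ga)}$. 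Taking expectations and applying Hölder's inequality with exponent $3$ to $\E[\norm{P(\ga,\xi)}\norm{P(\xi,\eta)}\norm{P(\eta,\ga)}]$ bounds each factor by $\E[\Tr_{\C^n}|P(0,\cdot)|^3]^{1/3}$, after translating. Translating requires the covariance $P_{\tau_\ga\omega}=T^B_\ga P_\omega (T^B_\ga)^*$; the magnetic phases have modulus one, so they do not affect norms. The sum over the three points then reduces to a two-dimensional sum of the form
\begin{equation*}
\sum_{\ga}\frac{1}{(1+|\ga-p|)^3}\Big(\sum_{\xi}|\xi|\,\E\big[\Tr|P(0,\xi)|^3\big]^{1/3}\Big)^3 ,
\end{equation*}
which is finite by \eqref{sobolev 3 condition}. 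Hence $\E\norm{P-UPU^*}_3^3<\infty$, and the index is well defined on a full-measure set $\Omega_0$.

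For (ii), the index is locally constant along norm-continuous paths of pairs with compact difference. Moving the puncture $p$ to $p+\ga$ changes $U_p$ by $U_{p+\ga}U_p^*$, a multiplication by a phase that is $\mathcal{S}_3$-close in the required sense. So the index is independent of $p\in\R^2\setminus\Ga$ within connected components, and jumps across lattice points are excluded by the same continuity argument. Combined with covariance, this makes the index a translation-invariant measurable function of $\omega$, so by ergodicity it is almost surely equal to its expectation, which gives item 1.

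For (iii), I would compute $\Tr(P-UPU^*)^3$ in kernel form:
\begin{equation*}
\Tr(P-UPU^*)^3=\sum_{a,b,c}\Tr_{\C^n}\big(P(a,b)P(b,c)P(c,a)\big)\,F(a,b,c),
\end{equation*}
where $F$ is the cubic expression in the phase factors. Taking expectations and using covariance to shift $a$ to $0$, the sum over the position of the triangle relative to $p$ is carried out first; absolute summability, which follows from step (i), justifies exchanging the order of summation. Here one invokes Connes' identity, in the form used by Aizenman--Graf: for fixed separations,
\begin{equation*}
\sum_{q\in\Ga}F(q,q+\ga,q+\xi)=-2\pi\iu\,(\ga\wedge\xi),
\end{equation*}
with the sign matched to the definition of $\wedge$. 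This yields exactly $2\pi\iu\sum\E[\Tr P(0,\ga)P(\ga,\xi)P(\xi,0)](\ga\wedge\xi)$, which equals $C(P)$ by Proposition \ref{deterministic chern}. Proposition \ref{deterministic chern} applies because \eqref{sobolev 3 condition} implies \eqref{sobolev condition} for bounded kernels: $\norm{P}\le1$ implies $\Tr|P|^2\le n^{1/3}(\Tr|P|^3)^{2/3}$, and then Jensen's inequality finishes the comparison. Integrality follows from item 1.

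The main obstacle is step (iii): rigorously justifying the interchange of summations and establishing Connes' area identity for the lattice $\Ga$ with a generic puncture $p$. I would prove the identity first for triangles in general position, by a winding-angle argument in which the sum over translations of the product of phase differences gives the signed area times $2\pi$, and then control the convergence using the $\mathcal{S}_3$ bound from step (i).
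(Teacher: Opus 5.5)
The paper gives no proof of its own and defers to Aizenman--Graf. Your outline has the right architecture: an $\mathcal{S}_3$ bound, the Avron--Seiler--Simon trace formula, covariance plus ergodicity, and an area identity. But two steps fail as written, and the second is the heart of the argument.

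\emph{Step (iii): the fixed-puncture lattice identity is false.} You intend to prove $\sum_{q\in\Ga}F(q,q+\ga,q+\xi)=-2\pi\iu(\ga\wedge\xi)$ with the puncture $p$ held fixed. This identity is false, so no winding-angle argument can give it.

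Because the three phase factors have modulus one and multiply to $1$, one has $F=2\iu\sum_{\textup{cyc}}\sin(\theta_p(x)-\theta_p(y))$. Your sum is therefore the periodization $g(p)=\sum_{q\in\Ga}h(p-q)$ of $h(p')=F_{p'}(0,\ga,\xi)$, the phase factor computed with puncture $p'$. The function $h$ is bounded but has direction-dependent limits at $p'\in\{0,\ga,\xi\}$.

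As $p\to 0$ with $\textup{Arg}(-p)=\phi$ fixed, exactly three terms of $g$ are singular. Their $\phi$-dependent parts add up to $4\iu\,\textup{Im}\big[\eu^{\iu\phi}\big(\eu^{-\iu\vartheta_\ga}-\eu^{-\iu\vartheta_\xi}+\eu^{-\iu\vartheta_{\xi-\ga}}\big)\big]$, where $\vartheta_w=\textup{Arg}(w)$. For $\ga=a_1$, $\xi=a_2$ this equals $4\iu(\sqrt2-1)\sin(\phi+\pi/4)$. So $g$ is not constant in $p$; only the full weighted sum over $(\ga,\xi)$ is.

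The missing idea is to \emph{average over the puncture}:
\begin{enumerate}
\item Prove $p$-independence of the index. $U_p-U_{p'}$ is multiplication by a function vanishing at infinity, hence compact, so $PU_pP$ and $PU_{p'}P$ have the same Fredholm index on $\textup{Ran}\,P$.
\item Integrate $\E\Tr(P-U_pPU_p^*)^3$ over $p$ in a unit cell.
\item Justify Fubini with the $p$-uniform bound $|1-\eu^{-\iu(\theta_p(x)-\theta_p(y))}|\le 4|x-y|/(1+|x-p|)$ for $x\neq y$ in $\Ga$, together with H\"older in $q$ and in $\omega$.
\item Note that $\int_{\textup{cell}}\di p\sum_{q\in\Ga}$ becomes $\int_{\R^2}$. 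Connes' continuum formula then produces the oriented area, i.e.\ the factor $\ga\wedge\xi$.
\end{enumerate}
Also, $\textup{Index}(UPU^*,P)=\Tr(UPU^*-P)^3$, which is the opposite sign to what you wrote.

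\emph{Step (i): the triple-product bound is false.} The inequality $\Tr|K|^3\le\sum_{\ga,\xi,\eta}\norm{K(\ga,\xi)}\norm{K(\xi,\eta)}\norm{K(\eta,\ga)}$ fails because odd powers of a self-adjoint kernel can cancel. Take $K$ on two sites with zero diagonal and off-diagonal entries $1$: the right side vanishes while $\Tr|K|^3=2$. So the computation you actually carry out does not place $P-U_pPU_p^*$ in $\mathcal{S}_3$.

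That computation is still useful. Finiteness of $\E\sum|A(\ga,\xi)||A(\xi,\eta)||A(\eta,\ga)|$ for $A=P-U_pPU_p^*$ is exactly what legitimizes expanding $\Tr A^3$ as an absolutely convergent triple sum and reordering it in step (iii).

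For the Schatten bound itself, your first option (Russo's inequality, up to an $n$-dependent constant) does work. After H\"older in $\omega$ it requires $\sum_u|u|^{3/2}\E[\Tr_{\C^n}|P(0,u)|^3]^{1/2}<\infty$. Its terms are $3/2$-powers of the summable terms in the hypothesis, so this holds.

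The route that reproduces the hypothesis exactly is to split $A=\sum_u A_u$ into diagonals, where $A_u(x,y)=A(x,y)$ if $y=x+u$ and $0$ otherwise. Each $A_u$ is a weighted shift with $\norm{A_u}_3^3=\sum_x\Tr_{\C^n}|A(x,x+u)|^3$. By covariance and $\sum_x(1+|x-p|)^{-3}<\infty$, this gives $\E\norm{A}_3\le\sum_u(\E\norm{A_u}_3^3)^{1/3}\le 4\big(\sum_x(1+|x-p|)^{-3}\big)^{1/3}\sum_u|u|\,\E[\Tr_{\C^n}|P(0,u)|^3]^{1/3}$.
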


Notice that, if $E$ is in a gap of the spectrum of $H_{\la,\omega}$ for all $\omega\in\Omega$, then by the Combes-Thomas estimate, the Fermi projection $P_{E,\la,\omega}$ decays exponentially uniformly in $\omega\in\Omega$. Thus, in this case, the quantization result follows directly from \cite[Prop. 3 \& Rmk. 3]{Elgart Graf Schenker 2005} in the discrete setting (see also \cite[Prop. 2.13]{Marcelli Moscolari Panati 2023} in the continuum setting) \emph{without any ergodicity assumption}.

\subsection{Continuity of the Chern character in energy \& \mbox{disorder}}

In this section we show that the Chern character is continuous w.r.t. the energy parameter $E$ and the disorder parameter $\la$. The continuity in energy was first proved in \cite[Thm. 1 (iv)]{Bellissard van Elst Schulz-Baldes 1994} using a $C^*$-algebraic approach, and in \cite[Thm. 5]{Aizenman Graf 1998} using a more analytic approach. An important ingredient in the proof of \cite{Aizenman Graf 1998} is given by the continuity in energy of the integrated density of states (IDS), which is defined under assumptions \ref{item: P1},\ref{item: P4} as
\begin{equation*}
    N(E)=\E\Big[\Tr_{\C^n}P_{E,\la,\omega}(0,0)\Big].
\end{equation*}
It was shown in \cite{Deylon Souillard 1984}, under the additional assumption \ref{item: P2}, that the IDS has no atoms, \ie
\begin{equation*}
    \lim_{E'\to E^+}[N(E')-N(E)]=\E\Big[\Tr_{\C^n}P_{\set{E},\la,\omega}(0,0)\Big]=0
\end{equation*}
where $P_{\set{E},\la,\omega}=\chi_{\set{E}}(H_{\la,\omega}).$ Using this result the authors of \cite{Aizenman Graf 1998} prove that the Chern character is continuous in energy, assuming fast enough decay of the spectral projection, without any assumption on the regularity on the probability distribution. Their approach generalizes to our setting and can be stated as follows:

\begin{thm}[\protect{\cite[Thm. 5]{Aizenman Graf 1998}}]\label{chern enegy continuity}\label{continuity chern energy}
    Assume that $H_{\la,\omega}$ satisfies hypothesis \ref{item: P1},\ref{item: P2} and \ref {item: P4}. Let $I\subset\R$ be an interval and assume that
    \begin{equation*}
        \sup_{E\in I}\sum_{\ga\in\Ga}|\ga|\E\Big[\Tr_{\C^n}|P_{E,\la,\omega}(0,\ga)|^3\Big]^{\frac{1}{3}}<\infty.
    \end{equation*}
    Then the map
    \begin{equation*}
        I\ni E\mapsto \E[C(P_{E,\la,\omega})]\in \Z
    \end{equation*}
    is continuous. In particular the Chern character is constant for all energies in $I$.
\end{thm}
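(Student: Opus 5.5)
The strategy is to use the formula of Proposition \ref{deterministic chern}, point \eqref{chern point 2}, to write $\E[C(P_{E,\la,\omega})]$ as an absolutely convergent series. We then show that each term is continuous in $E$ and that the series converges uniformly on $I$. Continuity of an integer-valued function on an interval then gives constancy.

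First note that the $L^3$-summability hypothesis implies the $L^2$-summability condition \eqref{sobolev condition}. Indeed, the entries of $P$ are bounded by $1$, so $\Tr_{\C^n}|P(0,\ga)|^2\le n\,\Tr_{\C^n}|P(0,\ga)|^3$ up to constants, and $\E[\cdot]^{1/2}\le C\,\E[\cdot]^{1/3}$ for quantities bounded by a constant. Hence, for every $E\in I$,
\begin{equation*}
\E[C(P_{E,\la,\omega})]=2\pi\iu\sum_{\ga,\xi\in\Ga}(\ga\wedge\xi)\,\E\Big[\Tr_{\C^n}\big(P_E(0,\ga)P_E(\ga,\xi)P_E(\xi,0)\big)\Big],
\end{equation*}
and by the previous Proposition this value lies in $\Z$.

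\textbf{Uniform domination.} Use $|\ga\wedge\xi|\le |\ga||\xi-\ga|+|\ga||\ga|$, or more symmetrically split $|\ga\wedge\xi|\le C(|\ga||\xi-\ga|+|\ga||\xi|+|\xi-\ga||\xi|)$. Bound each term with the generalized Hölder inequality with exponents $(3,3,3)$:
\begin{equation*}
\big|\E[\Tr(P(0,\ga)P(\ga,\xi)P(\xi,0))]\big|\le C\,f(\ga)f(\xi-\ga)f(\xi),\qquad f(\ga):=\sup_{E\in I}\E\big[\Tr_{\C^n}|P_E(0,\ga)|^3\big]^{1/3}.
\end{equation*}
Here covariance under the magnetic translations (from \ref{item: P1}, \ref{item: P4}) gives $\E[\Tr|P(\ga,\xi)|^3]=\E[\Tr|P(0,\xi-\ga)|^3]$. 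Each resulting term is a product such as $|\ga|f(\ga)\cdot|\xi-\ga|f(\xi-\ga)\cdot f(\xi)$. Its double sum is bounded by $\big(\sum|\ga|f(\ga)\big)^2\sup f$, or by a similar product, which is finite by assumption and uniform in $E\in I$. The Weierstrass M-test therefore reduces the problem to the continuity of each summand.

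\textbf{Continuity of each summand (the main step).} Fix $E\in I$ and let $E'\to E$. For $E'\downarrow E$, the projections $P_{E'}$ converge strongly to $P_E$ by right continuity of the spectral family. For $E'\uparrow E$, they converge to $P_{E}-P_{\{E\}}$. The only possible discontinuity is thus caused by the eigenprojection $P_{\{E\}}$, and we control it through matrix elements. We have $\E|P_{\{E\}}(\ga,\xi)|\le \E[\|P_{\{E\}}\delta_\ga\|\,\|P_{\{E\}}\delta_\xi\|]$. By Cauchy–Schwarz and covariance this is at most $\E[\Tr P_{\{E\}}(0,0)]$, which vanishes because the IDS has no atoms under \ref{item: P2} (Delyon–Souillard). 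Matrix elements of $P_{E'}$ are bounded and converge pointwise in $\omega$, so the expectation of the trilinear product converges by dominated convergence. The product of three differences is handled by telescoping, using boundedness of the other factors and $\E|P_{E'}(x,y)-P_E(x,y)|\to0$.

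I expect this atom issue to be the only delicate point, and the no-atom property of the IDS settles it. With the summands continuous and the series uniformly convergent, the map $E\mapsto\E[C(P_{E,\la,\omega})]$ is continuous on $I$. Since it takes values in $\Z$ and $I$ is connected, it is constant.
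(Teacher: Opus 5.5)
Your termwise step is sound. The Cauchy--Schwarz/covariance bound together with $\E[\Tr_{\C^n}P_{\{E\}}(0,0)]=0$ (no atoms of the IDS under \ref{item: P2}, Delyon--Souillard) forces $P_{\{E\}}=0$ almost surely. Hence $P_{E'}(x,y)\to P_E(x,y)$ a.s.\ as $E'\to E$ from either side, and each summand is continuous by dominated convergence.

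The gap is in passing from the summands to the series. Your majorant $f(\ga)=\sup_{E\in I}\E[\Tr_{\C^n}|P_E(0,\ga)|^3]^{1/3}$ puts the supremum \emph{inside} the sum, and $\sum_\ga|\ga|f(\ga)<\infty$ is not the hypothesis. The theorem only assumes $M:=\sup_{E\in I}\sum_\ga|\ga|\,g_E(\ga)<\infty$, where $g_E(\ga)=\E[\Tr_{\C^n}|P_E(0,\ga)|^3]^{1/3}$. The quantity $\sum_\ga|\ga|\sup_E g_E(\ga)$ need not be finite, since the weight of $g_E$ may move off to infinity as $E$ varies. So the Weierstrass M-test is not justified as written. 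Termwise continuity plus a uniform bound on the sums does not by itself give continuity of $E\mapsto\E[C(P_{E,\la,\omega})]$.

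There are two ways to repair this. If you keep your route, you can extract uniform tails from the hypothesis alone. First, $g_E(\ga)\le M/|\ga|$ for $\ga\neq0$. On $\{|\ga|>R\}$ use $|\ga\wedge\xi|\le|\xi-\ga|\,|\xi|$, and on $\{|\xi|>R\}$ use $|\ga\wedge\xi|\le|\ga|\,|\xi-\ga|$. In each region the unweighted factor is at most $M/R$, so the part of the double series outside $\{|\ga|\le R,\ |\xi|\le R\}$ is $O(M^3/R)$ uniformly in $E\in I$.

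Alternatively, follow the paper's route, i.e.\ the proof of Theorem \ref{continuity chern disorder}, which is the Aizenman--Graf argument. Do not argue termwise. Instead:
\begin{itemize}
\item telescope the full difference $\E[C(P_{E'})]-\E[C(P_E)]$ into three sums;
\item apply H\"older with exponents $(3,3,3)$;
\item pull out $\sup_{x,y}\E[\Tr_{\C^n}|\Delta P(x,y)|^3]^{1/3}\le n^{2/3}\sup_{x,y}\E[\Tr_{\C^n}|\Delta P(x,y)|]^{1/3}\le C\,|N(E')-N(E)|^{1/3}$, using Remark \ref{off diagonal}.
\end{itemize}
The remaining sums involve only the two fixed energies $E,E'$ and are bounded by a multiple of $M^2$. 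This route never needs uniform tails, gives an explicit modulus of continuity in terms of the IDS, and is the form the paper reuses in the $(E,\la)$ setting.

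There is also a minor slip. The inequality $\Tr_{\C^n}|A|^2\le C\,\Tr_{\C^n}|A|^3$ is false: singular values are $\le1$, so the inequality goes the other way. Your reduction of \eqref{sobolev 3 condition} to \eqref{sobolev condition} should instead use H\"older on the singular values and Jensen, giving $\E[\Tr_{\C^n}|A|^2]^{1/2}\le n^{1/6}\,\E[\Tr_{\C^n}|A|^3]^{1/3}$.
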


Next, we turn to the continuity of the Chern character in the disorder parameter. This was first proved in \cite[Thm. 1]{Richter Schulz-Baldes 2001} under the assumption of exponentially decaying Fractional Moment Bounds, that is, a regime of localization. Our next result, which is the similar to \cite[Thm.1 (ii)]{Richter Schulz-Baldes 2001}, shows that the assumption of exponential decay of Fractional Moment Bounds can be weakened. Namely,  following the approach in \cite{Aizenman Graf 1998}, we show that it is enough to have uniform decay of the projection in the form of \eqref{sobolev 3 condition} in a connected region of the energy-disorder plane, and the continuity of the projection w.r.t. $(E,\la)$ in the form of \eqref{ids disorder}. We will show, in Remark \ref{off diagonal} and in Proposition \ref{ids continuity disorder}, that the latter is a consequence of the regularity of the single-site probability distribution, without any assumption on the localization regime for a disorder parameter $\lambda>0$.

\begin{thm}\label{continuity chern disorder}
     Assume that $H_{\la,\omega}$ satisfies hypothesis \ref{item: P1} and \ref {item: P4}. Let $\mathcal{R}\subset \R\times [0,\infty)$ be a connected set. Assume that
    \begin{equation*}
        \sup_{(E,\la)\in \mathcal{R}}\sum_{\ga\in\Ga}|\ga|\E\Big[\Tr_{\C^n}|P_{E,\la,\omega}(0,\ga)|^3\Big]^{\frac{1}{3}}<\infty.
    \end{equation*}
    Moreover assume that there exist $C<\infty$, $\alpha_1,\alpha_2>0$ such that
    \begin{equation}\label{ids disorder}
        \sup_{\ga\in\Ga}\E\Big[\Tr_{\C^n}|(P_{E_1,\la_1,\omega}-P_{E_2,\la_2,\omega})(0,\ga)|\Big]\leq C( |E_1-E_2|^{\alpha_1}+|\la_1-\la_2|^{\alpha_2})
    \end{equation}
    for all $(E_1,\la_1),(E_2,\la_2)\in \mathcal{R}$. Then we have that the map
    \begin{equation*}
        \mathcal{R}\ni (E,\la)\mapsto \E[C(P_{E,\la,\omega})]\in \Z
    \end{equation*}
    is continuous. In particular the Chern character is constant for all $(E,\la)\in \mathcal{R}$.
\end{thm}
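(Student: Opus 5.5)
The plan is to use the explicit trace formula from Proposition \ref{deterministic chern}, point (2), which applies in $\mathcal{R}$ because the cube-root decay condition implies \eqref{sobolev condition}: by Jensen/Hölder, $\E[\Tr|P(0,\ga)|^2]^{1/2}$ is controlled by $\E[\Tr|P(0,\ga)|^3]^{1/3}$ up to dimensional constants. Quantization holds by the Aizenman--Graf index representation under \eqref{sobolev 3 condition}, so it suffices to prove continuity of
\begin{equation*}
F(E,\la)=2\pi\iu\sum_{\ga,\xi\in\Ga}\E\Big[\Tr_{\C^n}\big(P_{E,\la,\omega}(0,\ga)P_{E,\la,\omega}(\ga,\xi)P_{E,\la,\omega}(\xi,0)\big)\Big](\ga\wedge\xi)
\end{equation*}
on $\mathcal{R}$. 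A continuous integer-valued function on a connected set is constant, which gives the final claim.

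First I would get uniform summability. Since $|\ga\wedge\xi|\le|\ga||\xi|$, Hölder's inequality with exponents $(3,3,3)$ for the expectation of the triple product bounds each term by $|\ga|\,|\xi|\,a(\ga)\,a(\xi-\ga)\,a(\xi)$. Here $a(\ga)=\sup_{\mathcal{R}}\E[\Tr|P(0,\ga)|^3]^{1/3}$, and I use covariance from \ref{item: P1} and \ref{item: P4} to move to the origin; the magnetic phases have modulus one. The norms $|P(\ga,\xi)|\le 1$ as matrix entries also give $a\le C$. Writing $|\xi|\le|\xi-\ga|+|\ga|$ does not directly help, so instead I bound the term by $|\ga|a(\ga)\cdot|\xi|a(\xi)\cdot\sup a$. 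This gives a double sum dominated by $(\sum|\ga|a(\ga))^2\sup a<\infty$, uniformly in $(E,\la)\in\mathcal{R}$. Hence, by dominated convergence for series, it suffices to prove termwise continuity: for fixed $\ga,\xi$, the map $(E,\la)\mapsto\E[\Tr(P(0,\ga)P(\ga,\xi)P(\xi,0))]$ is continuous.

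For termwise continuity I telescope the difference of triple products between $(E_1,\la_1)$ and $(E_2,\la_2)$ into three terms, each containing one factor $(P_1-P_2)(\cdot,\cdot)$ and two factors that are bounded in operator norm by $1$. Each term is then bounded by $\E[\Tr|(P_1-P_2)(x,y)|]$ up to a constant depending on $n$. By covariance this equals $\E[\Tr|(P_1-P_2)(0,y-x)|]$; the magnetic phase cancels inside $|\cdot|$. Assumption \eqref{ids disorder} then bounds this by $C(|E_1-E_2|^{\alpha_1}+|\la_1-\la_2|^{\alpha_2})$.

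The main obstacle I expect is the care needed in the bounded-factor step. One needs $\Tr|ABC|\le\|A\|\|C\|\Tr|B|$ for $n\times n$ matrix blocks, which holds since the trace norm is an ideal norm and the block norms are at most $1$. One also needs to check that the covariance relation $P(\ga,\xi)\sim P(0,\xi-\ga)$ in distribution, including the magnetic phases, is compatible with absolute values. If the uniform domination above turns out to be loose, I would instead use a truncation: approximate $F$ by the finite sum over $|\ga|,|\xi|\le R$ and control the tail uniformly by $\sum_{|\ga|>R}|\ga|a(\ga)$. This yields the same continuity conclusion.
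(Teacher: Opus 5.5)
Your reduction to termwise continuity has a genuine gap at the domination step. You set $a(\ga)=\sup_{\mathcal{R}}\E[\Tr_{\C^n}|P_{E,\la,\omega}(0,\ga)|^3]^{1/3}$ and then use $\sum_\ga|\ga|a(\ga)<\infty$. The hypothesis only gives $\sup_{\mathcal{R}}\sum_\ga|\ga|\,\E[\Tr_{\C^n}|P_{E,\la,\omega}(0,\ga)|^3]^{1/3}<\infty$, with the supremum \emph{outside} the sum. These are not equivalent. A family can have uniformly bounded weighted sums while its mass escapes to infinity as $(E,\la)$ varies, and then termwise continuity does not give continuity of the series. Assumption \eqref{ids disorder} does not rescue this: it controls the difference uniformly in $\ga$, but with no decay in $\ga$. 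Your truncation fallback has the same defect, because the tail bound $\sum_{|\ga|>R}|\ga|a(\ga)$ again has the supremum inside. Under that stronger hypothesis your argument would go through, but it is not the one stated.

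The missing idea is to estimate the telescoped difference of the \emph{whole} series directly. Bounding the undifferenced factors by operator norm $\le 1$, as you do term by term, is exactly what destroys summability. Instead, set $\Delta P=P_{E_1,\la_1}-P_{E_2,\la_2}$ and proceed as follows in each of the three telescoped double sums.
\begin{enumerate}
\item Put the weight on the two undifferenced factors, using $|\ga\wedge\xi|\le|\ga-\xi||\xi|$, $|\ga||\xi|$, or $|\ga||\xi-\ga|$ according to whether $\Delta P$ sits in the first, second, or third slot.
\item Apply H\"older for the Schatten $3$-norm, and then for the expectation with exponents $(3,3,3)$.
\item Bound the differenced factor uniformly in position by
\begin{equation*}
\E\big[\Tr_{\C^n}|\Delta P(x,y)|^3\big]^{1/3}\le n^{2/3}\,\E\big[\Tr_{\C^n}|\Delta P(x,y)|\big]^{1/3}\le n^{2/3}C^{1/3}\big(|E_1-E_2|^{\alpha_1/3}+|\la_1-\la_2|^{\alpha_2/3}\big),
\end{equation*}
using $\Tr_{\C^n}|\Delta P(x,y)|\le n$ together with covariance.
\end{enumerate}
After the change of variables $\eta=\xi-\ga$, the remaining weighted double sum factorizes into a product of two single-point sums $\sum_\eta|\eta|\,\E[\Tr_{\C^n}|P_{E_j,\la_j}(0,\eta)|^3]^{1/3}$. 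Each is bounded by the hypothesis exactly as stated. This is the paper's argument. It gives an explicit H\"older modulus of continuity, and no dominated convergence is needed.
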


\begin{proof}
    By Proposition \ref{deterministic chern}-\eqref{chern point 2} we have that the expectation of the Chern character can be written, for $(E,\la)\in \mathcal{R}$, as
    \begin{equation*}
        \E[C(P_{E,\la,\omega})]=2\pi\iu\sum_{\ga,\xi\in\Ga}\E\Big[\Tr_{\C^n}\big(P_{E,\la,\omega}(0,\ga)P_{E,\la,\omega}(\ga,\xi)P_{E,\la,\omega}(\xi,0)\big)\Big](\ga\wedge \xi).
    \end{equation*}
    Let $(E_1,\la_1),(E_2,\la_2)\in \mathcal{R}$ then, omitting the dependence on $\omega$ of the spectral projection, we have
    \begin{align*}
        &(2\pi)^{-1}\big|\,\E[C(P_{E_1,\la_1})-C(P_{E_2,\la_2})]\,\big| \\
        &\leq \sum_{\ga,\xi\in\Ga}\E\Big[\Tr_{\C^n}|(P_{E_1,\la_1}-P_{E_2,\la_2})(0,\ga)P_{E_1,\la_1}(\ga,\xi)P_{E_1,\la_1}(\xi,0)|\Big]|\ga\wedge \xi| \\
        &+\sum_{\ga,\xi\in\Ga}\E\Big[\Tr_{\C^n}|P_{E_2,\la_2}(0,\ga)(P_{E_1,\la_1}-P_{E_2,\la_2})(\ga,\xi)P_{E_1,\la_1}(\xi,0)|\Big]|\ga\wedge \xi| \\
        &+ \sum_{\ga,\xi\in\Ga}\E\Big[\Tr_{\C^n}|P_{E_2,\la_2}(0,\ga)P_{E_2,\la_2}(\ga,\xi)(P_{E_1,\la_1}-P_{E_2,\la_2})(\xi,0)|\Big]|\ga\wedge \xi|.
    \end{align*}

    We  will focus only on the first addendum in the sum, since the other two terms can be treated similarly. Then, setting $P_{E_1,\la_1}-P_{E_2,\la_2}=\Delta P$, using the H\"{o}lder inequality first for the Schatten norm and then for the expectation, we get 
    \begin{align*}
        &\sum_{\ga,\xi\in\Ga}\E\Big[\Tr_{\C^n}|(P_{E_1,\la_1}-P_{E_2,\la_2})(0,\ga)P_{E_1,\la_1}(\ga,\xi)P_{E_1,\la_1}(\xi,0)|\Big]|\ga\wedge \xi| \\
         &\leq\sum_{\ga,\xi\in\Ga}\E\Big[\Tr_{\C^n}|\Delta P(0,\ga)|^3\Big]^{\frac{1}{3}}|\ga-\xi|\E\Big[\Tr_{\C^n}|P_{E_1,\la_1}(\ga,\xi)|^3\Big]^{\frac{1}{3}}|\xi|\E\Big[\Tr_{\C^n}|P_{E_1,\la_1}(\xi,0)|^3\Big]^{\frac{1}{3}}
    \end{align*}
    where we used $|\ga\wedge \xi|=|(\ga-\xi)\wedge \xi|\leq |\ga-\xi||\xi|$. By exploiting the monotonicity of the Shatten norm we get
    \begin{align*}
            \E\Big[\Tr_{\C^n}|\Delta P(0,\ga)|^3\Big]^{\frac{1}{3}}&\leq \E\Big[\big(\Tr_{\C^n}|\Delta P(0,\ga)|\big)^3\Big]^{\frac{1}{3}}=n\,\E\bigg[\bigg(\frac{1}{n}\Tr_{\C^n}|\Delta P(0,\ga)|\bigg)^3\bigg]^{\frac{1}{3}} \\
            &\hspace{-01cm}\leq n^{\frac{2}{3}}\E\Big[\Tr_{\C^n}|\Delta P(0,\ga)|\Big]^{\frac{1}{3}}\leq C^{\frac{1}{3}}n^{\frac{2}{3}}\left(|E_1-E_2|^{\frac{\alpha_1}{3}}+|\la_1-\la_2|^{\frac{\alpha_2}{3}}\right)
    \end{align*}
    where in the second inequality we used that $\Tr_{\C^n}|\Delta P(0,\ga)|\leq n\norm{\Delta P(0,\ga)}\leq n\norm{\Delta P}\leq n$. Finally, we obtain
    \begin{align*}
        &\sum_{\ga,\xi\in\Ga}\E\Big[\Tr_{\C^n}|(P_{E_1,\la_1}-P_{E_2,\la_2})(0,\ga)P_{E_1,\la_1}(\ga,\xi)P_{E_1,\la_1}(\xi,0)|\Big]|\ga\wedge \xi| \\
        &\leq C^{\frac{1}{3}}n^{\frac{2}{3}}\left(|E_1-E_2|^{\frac{\alpha_1}{3}}+|\la_1-\la_2|^{\frac{\alpha_2}{3}}\right)\bigg(\sum_{\ga\in\Ga}|\ga|\E\Big[\Tr_{\C^n}|P_{E_1,\la_1}(0,\ga)|^3\Big]^{\frac{1}{3}}\bigg)^2
    \end{align*}
    which shows the desired continuity result.
\end{proof}

The continuity in energy appearing in \eqref{ids disorder} can be shown, under a regularity assumption of the single-site probability distribution, as a consequence of the Wegner estimate, from which one derives the H\"{o}lder continuity in energy of the IDS. The latter can be expressed, under hypothesis \ref{item: P1},\ref{item: P4} and \ref{item: R1}, in the following way: 
\begin{equation}\label{ids continuity energy}
     \E\Big[\Tr_{\C^n}(P_{E_2,\la,\omega}-P_{E_1,\la,\omega})(0,0)\Big]\leq 2^{2-\tau}n\pi C_\tau (\rho)\bigg|\frac{E_1-E_2}{\la}\bigg|^\tau
\end{equation}
for all $E_1,E_2\in \R$ and $\la>0$. See \cite[Thm. 1.3 \& Rmk. 1.5]{Krishna 2007}. 

\begin{rmk}\label{off diagonal}
    The continuity condition \eqref{ids continuity energy} can be extended to the case with off-diagonal terms. Indeed let $\ga\in\Ga$, $\la>0$ and $E_1>E_2$, so that $P_{E_1,\la,\omega}-P_{E_2,\la,\omega}\geq 0$. By the inequality $\Tr_{\C^n}|A|\leq n(\Tr_{\C^n}|A|^2)^{\frac{1}{2}}$, which holds for all $A\in M_n(\C)$, we get
    \begin{align*}
        &\E\Big[\Tr_{\C^n}|(P_{E_1,\la,\omega}-P_{E_2,\la,\omega})(0,\ga)|\Big]\leq n  \E\Big[(\Tr_{\C^n}|(P_{E_1,\la,\omega}-P_{E_2,\la,\omega})(0,\ga)|^2)^{\frac{1}{2}}\Big]\nonumber\\ 
        &=n\E\Bigg[\bigg(\sum_{i=1}^n\sum_{j=1}^n|\langle\delta_{0,i},(P_{E_1,\la,\omega}-P_{E_2,\la,\omega})\delta_{\ga,j}\rangle|^2\bigg)^{\frac{1}{2}}\Bigg] \nonumber \\
        & \leq n\E\Bigg[\bigg(\sum_{i=1}^n\sum_{j=1}^n\langle \delta_{0,i},(P_{E_1,\la,\omega}-P_{E_2,\la,\omega})\delta_{0,i}\rangle\langle \delta_{\ga,j},(P_{E_1,\la,\omega}-P_{E_2,\la,\omega})\delta_{\ga,j}\rangle\bigg)^{\frac{1}{2}}\Bigg] \nonumber \\
        &= n \E\left[\left(\Tr_{\C^n}(P_{E_1,\la,\omega}-P_{E_2,\la,\omega})(0,0)\right)^{\frac{1}{2}}\left(\Tr_{\C^n}(P_{E_1,\la,\omega}-P_{E_2,\la,\omega})(\ga,\ga)\right)^{\frac{1}{2}}\right]\nonumber \\
        & \leq n\E\Big[\Tr_{\C^n}(P_{E_1,\la,\omega}-P_{E_2,\la,\omega})(0,0)\Big]^{\frac{1}{2}}\E\Big[\Tr_{\C^n}(P_{E_1,\la,\omega}-P_{E_2,\la,\omega})(\ga,\ga)\Big]^{\frac{1}{2}} \nonumber\\
        &= n \E\Big[ \Tr_{\C^n}(P_{E_1,\la,\omega}-P_{E_2,\la,\omega})(0,0)\Big]\leq 2^{2-\tau} n^2\pi C_\tau(\rho)\left(\frac{E_1-E_2}{\la}\right)^\tau
    \end{align*}
    where in the second inequality we used Cauchy-Schwarz inequality in the Hilbert space $\ell^2(\Ga;\C^n)$, in the third inequality we used Cauchy-Schwarz inequality for the expectation, in the last equality we used the ergodicity property, which holds under assumptions \ref{item: P1},\ref{item: P4}, while in the last inequality we used \eqref{ids continuity energy}.
\end{rmk}

The continuity in disorder in condition \eqref{ids disorder} holds in our setting, either  for energies in a spectral gap, or for disorder parameters away from zero, under a regularity condition on the single-site probability distribution, following the arguments in \cite[Thm. 1.2]{Hislop Klopp Schenker 2005} and \cite[Lemma 5.5]{Germinet Klein Schenker 2009}. We summarize these results below, adapting them to our setting. We recall that, given $E\in \G_i(0)$ with $i\in \set{1,\ldots,N-1}$, $\la_0(E)$ denotes the smallest value $\la$ such that $E\in \sigma(H_0)+\la[-a,b]$.

\begin{prop}\label{ids continuity disorder}
    Let $H_{\la,\omega}$ be the operator given in \eqref{Anderson model} satisfying \ref {item: P4}. 
    \begin{enumerate}
        \item\label{point 1 continuity disorder} Assume \ref{item: P2},\ref {item: P3} hold. Let $K=[E_1,E_2]\times [0,\la_0]\subset \G_i\times [0,+\infty)$ for $i\in \set{1,\ldots,N-1}$ where $\la_0<\min\set{\la_0(E_1),\la_0(E_2)}$ so that $[E_1,E_2]\subset \G_i(\la)$ for all $\la\in [0,\la_0]$. Then there exists $C_K=C_{H_0,a,b,n,K}>0, \mu_K=\mu_{H_0,a,b,K}<\infty$ such that for all $E\in[E_1,E_2]$ and $\omega\in\Omega$ we have
        \begin{gather*}
            \Tr_{\C^n}|P_{E,\la,\omega}(\ga,\xi)|\leq C_K\eu^{-\mu_K|\ga-\xi|} \\
            \Tr_{\C^n}|(P_{E,\la',\omega}-P_{E,\la'',\omega})(\ga,\xi)|\leq C_K|\la'-\la''|\eu^{-\mu_K|\ga-\xi|}
        \end{gather*}
        for all $\la,\la',\la''\in[0,\la_0]$, $\ga,\xi\in\Ga$.
        \item\label{point 2 continuity disorder}
        Assume  \ref{item: P1},\ref{item: R1} hold. Let $K=[E_1,E_2]\times [\la_1,\la_2]\subset \R\times (0,\infty)$. Then there exists $C_K=C_{H_0,a,b,n,\rho,\tau,K}>0$ such that for all $E\in [E_1,E_2]$ we have
        \begin{equation*}
            \sup_{\ga\in\Ga}\E\Big[\Tr_{\C^n}|(P_{E,\la',\omega}-P_{E,\la''.\omega})(0,\ga)|\Big]\leq C_K|\la'-\la''|^{\frac{\tau}{\tau+2}}
        \end{equation*}
        for all $\la',\la''\in [\la_1,\la_2]$.
    \end{enumerate}
\end{prop}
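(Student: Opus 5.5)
For part \eqref{point 1 continuity disorder} we work in the gap, using resolvents and Combes--Thomas. For every $\la\in[0,\la_0]$ and every $\omega$, the interval $[E_1,E_2]$ lies in $\G_i(\la)$. Since $\la_0<\min\{\la_0(E_1),\la_0(E_2)\}$, and the spectrum $\sigma(H_{\la,\omega})\subset\sigma(H_0)+\la[-a,b]$ holds for \emph{every} $\omega$ (because $\norm{\la V_\omega}\le\la\max\{a,b\}$, with the sharper one-sided bounds $-a\la\le\la V_\omega\le b\la$), the distance $d:=\dist([E_1,E_2]\times[0,\la_0])$ from the spectrum is bounded below by some $d_K>0$, uniformly in $\omega$ and $\la$.

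Fix a contour $\Ga_K$ enclosing $\mathcal{B}_1(\la)\cup\dots\cup\mathcal{B}_i(\la)$ and crossing the real axis inside the gap, at distance at least $d_K/2$ from the spectrum. For $E\in[E_1,E_2]$ the crossing point can be taken independent of $E$, since the projection $P_{E,\la,\omega}$ does not change as $E$ moves in the gap. Then
\begin{equation*}
P_{E,\la,\omega}=-\frac{1}{2\pi\iu}\oint_{\Ga_K}(H_{\la,\omega}-z)^{-1}\di z .
\end{equation*}
By the finite hopping \ref{item: P2}, $H_{\la,\omega}$ has range $r$ and norm at most $\norm{H_0}+\la_0\max\{a,b\}$. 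The Combes--Thomas estimate then gives
\begin{equation*}
\norm{(H_{\la,\omega}-z)^{-1}(\ga,\xi)}\le \frac{2}{\dist(z,\sigma)}\,\eu^{-\mu|\ga-\xi|},
\end{equation*}
with $\mu$ depending only on $r$, $d_K$ and these norms. Integrating over the compact contour and using $\Tr_{\C^n}|A|\le n\norm{A}$ yields the first bound.

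For the second bound, apply the resolvent identity
\begin{equation*}
(H_{\la',\omega}-z)^{-1}-(H_{\la'',\omega}-z)^{-1}=(\la''-\la')(H_{\la',\omega}-z)^{-1}V_\omega(H_{\la'',\omega}-z)^{-1}.
\end{equation*}
The operator $V_\omega$ is diagonal with $\norm{V_\omega}\le\max\{a,b\}$. The matrix element $(\ga,\xi)$ of the product is therefore a convolution $\sum_\eta \eu^{-\mu|\ga-\eta|}\eu^{-\mu|\eta-\xi|}$, which is at most $C\eu^{-(\mu/2)|\ga-\xi|}$. Integrating along $\Ga_K$ gives the Lipschitz bound, with $\mu_K=\mu/2$.

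Part \eqref{point 2 continuity disorder} is the main obstacle, since no gap is available and we must use the disorder itself. Following \cite[Lemma 5.5]{Germinet Klein Schenker 2009}, we smooth the Fermi function. Take $f_\eps\in C^\infty$ with $f_\eps=1$ on $(-\infty,E-\eps]$, $f_\eps=0$ on $[E+\eps,\infty)$, and $|f_\eps^{(k)}|\lesssim\eps^{-k}$. We then split
\begin{align*}
P_{E,\la'}-P_{E,\la''}
&=\big(P_{E,\la'}-f_\eps(H_{\la'})\big)+\big(f_\eps(H_{\la'})-f_\eps(H_{\la''})\big)\\
&\quad+\big(f_\eps(H_{\la''})-P_{E,\la''}\big).
\end{align*}

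The outer terms are controlled by spectral projections of windows of width $2\eps$. Indeed $|P_{E,\la}-f_\eps(H_\la)|\le\chi_{[E-\eps,E+\eps]}(H_\la)$, and the Cauchy--Schwarz argument of Remark \ref{off diagonal} applies to this positive operator. Combining it with \eqref{ids continuity energy} and $\la\ge\la_1$, each outer term is at most $C\,n^2(\eps/\la_1)^\tau$ in the averaged off-diagonal trace norm.

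For the middle term we use the Helffer--Sjöstrand formula with an almost-analytic extension $\tilde f_\eps$ of order $2$, together with the resolvent identity above:
\begin{equation*}
f_\eps(H_{\la'})-f_\eps(H_{\la''})=\frac{(\la''-\la')}{\pi}\int\bar\partial\tilde f_\eps(z)\,(H_{\la'}-z)^{-1}V_\omega(H_{\la''}-z)^{-1}\,\di^2z .
\end{equation*}
The bound $|\bar\partial\tilde f_\eps|\lesssim\eps^{-3}|\mathrm{Im}\,z|^2$ compensates the factor $|\mathrm{Im}\,z|^{-2}$ coming from the two resolvents. This gives an operator-norm bound $C|\la'-\la''|\eps^{-2}$ on the middle term, uniformly in $\omega$; the remaining powers of $\eps$ are absorbed because the support in $\mathrm{Re}\,z$ has length $O(1)$. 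Each matrix element is then bounded by $n\,C|\la'-\la''|\eps^{-2}$.

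Summing the three terms gives $C(\eps^\tau+|\la'-\la''|\eps^{-2})$. Optimizing with $\eps=|\la'-\la''|^{1/(\tau+2)}$ produces $C_K|\la'-\la''|^{\tau/(\tau+2)}$, uniformly in $E\in[E_1,E_2]$ and $\ga\in\Ga$.

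The delicate point is the bookkeeping of the powers of $\eps$ in the Helffer--Sjöstrand step. I would first check that the order-$2$ extension indeed gives $\eps^{-2}$ and not a worse power, since any worse power degrades the exponent $\frac{\tau}{\tau+2}$.
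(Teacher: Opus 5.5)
Your proposal is correct and follows essentially the same route as the paper. Part (1) uses a Riesz contour integral with Combes--Thomas bounds and the second resolvent identity. Part (2) uses the three-term splitting with a Fermi function smoothed over width $\eps=|\la'-\la''|^{1/(\tau+2)}$; the outer terms are handled by Remark~\ref{off diagonal} and \eqref{ids continuity energy}, and the middle term by Helffer--Sj\"ostrand with an order-$2$ almost-analytic extension. The differences are cosmetic. The paper uses a one-sided window $[E-\delta,E]$ so that $P_{E,\la}-g(H_\la)\ge 0$ and the Cauchy--Schwarz step applies verbatim, and it puts an explicit compactly supported cutoff below the spectrum. The $\eps^{-2}$ you were worried about does come out, but not for the reason you gave. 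It holds because $f_\eps'''$ is supported on an $x$-interval of length $O(\eps)$, not $O(1)$, while the terms involving derivatives of the $\mathrm{Im}\,z$-cutoff live where $|\mathrm{Im}\,z|\ge 1$.
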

    
\begin{rmk}
    The restriction $\la>0$ in Proposition \ref{ids continuity disorder}-\eqref{point 2 continuity disorder} comes from the use, in the proof, of the continuity in energy of the IDS \eqref{ids continuity energy}, in which the prefactor appearing in front of $|E_1-E_2|^\tau$ diverges as $\la\to0$. This restriction can be removed when dealing with periodic systems, if we assume that the IDS of the unperturbed Hamiltonian $H_0$ is $\alpha$-H\"{o}lder continuous in energy, which holds for example in the case of the Haldane model with $\alpha=1$, and the single-site probability distribution has bounded density. Indeed, it was shown in \cite{Combes Hislop Klopp 2003} and \cite{Schenker 2004} that, under the aforementioned assumptions, the IDS of $H_{\la,\omega}$ is H\"{o}lder continuous in energy, with constants which remain bounded as $\la\to 0$. Hence Proposition \ref{ids continuity disorder}-(2) holds for all $\la\geq 0$ for all energies, even outside the internal spectral gaps.
\end{rmk}

As a consequence of Theorem \ref{continuity chern disorder} and Proposition \ref{ids continuity disorder}-\eqref{point 1 continuity disorder}, we have that the Chern number is a topological invariant which is robust w.r.t. disorder in 2D, in the sense that it is still well defined in the random setting and retains the same value of the unperturbed Hamiltonian, for small values of disorder.  Indeed, we have:
    
\begin{crl}\label{robusteness}
    Assume that $H_{\la,\omega}$ satisfies hypothesis \ref{item: P1}-\ref {item: P4}. Then, given $\la<\frac{|\G_i|}{a+b}$, with $i\in\set{1,\ldots,N-1}$, it holds that
    \begin{equation*}
        \E[C(P_{E,\la,\omega})]=\E[C(P_{E,\la=0})]=C(P_{E,\la=0}) \quad \textup{for all} \ E\in\G_i(\la).
    \end{equation*}
\end{crl}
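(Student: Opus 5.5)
The natural route is to apply Theorem \ref{continuity chern disorder} on a rectangle in the energy--disorder plane containing the line segment from $(E,0)$ to $(E,\la)$. The hypotheses will be supplied by Proposition \ref{ids continuity disorder}-\eqref{point 1 continuity disorder}, which requires no regularity of $\rho$.

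Fix $i\in\set{1,\ldots,N-1}$ and $\la<\frac{|\G_i|}{a+b}$, so that $\G_i(\la)=(\beta_i+b\la,\alpha_{i+1}-a\la)$ is nonempty. Fix $E\in\G_i(\la)$. Then $E\in\G_i(0)$ and $\la<\la_0(E)$, since $\la_0(E)$ is the first disorder value at which a broadened band reaches $E$. Because $\G_i(\la)$ is open, choose $E_1<E<E_2$ with $[E_1,E_2]\subset\G_i(\la)$. The gaps $\G_i(\la')$ shrink monotonically as $\la'$ grows, so $[E_1,E_2]\subset\G_i(\la')$ for all $\la'\in[0,\la]$. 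We also get $\la<\min\set{\la_0(E_1),\la_0(E_2)}$, and we may take $\la_0:=\la$.

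Set $\mathcal{R}=K=[E_1,E_2]\times[0,\la]$, which is connected. Proposition \ref{ids continuity disorder}-\eqref{point 1 continuity disorder} then gives, uniformly in $\omega$ and in $(E',\la')\in K$:
\begin{itemize}
\item the exponential bound $\Tr_{\C^n}|P_{E',\la',\omega}(0,\ga)|\le C_K\eu^{-\mu_K|\ga|}$;
\item a Lipschitz bound in $\la'$ with the same exponential weight.
\end{itemize}
The first bound implies the uniform decay hypothesis of Theorem \ref{continuity chern disorder}, using $\Tr|A|^3\le(\Tr|A|)^3$, so that $\sum_\ga|\ga|\E[\Tr|P(0,\ga)|^3]^{1/3}\le C_K\sum_\ga|\ga|\eu^{-\mu_K|\ga|}<\infty$.

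For the continuity hypothesis \eqref{ids disorder}, split the difference as
\[
P_{E',\la'}-P_{E'',\la''}=(P_{E',\la'}-P_{E',\la''})+(P_{E',\la''}-P_{E'',\la''}).
\]
The first term is bounded by $C_K|\la'-\la''|$. The second term actually vanishes: $E',E''$ both lie in the gap $\G_i(\la'')$ of $\sigma(H_{\la'',\omega})\subset\sigma(H_0)+\la''[-a,b]$, and this inclusion holds for every $\omega$ because $\|V_\omega\|$ has spectrum in $[-a,b]$. Hence the spectral projections below $E'$ and $E''$ coincide. Therefore \eqref{ids disorder} holds with $\alpha_1$ arbitrary (say $1$) and $\alpha_2=1$.

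Theorem \ref{continuity chern disorder} now shows that $(E',\la')\mapsto\E[C(P_{E',\la',\omega})]$ is constant on $K$. In particular $\E[C(P_{E,\la,\omega})]=\E[C(P_{E,0})]$. At $\la'=0$ the operator is deterministic, so this equals $C(P_{E,\la=0})$.

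The only delicate points are two checks. First, the spectral inclusion must hold for all $\omega$, not merely almost surely, so that Proposition \ref{ids continuity disorder} and the gap argument apply pointwise in $\omega$. Second, Proposition \ref{deterministic chern}, which underlies the formula in the proof of Theorem \ref{continuity chern disorder}, must also cover the endpoint $\la=0$. It does, since the exponential decay there is deterministic. The main obstacle is mild: one must make sure the rectangle stays strictly inside the moving gap, and this is guaranteed by the openness of $\G_i(\la)$ together with its monotonicity in $\la$.
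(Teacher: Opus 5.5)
Your proof is correct and is essentially the paper's argument: you feed the uniform exponential decay and the Lipschitz-in-$\la$ bound from Proposition \ref{ids continuity disorder}-\eqref{point 1 continuity disorder} into Theorem \ref{continuity chern disorder}. The only difference is that the paper takes the vertical segment $\mathcal{R}=\set{E}\times[0,\la]$, so your enlargement to a rectangle, together with the observation that the projections coincide for energies in the same gap, is valid but unnecessary.
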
 

\begin{proof}[Proof of Corollary \ref{robusteness}]
 Given $\la<\frac{|\G_i|}{a+b}$ and $E\in \G_i(\la)$, consider $\mathcal{R}=\set{E}\times [0,\la]$. By Proposition \ref{ids continuity disorder}-\eqref{point 1 continuity disorder}, we have that conditions \eqref{sobolev 3 condition} and \eqref{ids disorder} are satisfied with $\alpha_2=1$, so we are in position to apply Theorem \ref{continuity chern disorder}, from which the thesis follows.
\end{proof}
    
\begin{proof}[Proof of Proposition \ref{ids continuity disorder}]
    For the clarity of the exposition, and to see how all the parameters of the model appear in the estimates, we detail all the steps.

    1. Let $E_1,E_2\in \G_i$, with $i\in \set{1,\ldots,N-1}$, let $\la_0<\min\set{\la_0(E_1),\la_0(E_2)}$, then $[E_1,E_2]\subset \G_i(\la)$ for all $\la\in [0,\la_0]$. Consider $S_\alpha$ as defined in Proposition \ref{combes thomas}, and let $\alpha>0$ be such that $2 S_\alpha=\min\set{\dist(E_1,\sigma(H_0)+\la_0[-a,b]),\dist (E_2,\sigma(H_0)+\la_0[-a,b])}$. Given $E\in [E_1,E_2]$ consider the contour in the complex plane $\Ci(E)=\Ci_1\cup \Ci_2^+\cup \Ci_2^-\cup \Ci_3$ where
    \begin{gather*}
        \Ci_1=\set{E+\iu\eta: \eta\in (-2S_\alpha,2S_\alpha)} \\
        \Ci_2^{\pm}=\set{u\pm\iu2S_\alpha: u\in[\inf\sigma(H_{\la_0})-2 S_\alpha,E]} \\
        \Ci_3=\set{\inf\sigma(H_{\la_0})-2S_\alpha+\iu\eta:\eta\in (-2S_\alpha,2S_\alpha)}.
    \end{gather*}
    By \cite[Thm. V.4.10]{Kato 1995} we have $\sigma(H_{\la,\omega})\subset \sigma (H_0)+\la [-a,b]$ for all $\omega\in\Omega$, and so it holds that $\dist(z,\sigma(H_{\la,\omega}))\geq 2 S_\alpha$ for all $z\in \Ci(E)$, $E\in [E_1,E_2]$, $\la\in [0,\la_0]$ and $\omega\in\Omega$. Hence by the Combes-Thomas estimate \eqref{ct1} we get
    \begin{equation*}
        \norm{(H_{\la,\omega}-z)^{-1}(\ga,\xi)}\leq \frac{1}{S_\alpha}\eu^{-\alpha|\ga-\xi|} \quad \textup{for all} \ z\in \Ci(E), E\in [E_1,E_2], \la\in [0,\la_0],\omega\in\Omega.
    \end{equation*}
    Using the Riesz integral formula we have
    \begin{equation*}
        \norm{P_{E,\la,\omega}(\ga,\xi)}\leq \frac{1}{2\pi}\int_{\Ci(E)}\di z \, \norm{(H_{\la,\omega}-z)^{-1}(\ga,\xi)}\leq \frac{|\Ci(E_2)|}{2\pi S_\alpha}\eu^{-\alpha|\ga-\xi|}
    \end{equation*}
    for all $E\in [E_1,E_2]$, $\la\in [0,\la_0]$, $\omega\in \Omega$. Moreover, using the geometric resolvent identity together with the Riesz formula, it holds that
    \begin{align*}
        &\norm{(P_{E,\la',\omega}-P_{E,\la'',\omega})(\ga,\xi)}\\
        &\leq \frac{\norm{V_\omega}_\infty}{2\pi}|\la'-\la''|\int _{\Ci(E)}\di z \, \sum_{\zeta\in \Ga}\norm{(H_{\la',\omega}-z)(\ga,\zeta)}\norm{(H_{\la'',\omega}-z)(\zeta,\xi)} \\
        &\leq \frac{\norm{V_\omega}_\infty |\Ci(E_2)|}{2\pi (S_\alpha)^2}|\la'-\la''|\eu^{-\frac{\alpha}{2}|\ga-\xi|}\sum_{\zeta\in \Ga}\eu^{-\alpha|\zeta|}=C|\la'-\la''|\eu^{-\frac{\alpha}{2}|\ga-\xi|}
    \end{align*}
    for all $E\in [E_1,E_2]$, $\la',\la''\in [0,\la_0]$, $\omega\in\Omega$. The thesis follows from the inequality $\Tr_{\C^n}|A|\leq n\norm{A}$ which holds for all matrices $A\in M_n(\C)$.
    
    2. Let $E_1<E_2$ and $0<\la_1<\la_2$. In order to prove the statement it suffices to consider the case $\la_2-\la_1<1$. Let $E\in [E_1,E_2]$, let $\la',\la''\in [\la_1,\la_2]$ and define $\delta=|\la'-\la''|^{\alpha}$ where $\alpha\in (0,1)$ will be chosen later. In the following, we will omit the dependence on $\omega$.
    Notice that if $E-\delta<\inf\sigma(H_{\la_2})=\alpha_0-\la_2a$ we have that
    \begin{equation*}
        P_{E,\la'}-P_{E,\la''}=(P_{E,\la'}-P_{E-\delta,\la'})+(P_{E-\delta,\la''}-P_{E,\la''}).
    \end{equation*}
    By the same argument of Remark \ref{off diagonal}, we get for $\la^\#\in\set{\la',\la''}$ that
    \begin{align*}
        &\E\Big[\Tr_{\C^n}|(P_{E,\la^\#}-P_{E-\delta,\la^\#})(0,\ga)|\Big]\leq \frac{2^{2-\tau} n^2\pi C_\tau(\rho)\delta^\tau}{(\la^{\#})^\tau}\leq \frac{2^{2-\tau} n^2\pi C_\tau(\rho)}{\la_1^\tau}|\la'-\la''|^{\alpha\tau}.
    \end{align*}
    Now assume that $E-\delta>\inf\sigma(H_{\la_2})$. Consider $f\in C^3(\R)$ defined by
    \begin{equation*}
        f(x)=
        \begin{cases}
            0 &\textup{if} \ x\leq 0 \\
            1 & \textup{if} \ x\geq1
        \end{cases}.
    \end{equation*}
    Define $\displaystyle g(x)=f\Big(\frac{x-\inf{\sigma(H_{\la_2})+\delta}}{\delta}\Big)f\Big(\frac{E-x}{\delta}\Big)$. Note that
    \begin{equation*}
        g(x)=
        \begin{cases}
            1 & \textup{if} \ \inf\sigma(H_{\la_2})\leq x\leq E-\delta \\
            0 & \textup{if} \ x\geq E \ \lor \ x\leq \inf\sigma(H_{\la_2})-\delta
        \end{cases}
    \end{equation*}
    and $\norm{g^{(k)}}_\infty\leq c_k\delta^{-k}$ for all $k=1,2,3$. Proceeding as in \cite[Eq. 5.11]{Hislop Klopp Schenker 2005} we write
    \begin{align}\label{equation}
        P_{E,\la'}-P_{E,\la''}=\Big(P_{E,\la'}-g(H_{\la'})\Big)+\Big(g(H_{\la'})-g(H_{\la''})\Big)+\Big(g(H_{\la''})-P_{E,\la''}\Big).
    \end{align}
    By construction we have, for $\la^\#\in\set{\la',\la''}$, that
    \begin{equation}\label{positivity inequality}
        0\leq P_{E,\la^{\#}}-g(H_{\la^\#})\leq P_{E,\la^\#}-P_{E-\delta,\la^\#}.
    \end{equation}
    As before we can use the same argument in Remark \ref{off diagonal}, along with \eqref{positivity inequality}, to estimate the first and the third term in the right hand side of \eqref{equation}. In this way we obtain for $\la^\#\in \set{\la',\la''}$ that
    \begin{align*}
        &\E\Big[\Tr_{\C^n}|(P_{E,\la^\#}-g(H_{\la^\#}))(0,\ga)|\Big]
        \leq \frac{2^{2-\tau} n^2\pi C_\tau(\rho)}{\la_1^\tau}|\la'-\la''|^{\alpha\tau}.
    \end{align*}
    In order to estimate the second term in \eqref{equation} we use the  Helffer–Sj\"{o}strand formula (see \eg \cite{Davies 1995})
    \begin{equation*}
        g(H_{\la^\#})=\frac{1}{\pi}\int_{\C}\di z  \, \partial_{\overline{z}}\tilde{g}(z)(H_{\la^\#}-z)^{-1}
    \end{equation*}
    for $\la^\#\in\set{\la',\la''}$, where $\tilde{g}:\C\to\C$ is a quasi analytic extension of $g$ of order 2, \ie
    \begin{equation*}
        \tilde{g}(z)\equiv \tilde{g}(x+\iu y)=\sum_{k=0}^2 g^{(k)}(x)\frac{(\iu y)^k}{k!}\sigma(x,y)
    \end{equation*}
    with $\sigma(x,y)=\tau(y/\langle x\rangle)$, where $\tau:\R\to\R$ is a smooth function such that
    \begin{equation*}
        \tau(s)=
        \begin{cases}
            1 & \textup{if} \ |s|<1 \\
            0 &\textup{if} \ |s|>2
        \end{cases}
    \end{equation*}
    and $\langle x \rangle =(1+|x^2|)^{\frac{1}{2}}$. Notice that
    \begin{align*}
        &\partial_{\overline{z}}\tilde{g}(z)=\frac{1}{2}\bigg[ \frac{\partial}{\partial x}\tilde{g}(z)+\iu \frac{\partial}{\partial y}\tilde{g}(z)\bigg] \\
        &=\frac{1}{2}\sum_{k=0}^2g^{(k)}(x)\frac{(\iu y)^k}{k!}\bigg(\frac{\partial}{\partial x}\sigma(x,y)+\iu\frac{\partial}{\partial y}\sigma(x,y)\bigg)+\frac{1}{2}g^{(3)}(x)\frac{(\iu y)^2}{2!}\sigma(x,y).
    \end{align*}
    The modulus of the complex derivative can be estimated in the following way
    \begin{align*}
        |\partial_{\overline{z}}\tilde{g}(z)|\leq C\sum_{k=0}^2|g^{(k)}(x)|\frac{|y|^k}{k!}\chi_A(x,y)+\frac{1}{2} |g^{(3)}(x)|\frac{|y|^2}{2!}\chi_B(x,y)
    \end{align*}
    where $A=\set{(x,y):\langle x \rangle< |y|<2\langle x \rangle}$ and $B=\set{(x,y): |y|<2\langle x\rangle}$.

    Using the geometric resolvent identity we get
    \begin{equation*}
        g(H_{\la'})-g(H_{\la''})=\frac{\la''-\la'}{\pi}\int_\C\di z \, \partial_{\overline{z}}\tilde{g}(z)(H_{\la'}-z)^{-1}V_\omega(H_{\la''}-z)^{-1}.
    \end{equation*}
    Using the inequality $\Tr_{\C^n}|A|\leq n \norm{A}_{M_n(\C)}$, which holds for all $A\in M_n(\C)$, we get
    \begin{equation}\label{integral}
        \Tr_{\C^n}|(g(H_{\la'})-g(H_{\la''}))(0,\ga)|\leq n\frac{|\la'-\la''|}{\pi}\norm{V_\omega}_\infty\int_{\C}\di x\, \di y \, |\partial_{\overline{z}}\tilde{g}(z)| \, |y|^{-2} 
    \end{equation}
    where we used the estimate $\norm{(H_{\la^\#}-z)^{-1}}\leq |\textup{Im}z|^{-1}$ for $\la^\#\in\set{\la',\la''}$.

    In order to estimate the integral in \eqref{integral} we study separately the contributions in $\partial_{\overline{z}}\tilde{g}(z)$ coming from the derivatives $g^{(k)}(x)$ for $k=0,1,2, 3$.
    
    Starting with $k=3$ we have that
    \begin{align*}
        \int_\C \di x \, \di y |g^{(3)}(x)|\chi_B(x,y) &\leq \frac{2c_3}{\delta^3}\bigg(\int_{E-\delta}^E\di x \, \int_{0}^{2\langle x\rangle}\di y +\int_{\inf\sigma(H_{\la_2})-\delta}^{\inf\sigma(H_{\la_2})} \di x\,\int_0^{2\langle x\rangle}\di y\bigg)\\
        &\leq \frac{4 c_3}{\delta^2}\Big(4+\max\set{|E_1|,|E_2|}+|\inf\sigma(H_{\la_2})|\Big)
    \end{align*}
    where we used $\delta<1$. For $k=2$ we have
    \begin{align*}
        \int_\C \di x \, \di y |g^{(2)}(x)|\chi_A(x,y)&\leq \frac{2c_2}{\delta^2}\bigg(\int_{E-\delta}^E\di x \, \int_{\langle x\rangle}^{2\langle x\rangle}\di y +\int_{\inf\sigma(H_{\la_2})-\delta}^{\inf\sigma(H_{\la_2})} \di x\,\int_{\langle x\rangle}^{2\langle x\rangle}\di y\bigg)\\
        &\leq \frac{2 c_2}{\delta}\Big(4+ \max\set{|E_1|,|E_2|}+|\inf\sigma(H_{\la_2})|\Big)
    \end{align*}
    where we used $\delta<1$. For $k=1$ we have
    \begin{align*}
        &\int_\C \di x \, \di y |g^{(1)}(x)||y|^{-1}\chi_A(x,y) \\
        &\leq \frac{2c_1}{\delta}\bigg(\int_{E-\delta}^E\di x \, \int_{\langle x\rangle}^{2\langle x\rangle}\di y \, y^{-1} +\int_{\inf\sigma(H_{\la_2})-\delta}^{\inf\sigma(H_{\la_2})} \di x\,\int_{\langle x\rangle}^{2\langle x\rangle}\di y\, y^{-1}\bigg)= 4c_1\ln2.
    \end{align*}
    For $k=0$ we have
    \begin{align*}
        &\int_\C \di x \, \di y |g(x)||y|^{-2}\chi_A(x,y) \leq 2\int_{\inf\sigma(H_{\la_2})-\delta}^E\di x \, \int_{\langle x \rangle}^{2\langle x\rangle}\di y\, y^{-2}\\
        &=\int_{\inf\sigma(H_{\la_2})-\delta}^E\di x\,\langle x\rangle^{-1}\le (E-\inf\sigma(H_{\la_2})+\delta)\leq (E_2-\inf\sigma(H_{\la_2})+1)
    \end{align*}
    where in the last inequality we used $\delta< 1$.
    
    Therefore there exist a constant $C=C_{H_0,a,E_1,E_2,\la_2}>0$ such that we get
    \begin{equation*}
        \Tr_{\C^n}|(g(H_{\la'})-g(H_{\la''}))(0,\ga)|\leq \frac{n}{\pi}\norm{V_\omega}_\infty C|\la'-\la''|^{1-2\alpha}. 
    \end{equation*}
    Combining everything together we get
    \begin{align*}
        &\E\Big[\Tr_{\C^n}|(P_{\omega,E,\la'}-P_{\omega,E,\la''})(0,\ga)|\Big]\\
        &\leq  \frac{2^{2-\tau}n^2\pi C_\tau(\rho)}{\la_1^\tau}|\la'-\la''|^{\alpha\tau}+\frac{n}{\pi}\norm{V_\omega}_\infty C|\la'-\la''|^{1-2\alpha} \\
        &=\bigg( \frac{2^{2-\tau} n^2\pi C_\tau(\rho)}{\la_1^\tau}+ \frac{n}{\pi}\norm{V_\omega}_\infty C\bigg)|\la'-\la''|^{\frac{\tau}{\tau+2}}
    \end{align*}
    where we have chosen $\alpha=\frac{1}{\tau+2}$. Thus the proof is concluded.
\end{proof}

\goodbreak

\section{Dynamical localization regimes}\label{localization section}

In this section we state the dynamical localization results that apply to our setting in three regimes: for all energies  at strong disorder, shown in \cite{Aizenman Molchanov 1993} and \cite{Aizenman Graf 1998}, for energies away from the unperturbed spectrum, shown in \cite{Aizenman 1994}, and for internal and external band edges, shown in \cite{Figotin Klein 1994} using the technique of Multiscale Analysis (MSA). These results are well-known. In the context of disordered Chern insulators, the first two regimes haven been well studied, while the third regime, which requires MSA techniques, is less so (see e.g. \cite{De Nittis Drabkin Schulz-Baldes 2015, Becker Han 2022} for the discrete setting, and \cite{Germinet Klein Schenker 2007} for continuum models). In the following Theorem, we state under which requirements these localization results hold for disordered Chern insulators.

\begin{thm}\label{DL regimes}
    Assume that $H_{\la,\omega}$ satisfies \ref{item: P2},\ref{item: P4} and \ref{item: R1}. Then:
    \begin{enumerate}
        \item\label{strong disorder regime thm} (Strong disorder regime) 
        For $s\in (0,\tau), \mu>0$ define
        \begin{equation}\label{strong disorder threshold}
            \la_{s,\mu,\rho}(H_0)=\bigg[C_{s,\tau}(\rho)\sup_{(\ga,i)\in\Ga\times\set{1,\ldots,n}}\sum_{(\xi,j)\neq (\ga,i)} |\langle\delta_{\ga,i},H_0\,\delta_{\xi,j}\rangle|^s \eu^{\mu|\ga-\xi|}\bigg]^{\frac{1}{s}}
        \end{equation}
        where $C_{s,\tau}(\rho)=\frac{\tau C_\tau(\rho)^{\frac{s}{\tau}}}{\tau-s}$ and define $\la_\rho(H_0)=\inf_{s\in (0,\tau)}\inf_{\mu>0}\la_{s,\mu,\rho}(H_0)$. If $\la>\la_\rho(H_0)$ there exist $s\in (0,\tau), \mu>0$ such that
        \begin{equation}\label{FMM}
            \sup_{\eta\neq 0}\E\Big[|\langle\delta_{\ga,i},(H_{\la,\omega}-(E+\iu\eta))^{-1} \delta_{\xi,j}\rangle|^s \Big]\leq \frac{C_{s,\tau}(\rho)}{\la^s-\la_{s,\mu,\rho}(H_0)^s}\eu^{-\mu|\ga-\xi|}
        \end{equation}
        for all $\ga,\xi\in\Ga$, $i,j\in\set{1,\ldots,n}$ and $E\in\R$.
        
        \item\label{point 2 DL} (Localization away from the unperturbed spectrum) Assume, in addition, that $H_{\la,\omega}$ satisfies \ref{item: P1} and \ref{item: P3}. Let $E\in \G_i$ with $i\in\set{1,\ldots,N}$ and denote $\Delta(E)=\dist(E,\sigma(H_0))>0$. Define for $s\in(0,\tau)$
        \begin{equation}\label{sigma-moment regularity}
            D_{s,1}(\rho)=\displaystyle\sup_{z\in\C}\bigg(\int\frac{|v|^s}{|v-z|^s}\rho(\di v)\bigg)\bigg/\bigg(\int\frac{1}{|v-z|^s}\rho(\di v)\bigg).
        \end{equation}
        Consider $S_\alpha$ as defined in Proposition \ref{combes thomas}, with $\alpha>0$ such that $2 S_\alpha\leq \frac{|\G_i|}{2}$. Then given $s\in (0,\tau)$ there exists $C_{s,\alpha}=\displaystyle \frac{n|\G_i|^2}{2}\bigg(1+\frac{32}{s^2\alpha^2}\bigg)$ such that, if $(E,\la)\in \G_i\times \R_+$ satisfies
        \begin{equation}\label{weak far}
            \la_0(E)\leq \la< \frac{\Delta(E)^{1+\frac{2}{s}}}{(C_{s,\alpha}D_{s,1}(\rho))^{\frac{1}{s}}}
        \end{equation} 
        then $E\in\sigma(H_{\la,\omega})$ and there exist $C(E,\la)=C_{H_0,\rho,s}(E,\la)$ and $\mu=\mu_{s,\alpha}$ such that
        \begin{equation}\label{FMM weak1}
        \sup_{\eta\neq 0}\E\Big[|\langle\delta_{\ga,i},(H_{\la,\omega}-(E+\iu\eta))^{-1} \delta_{\xi,j}\rangle|^s \Big]\leq C(E,\la)\eu^{-\mu\Delta(E)|\ga-\xi|}
        \end{equation}
        for all $\ga,\xi\in\Ga$, $i,j\in\set{1,\ldots,n}$.

        \item\label{point band edge} (Band edge regime) Assume, in addition, that $H_{\la,\omega}$ satisfies \ref{item: P1},\ref{item: P3}  and \ref{item: R2}. Then there exists $\delta(\la)=\delta_{H_0,a,b,\beta,\tau,\rho,n}(\la)>0$ such that $H_{\la,\omega}$ exhibits DL in the following closed neighborhoods of the external band edges $\alpha_1-a\la,\beta_N+b\la$
        \begin{align*}
            &[\alpha_1-a\la,\alpha_1-a\la+\delta(\la)], &\quad [\beta_N+b\la-\delta(\la),\beta_N+b\la].
        \end{align*}
         Moreover, if \ref{item: P1} holds with $B\in 2\pi \Q$, then for all $\la<\frac{|\G_i|}{a+b}$ with $i\in\set{1,\ldots, N-1}$, we have that the gap $\G_i(\la)$ remains open and there exist $\delta_i(\la)=\delta_{|\G_i|,H_0,a,b,\beta,\tau,\rho,n}(\la)>0$ such that $H_{\la,\omega}$ exhibits DL in the following closed neighborhoods of the internal band edges $\beta_{i}+b\la,\alpha_{i+1}-a\la$
        \begin{align*}
            &[\beta_i+b\la-\delta_i(\la),\beta_i+b\la], &\quad [\alpha_{i+1}-a\la,\alpha_{i+1}-a\la+\delta_i(\la)].
        \end{align*}
    \end{enumerate}
\end{thm}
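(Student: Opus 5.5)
We prove the three parts separately, since they are classical results in three different localization regimes. In each case we invoke the appropriate method and only check that the hypotheses \ref{item: P1}--\ref{item: P4}, \ref{item: R1}, \ref{item: R2} are what the method needs for matrix-valued hopping on $\ell^2(\Ga;\C^n)$.

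For part \eqref{strong disorder regime thm} we follow the Aizenman--Molchanov fractional moment method, using the site variables $\omega_{\ga,i}$ indexed by $(\ga,i)\in\Ga\times\set{1,\ldots,n}$.
\begin{enumerate}
\item Write $G(x,y;z)=\langle\delta_x,(H_{\la,\omega}-z)^{-1}\delta_y\rangle$ with $x=(\ga,i)$ and $y=(\xi,j)$. The resolvent equation gives $(\la\omega_x-z)G(x,y)=\delta_{xy}-\sum_{w\neq x}H_0(x,w)G(w,y)$.
\item Raise this to the power $s<\tau$ and use $|\sum|^s\le\sum|\cdot|^s$.
\item Take the conditional expectation in $\omega_x$. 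Rank-one perturbation shows $G(w,y)$ is a Möbius function of $\omega_x$. The decoupling lemma from \ref{item: R1}, $\E_x[|\la\omega_x-z|^{-s}]\le C_{s,\tau}(\rho)\la^{-s}$ together with its re-sampling variant, then yields
\begin{equation*}
\E|G(x,y)|^s\le \frac{C_{s,\tau}(\rho)}{\la^s}\Big(\delta_{xy}+\sum_{w\ne x}|H_0(x,w)|^s\,\E|G(w,y)|^s\Big).
\end{equation*}
\item Multiply by $\eu^{\mu|\ga-\xi|}$, use $\eu^{\mu|\ga-\xi|}\le \eu^{\mu|\ga-\zeta|}\eu^{\mu|\zeta-\xi|}$, and take the supremum over $x$ of the weighted quantity. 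This is first done for a finite-volume restriction, where a priori finiteness holds, and then passed to the limit.
\item Since $\la>\la_{s,\mu,\rho}(H_0)$, the ratio $\la_{s,\mu,\rho}(H_0)^s/\la^s<1$. Iterating the inequality and summing the geometric series gives \eqref{FMM}.
\end{enumerate}
Finite hopping \ref{item: P2} is only used to ensure that the sum defining $\la_{s,\mu,\rho}$ is finite.

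For part \eqref{point 2 DL} we use Aizenman's weak-disorder argument away from $\sigma(H_0)$, as in \cite{Aizenman 1994}.
\begin{enumerate}
\item Expand $G_\la=G_0-\la G_0V_\omega G_\la$, where $G_0=(H_0-z)^{-1}$. By Proposition \ref{combes thomas}, $\norm{G_0(\ga,\xi;z)}\le S_\alpha^{-1}\eu^{-\alpha|\ga-\xi|}$ uniformly for $z$ near $E$, since $\dist(z,\sigma(H_0))\ge\Delta(E)$.
\item Use the ratio $D_{s,1}(\rho)$ to bound $\E[|\la\omega_w|^s|G_\la(w,y)|^s]$ by $D_{s,1}(\rho)\la^s\,\E|G_\la(w,y)|^s$.
\item This gives a subharmonicity inequality whose kernel has $\ell^1$ norm of order $C_{s,\alpha}D_{s,1}\la^s\Delta(E)^{-s-2}$. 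The factor $\Delta^{-2}$ comes from summing $\eu^{-s\alpha\Delta|\cdot|}$ over $\Z^2$.
\item Condition \eqref{weak far} makes this norm $<1$, and the same iteration as in part \eqref{strong disorder regime thm} yields \eqref{FMM weak1}.
\item The claim $E\in\sigma(H_{\la,\omega})$ follows from $\la\ge\la_0(E)$ and the almost-sure spectrum formula.
\end{enumerate}

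For part \eqref{point band edge} we run the multiscale analysis of \cite{Figotin Klein 1994, Germinet Klein 2001}.
\begin{enumerate}
\item The Wegner estimate follows from \ref{item: R1}, with $\la$-dependent constant as in \eqref{ids continuity energy}.
\item Independence at distance and the geometric resolvent inequality follow from \ref{item: P2} and \ref{item: P4}.
\item The bootstrap MSA of Germinet--Klein then upgrades any initial length-scale estimate to dynamical localization in the sense of \eqref{DL condition}.
\item The real work is the initial length-scale estimate near a band edge, say $E_0=\beta_N+b\la$. Periodic boundary conditions on boxes $\La_L$ are available from \ref{item: P1}. For $B\ne0$ this requires $B\in2\pi\Q$ and $L$ a multiple of the flux period, which is why that hypothesis appears for internal edges.
\item For external edges, eigenvalues of $H_{\la,\omega,L}$ in $[E_0-\delta,E_0]$ force many $\omega_{\ga,i}$ to lie within $\eps$ of $b$. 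This has probability $\lesssim (L^2\eps^{\beta})^{c}$ by \ref{item: R2}; a Lifshitz-tail type estimate.
\item For internal edges the corresponding argument goes through the band-edge analysis of the periodic operator restricted to the spectral subspace of bands $\le i$ (respectively $\ge i+1$). Here the gap condition $\la<|\G_i|/(a+b)$ keeps the gap open, so the edge is isolated.
\item Combined with Combes--Thomas decay at energies just outside the spectrum, this gives the initial estimate for scales $L\sim\delta^{-1/2}$. Taking $\beta>2$ makes the probability bound beat the $L^{2}$ volume factor as $\delta\to0$, which fixes $\delta(\la)$ and $\delta_i(\la)$.
\end{enumerate}

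I expect this internal band-edge initial estimate to be the main obstacle. It needs a quantitative statement that states of the periodic (magnetic, rational flux) operator near the internal edge $\beta_i$ require most potentials to sit near their extremal value. I would first try the Klopp-style argument, approximating via the Floquet fibres of $H_0$ on the enlarged magnetic cell and using that $\beta_i$ is a local extremum of a band function. Failing that, I would accept a weaker Lifshitz exponent, which is enough since only some polynomial decay beating $L^2$ is required.
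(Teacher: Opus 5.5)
Parts (1) and (2) follow the same route as the paper. For (1) the paper simply cites the Aizenman--Molchanov bound. For (2) it feeds a Combes--Thomas summation into Aizenman's weak-disorder bound \eqref{FMM weak2}.

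There is one slip in (2). The bound $\norm{G_0(\ga,\xi;z)}\le S_\alpha^{-1}\eu^{-\alpha|\ga-\xi|}$ holds only when $\Delta(E)\ge 2S_\alpha$. In general Proposition~\ref{combes thomas} gives $\frac{2}{\Delta(E)}\eu^{-\alpha\min\set{1,\Delta(E)/(2S_\alpha)}|\ga-\xi|}$. The hypothesis $2S_\alpha\le|\G_i|/2$, together with $\Delta(E)\le|\G_i|/2$, is what turns this rate into $2\alpha\Delta(E)/|\G_i|$. Your step 3 already uses a rate proportional to $\Delta(E)$, so this is cosmetic.

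The genuine gap is in part (3). You do not establish the initial length-scale estimate at the internal edges. The route you propose, internal Lifshitz tails via Floquet fibres in the style of Klopp, needs nondegeneracy information on how $\beta_i$ is attained by the magnetic Floquet band functions. That information is not among the hypotheses, so as written part (3) is not proved.

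Your scaling is also inconsistent. At distance $\delta$ from the spectrum, and for small $\delta$, Proposition~\ref{combes thomas} only gives a decay rate proportional to $\delta$. With $L\sim\delta^{-1/2}$ you therefore get $\eu^{-c\delta L}=\eu^{-c\sqrt\delta}$, i.e.\ no decay from the core of the box to its boundary. Moreover, under that scaling the factor $L^2\delta^\beta$ would only require $\beta>1$, so your bookkeeping does not explain the hypothesis $\beta>2$ in \ref{item: R2}.

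The missing idea is that no Lifshitz-tail information is needed: a union bound suffices.
\begin{itemize}
\item With $B=2\pi p/q$, $H_0$ is $q\Ga$-periodic. So you can form the $qL\Ga$-periodization $H^{qL}_{\la,\omega}=H_0+\la V^{qL}_\omega$, which repeats the configuration in $\La_{qL}$.
\item Let $\mathcal{E}_{qL}$ be the event that \emph{every} $\omega_{\ga,j}$ with $\ga\in\La_{qL}$ lies in $[-a,b-\Delta_{qL})$. On this event the periodic potential takes values in $[-a,b-\Delta_{qL}]$ everywhere, so deterministically $\sigma(H^{qL}_{\la,\omega})\subset\sigma(H_0)+\la[-a,b-\Delta_{qL}]$.
\item Since $\la<|\G_i|/(a+b)$ and $L$ is large, $E_0=\beta_i+b\la$ is at distance at least $\la\Delta_{qL}$ from this set. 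By Lemma~\ref{property-per} the same holds for the spectrum of the periodic restriction $\mathring{H}^{qL}_{\la,\omega,\La_{qL}}$.
\item Combes--Thomas for periodic restrictions gives decay $\eu^{-c\la\Delta_{qL}qL}$ from $\La_{(qL+2qr)/3}$ to $\partial\La^{(qr)}_{qL}$.
\item To reach the simple-boundary box of Definition~\ref{definition good box}, use the resolvent identity between $\mathring{H}^{qL}_{\la,\omega,\La_{qL}}$ and $H_{\la,\omega,\La_{qL}}$. Their difference lives on $\partial\La^{(qr)}_{qL}\times\partial\La^{(qr)}_{qL}$, and the Wegner bound $\norm{G_{\la,\omega,\La_{qL}}(E_0)}\le(qL)^\theta$ controls the correction term.
\item The only probabilistic cost is $\mathbb{P}(\mathcal{E}_{qL}^\complement)\le n(qL)^2\rho([b-\Delta_{qL},b])\le Cn(qL)^2\Delta_{qL}^\beta$.
\item Choosing $\Delta_{qL}\asymp\la^{-1}\log(qL)/(qL)$ makes the Combes--Thomas factor an arbitrarily large negative power of $qL$. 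At the same time $(qL)^2\Delta_{qL}^\beta\to0$ exactly when $\beta>2$.
\end{itemize}

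Periodic rather than simple boundary conditions are essential at internal edges. The compression $\chi_{\La}H_0\chi_{\La}$ may have spectrum inside $\G_i$ (edge states when the Chern number is nonzero), so no such spectral inclusion holds for it.

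At the external edges, by contrast, compression cannot push the spectrum above $\beta_N+\la(b-\Delta)$. There, simple boundary conditions and the same union bound work with no condition on $B$. Your ``many sites'' mechanism in step 5 is unnecessary: one site near $b$ is all the complement event requires.
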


\begin{rmk}
    In Theorem \ref{DL regimes}-\eqref{point band edge}, we have an explicit dependence of $\delta$ on $\la$. More precisely, given $\eps\in(0,1)$ there exists $C_\eps=C_{H_0,a,b,\beta,\tau,\rho,n,\eps}>0$ such that
    \begin{gather*}
        \delta(\la)=C_\eps\min\set{1,\la^{\frac{\beta}{(\beta-2)(1-\eps)}}},\quad \delta_i(\la)=C_\eps\min\set{1,\la^{\frac{\beta}{(\beta-2)(1-\eps)}},\left(\frac{|\G_i|}{a+b}-\la\right)^{\frac{1}{1-\eps}}}.
    \end{gather*}
    for $i\in\set{1,\ldots,N-1}$. Notice that, if $\rho$ admits exponentially decaying tails, then we can take $\beta\to \infty$, yielding a linear dependence of $\delta$ on $\la$ if $\eps\to 0$, for sufficiently small values of $\la$. See Appendix \ref{band edge region} for a detailed proof.
    The problem of understanding the shape of the curve that separates the regions of DL and DD has been investigated in several papers, see \eg \cite{Aizenman 1994, Wang 2001, Germinet Klein Schenker 2007,Elgart 2008}.
\end{rmk}

\begin{rmk}
    The condition appearing in \eqref{FMM},\eqref{FMM weak1} is the outcome of the Fractional moment method (FMM), and is a sufficient condition for DL, under the assumption of bounded density, as shown in \cite[Appendix A]{Aizenman Schenker Friedrich Hundertmark 2001} and \cite[Thm. 2.2]{Aizenman Elgart Naboko Schenker Stolz 2006}. If we only assume $\tau$-H\"{o}lder regularity of the distribution \ref{item: R1} then we obtain a Wegner estimate of the form \eqref{wegner estimate}, and we can apply \cite[Prop. 5.2]{Elgart Tautenhahn Veselic 2011} to deduce that the outcome of FMM implies the outcome of MSA, which in turn is equivalent to the condition of DL.
\end{rmk}
    
\subsection{Vanishing of the Chern character in the strong disorder regime}\label{strong disorder regime}

Condition $\la>\la_\rho(H_0)$ describes the so-called \emph{strong disorder regime}, in which it is possible to prove the outcome of FMM \eqref{FMM}. The latter condition implies exponential decay of the expectation of the spectral projection, as proved in \cite[Thm. 3]{Aizenman Graf 1998}. More precisely, for all $\la>\la_\rho(H_0)$ there exists $C=C_{H_0,a,b,\rho,\tau,\la}<\infty$, $\mu>0$ such that
    \begin{equation}\label{aizenman graf exp decay}
        \sup_{E\in\sigma(H_{\la,\omega})}\E\Big[\Tr_{\C^n}|P_{E,\la,\omega}(\ga,\xi)|\Big]\leq C\eu^{-\mu|\ga-\xi|}.
    \end{equation}
Consequently the Chern character associated to $P_{E,\la,\omega}$ vanishes for all energies in the spectrum in the strong disorder regime. We have:

\begin{prop}\label{chern zero strong disorder}
        Assume that $H_{\la,\omega}$ satisfies hypothesis \ref{item: P1},\ref{item: P2},\ref {item: P4} and \ref{item: R1}. Then, for all $\la>\la_\rho(H_0)$, it holds that
        \begin{equation*}
            \E[C(P_{E,\la,\omega})]=0 \quad \textup{for all} \ E\in\sigma(H_{\la,\omega}).
        \end{equation*}
\end{prop}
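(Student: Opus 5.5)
Fix $\la>\la_\rho(H_0)$. The plan is to show that the integer-valued map $E\mapsto\E[C(P_{E,\la,\omega})]$ is well defined and constant on all of $\R$, and then to evaluate it at an energy below the spectrum, where the projection is zero.

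The first step is uniform decay of the Fermi projection over all energies. The bound \eqref{aizenman graf exp decay} comes from \cite[Thm.~3]{Aizenman Graf 1998}, applied to the fractional moment bound \eqref{FMM} of Theorem \ref{DL regimes}-\eqref{strong disorder regime thm}. Since \eqref{FMM} holds uniformly in $E\in\R$, I expect the argument to give the bound uniformly for all $E\in\R$, not only on $\sigma(H_{\la,\omega})$. Outside the spectrum it holds trivially:
\begin{itemize}
\item for $E<\inf\sigma(H_{\la,\omega})$ we have $P_{E,\la,\omega}=0$ almost surely;
\item for $E\geq\sup\sigma(H_{\la,\omega})$ we have $P_{E,\la,\omega}=\Id$, whose kernel is supported on the diagonal.
\end{itemize}
Next I convert this $\Tr_{\C^n}|\cdot|$ bound into the cubic moment needed in \eqref{sobolev 3 condition}. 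Since $\norm{P_{E,\la,\omega}(0,\ga)}\leq1$, the Schatten monotonicity argument from the proof of Theorem \ref{continuity chern disorder} gives
\[
\E\Big[\Tr_{\C^n}|P_{E,\la,\omega}(0,\ga)|^3\Big]^{1/3}\leq n^{2/3}\,\E\Big[\Tr_{\C^n}|P_{E,\la,\omega}(0,\ga)|\Big]^{1/3}\leq n^{2/3}C^{1/3}\eu^{-\mu|\ga|/3}.
\]
Hence
\[
\sup_{E\in\R}\sum_{\ga\in\Ga}|\ga|\,\E\Big[\Tr_{\C^n}|P_{E,\la,\omega}(0,\ga)|^3\Big]^{1/3}<\infty .
\]

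With this uniform bound, the second step applies Theorem \ref{continuity chern energy} with $I=\R$, or with an interval strictly containing the almost sure spectrum $[\alpha_1-a\la,\beta_N+b\la]$. This uses \ref{item: P1}, \ref{item: P2}, \ref{item: P4} and gives that $E\mapsto\E[C(P_{E,\la,\omega})]\in\Z$ is continuous on $I$. A continuous integer-valued function on an interval is constant. Alternatively, Theorem \ref{continuity chern disorder} applies with $\mathcal R=I\times\set{\la}$, since \ref{item: R1} gives \eqref{ids disorder} through Remark \ref{off diagonal}.

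The third step evaluates the constant. For $E_-<\alpha_1-a\la$ we have $P_{E_-,\la,\omega}=0$ almost surely, so $\mathfrak{C}_P=0$ and $C=0$. Therefore $\E[C(P_{E,\la,\omega})]=0$ for every $E\in\sigma(H_{\la,\omega})$. The top edge gives the same conclusion consistently, since $C(\Id)=0$.

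The main obstacle is the uniformity in $E$ of \eqref{aizenman graf exp decay}, including energies slightly outside the spectrum, so that the interval $I$ in Theorem \ref{continuity chern energy} reaches an energy where $P=0$. If that uniformity cannot be read off directly, I would instead take $I=[\alpha_1-a\la,\beta_N+b\la]$ and use right-continuity at the bottom edge. At $E=\alpha_1-a\la$, Delyon--Souillard (no atoms in the IDS) gives $\E[\Tr_{\C^n}P_{E,\la,\omega}(0,0)]=0$. Because $P\geq0$ and the Chern character is ergodic, this forces $P_{E,\la,\omega}=0$ almost surely, so $C=0$ at the left endpoint of $I$, and continuity on $I$ gives the result.
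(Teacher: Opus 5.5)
Your proposal is correct and follows the paper's proof essentially step for step. The paper likewise takes uniform decay from \eqref{aizenman graf exp decay} on an interval $[E_-,E_+]$ with $E_\pm$ in the external gaps, applies Theorem \ref{continuity chern energy} to get constancy in energy, and concludes from $P_{E_-,\la,\omega}=0$ (it also notes $P_{E_+,\la,\omega}=\Id$ with $C(\Id)=0$). The uniformity you flag as the main obstacle is harmless: in the external gaps the projection is $0$ or $\Id$, so it contributes nothing to $\sum_{\ga}|\ga|\,\E\big[\Tr_{\C^n}|P_{E,\la,\omega}(0,\ga)|^3\big]^{1/3}$, and the Delyon--Souillard fallback is therefore unnecessary. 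Your explicit conversion from the $\Tr_{\C^n}|\cdot|$ bound to third moments fills in a step the paper leaves implicit.
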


\begin{proof}
    Given $\la>\la_\rho(H_0)$, consider $E_-\in \G_0(\la)$ and $E_+\in \G_N(\la)$. By \eqref{aizenman graf exp decay} we have that
    \begin{equation*}
        \sup_{E\in[E_-,E_+]}\sum_{\ga\in\Ga}|\ga|\E\Big[\Tr_{\C^n}|P_{E,\la,\omega}(0,\ga)|^3\Big]^{\frac{1}{3}}<\infty.
    \end{equation*}
    Therefore, by Theorem \ref{chern enegy continuity}, it holds that $\E[C(P_{E,\la,\omega})]=\E[C(P_{E_\pm,\la,\omega})]=0$ for all  $E\in\sigma(H_{\la,\omega})$, because $P_{E_-,\la,\omega}=0$, $P_{E_+,\la,\omega}=\Id$ and $C(0)=0=C(\Id)$.
\end{proof}

\goodbreak

\subsection{Localization away from the unperturbed spectrum}\label{section localization away}

The localization regime away from the unperturbed spectrum was investigated in \cite{Aizenman 1994}. To study it, one defines 
for $s\in (0,\tau)$, $\mu>0$
\begin{gather}\label{sigma moment regular}
    \widehat{\la}_{s,\mu,\rho}(E)=\left[D_{s,1}(\rho)\sup_{\eta\neq 0}\sup_{\substack{\ga\in\Ga\\i\in\set{1,\ldots,n}}}\sum_{\substack{\xi\in\Ga \\ j\in\set{1,\ldots,n}}}|\langle\delta_{\ga,i},(H_0+(E+\iu\eta))^{-1}\delta_{\xi,j}\rangle|^s\eu^{\mu|\ga-\xi|}\right]^{-\frac{1}{s}}
\end{gather}
and $\widehat{\la}_{\rho}(E)=\sup_{s\in (0,\tau)}\sup_{\mu>0}\widehat{\la}_{s,\mu,\rho}(E).$ Under assumptions \ref{item: P2} (or exponentially decaying off-diagonal matrix elements), \ref{item: P4} and \ref{item: R1} the author showed that, if $(E,\la)\in \R\times \R_+$ satisfies $\la<\widehat{\la}_\rho(E)$, then there exist $s\in (0,\tau)$ and $\mu>0$ such that
\begin{equation}\label{FMM weak2}
    \sup_{\eta\neq 0}\E\Big[|\langle\delta_{\ga,i},(H_{\la,\omega}-(E+\iu\eta))^{-1} \delta_{\xi,j}\rangle|^s \Big]\leq \frac{1}{D_{s,1}(\rho)(\widehat{\la}_{s,\mu,\rho}(E)^s-\la^s)}\eu^{-\mu|\ga-\xi|}
\end{equation}
for all $\ga,\xi\in\Ga$, $i,j\in\set{1,\ldots,n}$ (see \cite[Thm. 10.4]{Aizenman Warzel 2016}). For an application of this localization result in the context of Bogoliubov-de Gennes Hamiltonians, we refer to \cite[Thm. 1]{De Nittis Drabkin Schulz-Baldes 2015}.

Condition $\la<\widehat{\la}_\rho(E)$ is known as the \emph{weak disorder regime} in the literature, in which it is possible to show DL for sufficiently small values of the disorder parameter, see \cite[Theorem 1.1]{Aizenman 1994}. Notice, however, that the result might be void when applied to internal spectral gaps, if the support of the distribution is bounded, since the latter inequality might be satisfied only by values $E$ in the spectral gap of $H_{\la,\omega}$. Therefore, in order to have a meaningful result, one has to check that there exist energies $E\in\sigma(H_{\la,\omega})$ such that $\la<\widehat{\la}_\rho(E)$. To do that we need to require the disorder to be not too small and the energy to be far enough from the unperturbed spectrum, which leads to condition \eqref{weak far} (see Fig. \ref{fig 6}). 

\begin{figure}[ht]
    \centering
    \includegraphics[width=0.55\textwidth]{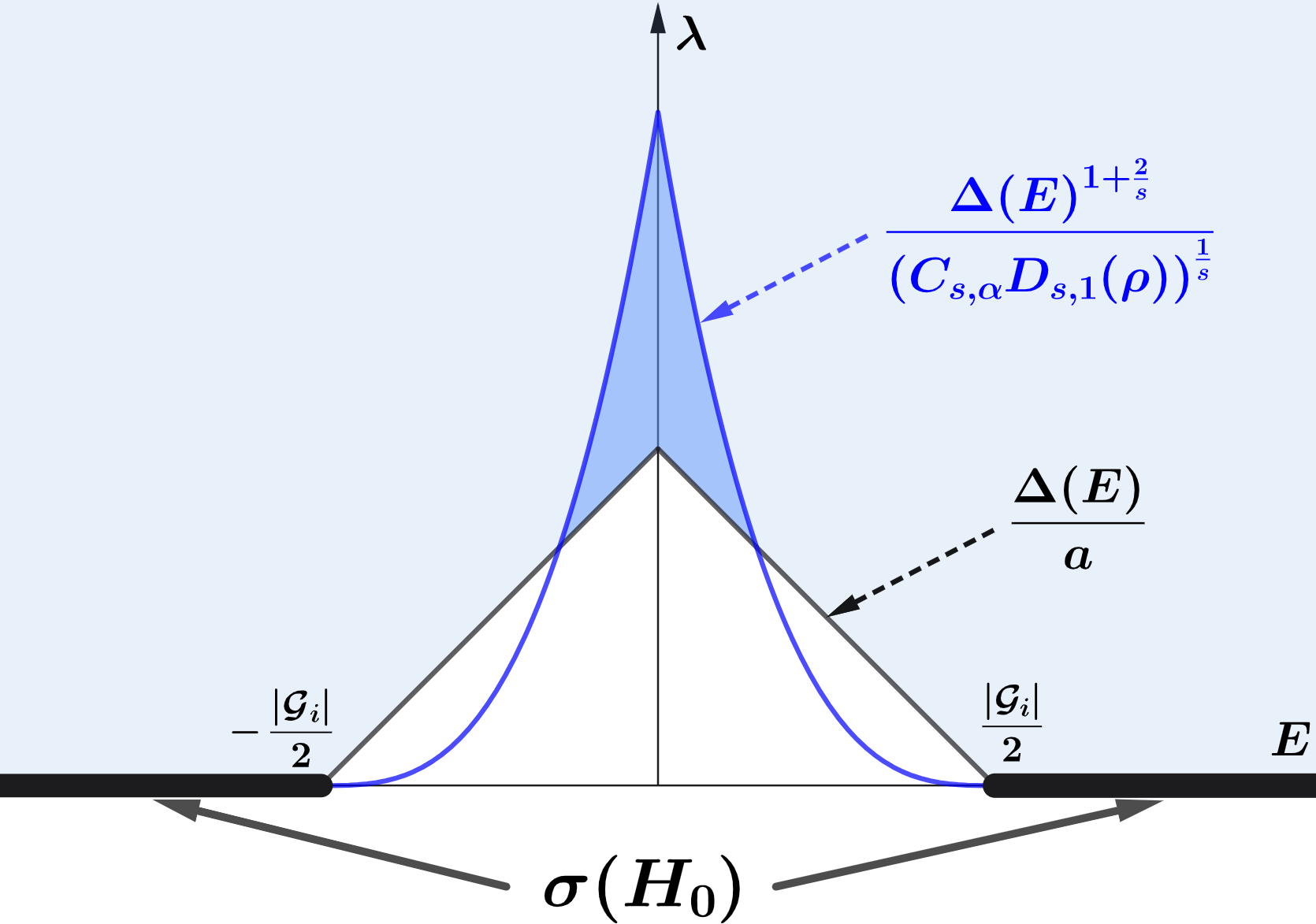}
    \caption{\small{(Refer to Theorem \ref{DL regimes}-(2)) Plot in light blue of the region of localization away from the unperturbed spectrum, described by condition \eqref{weak far}, for Hamiltonians $H_0$ with a spectral gap $\G_i$ centered in $0$ and distribution supported in $[-a,a]$.}}
    \label{fig 6}
\end{figure}

In order to clarify the expression of the constant $C_{s,\alpha}$ in \eqref{weak far}, 
we give a proof of Proposition \ref{DL regimes}-\eqref{point 2 DL} for the convenience of the reader.

\begin{proof}[Proof of Proposition \ref{DL regimes}-\eqref{point 2 DL}]
    By Proposition \ref{combes thomas}, since $2S_\alpha\leq \frac{|\G_i|}{2}$, we have that
        \begin{align*}
            \sup_{\eta\neq 0}|\langle \delta_{\ga,i},(H_0-(E+\iu\eta)^{-1}\delta_{\xi,j}\rangle|&\leq \frac{2}{\Delta(E)}\exp\bigg(-\alpha\min\set{\frac{\Delta(E)}{2 S_\alpha},1}|\ga-\xi|\bigg) \\
            &\leq  \frac{2}{\Delta(E)}\exp\bigg(-\alpha\min\set{\frac{2\Delta(E)}{|\G_i|},1}|\ga-\xi|\bigg).
        \end{align*}
    Since $E\in\G_i$ we have that $\Delta(E)\leq \frac{|\G_i|}{2}$ and thus 
    \begin{equation*}
         \sup_{\eta\neq 0}|\langle \delta_{\ga,i},(H_0-(E+\iu\eta)^{-1}\delta_{\xi,j}\rangle|\leq \frac{2}{\Delta(E)}\exp\bigg(-\frac{2\alpha \Delta(E)}{|\G_i|}|\ga-\xi|\bigg).
    \end{equation*}
    Given $s\in (0,\tau)$ consider $\mu=\frac{s\alpha}{|\G_i|}$ then, using $|\ga|\geq |\ga|_\infty$, we have
    \begin{align*}
        &\sum_{(\xi,j)\in \Ga\times\set{1,\ldots,n}} |\langle \delta _{\ga,i}, (H_0-(E+\iu\eta))^{-1}\delta_{\xi,j}\rangle|^s\eu^{\mu \Delta(E)|\ga-\xi|}\\
        &\leq \frac{2^sn}{\Delta(E)^s} \sum_{\ga\in\Ga}\exp\bigg(-\frac{s\alpha \Delta(E)}{|\G_i|}|\ga|_\infty\bigg)= \frac{2^sn}{\Delta(E)^s}\bigg(1+ 8\sum_{k=1}^\infty k \exp\bigg(-\frac{s\alpha \Delta(E)}{|\G_i|}k\bigg)\bigg)
    \end{align*}
    where in the last equality we used $\#\set{n\in \Z^2: \max\set{|n_1|,|n_2|}=k}=8k$. Notice that, if $y>0$, we have
    \begin{equation*}
        \sum_{k=1}^\infty k(\eu^{-y})^k=\eu^{-y}\frac{\di}{\di x}\bigg(\sum_{k=0}^\infty \eu^{-ky}\bigg)=\frac{\eu^{-y}}{(1-\eu^{-y})^2}=\frac{1}{y^2}\frac{y^2 \eu^{-y}}{(1-\eu^{-y})^2}\leq \frac{1}{y^2}.
    \end{equation*}
    Hence we get
    \begin{align*}
         &\sum_{(\xi,j)\in \Ga\times\set{1,\ldots,n}} |\langle \delta _{\ga,i}, (H_0-(E+\iu\eta))^{-1}\delta_{\xi,j}\rangle|^s\eu^{\mu \Delta(E)|\ga-\xi|} \\
         &\leq \frac{2^sn}{\Delta(E)^s}\bigg(1+8\bigg(\frac{|\G_i|}{s\alpha \Delta(E)}\bigg)^2\bigg)
         \leq \frac{n|\G_i|^2}{2\Delta(E)^{s+2}}\bigg(1+\frac{32}{s^2\alpha^2}\bigg)=\frac{C_{s,\alpha}}{\Delta(E)^{s+2}}
    \end{align*}
    where in the second inequality we used $\Delta(E)\leq \frac{|\G_i|}{2}.$ Therefore, if the upper bound in \eqref{weak far} is satisfied, we obtain
    \begin{align*}
        \la^s D_{s,1}(\rho)\sup_{\eta\neq 0}\sup_{(\ga,i)\in\Ga\times \set{1,\ldots,n}}\sum_{(\xi,j)\in \Ga\times \set{1,\ldots,n}}|\langle \delta_{\ga,i},(H_0-(E+\iu\eta))^{-1}\delta_{\xi,j}\rangle|^s \eu^{\mu \Delta(E)|\ga-\xi|}<1.
    \end{align*}
    As a consequence we have that $\la<\widehat{\la}_\rho(E)$, defined in \eqref{sigma moment regular}, and so the Fractional Moment bound \eqref{FMM weak2} holds.
\end{proof}
If we assume that $\supp(\rho)=[-a,a]$, then a sufficient condition for the regime described by \eqref{weak far} to be non empty, is to require $D_{s,1}(\rho)$ to be bounded uniformly in $a$. Indeed in this way we would have, for values of $a$ large enough, that the light blue region in Fig. \ref{fig 6} would enter inside the gapless spectrum. However notice that the uniformity condition of $D_{s,1}(\rho)$ on $a$ is not always guaranteed for every distribution. In fact the latter is not valid in the case of the uniform distribution $\rho(\di v)=\frac{1}{2a}\chi_{[-a,a]}(v)\di v$, since
\begin{equation*}
    a^s\geq D_{s,1}(\rho)\geq 2a\bigg(\int_{-a}^a   \frac{\di v}{|v|^s}\bigg)^{-1}=a^s(1-s).
\end{equation*}
A sufficient condition for the uniformity of $D_{s,1}(\rho)$ on $a$ was given in \cite[Lemma A1]{Aizenman 1994}, in which the author shows, under the assumption of bounded density $\rho(\di v)=\rho(v)\di v$, that $D_{s,1}(\rho)$ can be bounded in terms of the $L^{1+q}$-norm of the density and of the corresponding $t$-moment, for $t\in(0,1]$ and $q>0$.
More precisely, assume that there exist positive constants $B,C$ such that
\begin{align*}
    &\int|v|^t \rho(v)\di v\leq B<\infty &
     \int \rho(v)^{1+q}\ \di v\leq C<\infty
\end{align*}
for some $t\in (0,1]$ and $q>0$. Then, for all $s<\Big[1+\frac{2}{t}+\frac{1}{q}\Big]^{-1}$ there exists $K=K(B,C,s,t,q)$ such that $D_{s,1}(\rho)\leq K$. Moreover, keeping track of the constants in the proof of \cite[Lemma A1]{Aizenman 1994}, we obtain an explicit expression for $K$ given by
\begin{equation}\label{explicit constant}
    K=\max\set{5(2B)^{\frac{s}{t}},2^{2s+1}B^{\frac{s}{t}}(1+B^{\frac{s}{t}}C_{p,q})} 
\end{equation}
where $\displaystyle C_{p,q}=1+\frac{p(2^qC)^{\frac{1}{1+q}}}{\frac{q}{1+q}-p}$ and $ p\displaystyle=\frac{s}{1-\frac{2s}{t}}$.

In Section \ref{section far haldane} we will use the previous estimate in the case of the Haldane-Anderson model in order to prove DL inside the bulk of the gapless spectrum. As a consequence we show the existence of a metal-insulator transition in disorder and we give an estimate on the location of the delocalization region in energy-disorder plane. 

\subsection{Band edge localization}\label{band edge}

Localization at the band edges of the spectrum is proved using the techniques of MSA. In order to introduce them, we define the restriction of $H_{\la,\omega}$ to the box $\La_L\subset \Ga$ with simple boundary conditions, and the corresponding resolvent operator, by
\begin{equation}\label{defn:simple-bc}
    H_{\la,\omega,\La_L}=\chi_{\La_L}H_{\la,\omega}\chi_{\La_L}, \quad G_{\la,\omega,\La_L}(z)=(H_{\la,\omega,\La_L}-z)^{-1}
\end{equation}
for $z\notin\sigma(H_{\la,\omega,\La_L})$, where $\La_L$ is defined in \eqref{box}.
    
Next, we recall the notion of “good boxes" for random operators satisfying \ref{item: P2}. This ensures a good decay of the finite-volume resolvent from the center of the box to its (inner) boundary $\partial\La_L^{(r)}$ given by
\begin{equation}\label{boundary-box}
    \partial\La_L^{(r)}=\{\xi\in \La_L:\exists \,\eta\in\Ga\setminus\La_L, |\xi-\eta|_\infty\leq r\}\subset \Gamma
\end{equation}
where $r\in \N^*$ is the range of the hopping terms in \ref{item: P2}.

\begin{dfn}\label{definition good box}
    Given $\theta>0, E\in\R$ and $L\in 3\N^*+4r$, we say that the box $\La_L$ is $(\omega,\theta,E)$-suitable with range $r$, if $E\notin\sigma(H_{\la,\omega,\La_L})$ and
    \begin{equation*}
        \norm{G_{\la,\omega,\La_L}(\ga,\xi;E)}\leq \frac{1}{L^{\theta}},
    \end{equation*}
    for all $\ga\in\La_{\frac{L+2r}{3}}, \xi\in\partial\La_L^{(r)}$, where $\norm{\cdot}$ is the operator norm on $M_n(\C)$.
\end{dfn}

\begin{rmk}
    The notion of suitable box given in definition \ref{definition good box} is slightly different from the one introduced in \cite[Def. 3.1]{Germinet Klein 2001} because we are dealing with random operators of finite hopping $r>1$. The choice of the core $\La_{\frac{L+2r}{3}}$ of the box $\La_L$ comes from the following observation: for each $\ga$ in the outer boundary $\La_{L+2r}\setminus\La_L$ there exists a unique $\xi\in \Ga$ with $|\xi|_\infty=\frac{L+2r}{3}$ such that $\ga\in \La_{\frac{L+2r}{3}}+\xi$. Thus we are able, using the geometric resolvent identity, to move from the core of a box to its outer boundary, which allows to initialize the bootstrap Multiscale Analysis discussed in \cite{Germinet Klein 2001}.
\end{rmk}

The band edge localization result stated in Theorem \ref{DL regimes}-\eqref{point band edge} is a consequence of the following result from \cite{Figotin Klein 1994}, the so-called initial length scale estimate for MSA at external band edges, and at internal band edges for small disorder. We give a proof adapted to our setting in Appendix \ref{Appendix band edge} for the convenience of the reader. 

\begin{thm}[\protect{\cite[Lemma 2.19]{Figotin Klein 1994}}]\label{inital length estimate}
    Assume that $H_{\la,\omega}$ satisfies hypothesis \ref{item: P1}-\ref {item: P4}, \ref{item: R1} and \ref{item: R2}. Then for all $\theta>\frac{2}{\tau}$ it holds that
    \begin{align}
        &\limsup_{L\to\infty}\mathbb{P}(\La_L \ \textup{is} \ (\omega,\theta,\alpha_1-a\la)\textup{-suitable wih range} \ r)=1 \label{input 0 or} \\
        &\limsup_{L\to\infty}\mathbb{P}(\La_L \ \textup{is} \ (\omega,\theta,\beta_N+b\la)\textup{-suitable with range} \ r)=1. \label{input N or}
    \end{align}
    Moreover, if \ref{item: P1} holds with $B\in 2\pi \frac{p}{q}$, with $p\in \N,q\in \N^*$, then for all $\la<\frac{|\G_i|}{a+b}$ with $i\in\set{1,\ldots, N-1}$, for all $\theta>\frac{2}{\tau}$ it holds that
    \begin{align}
        &\limsup_{L\to\infty}\mathbb{P}(\La_{qL} \ \textup{is} \ (\omega,\theta,\beta_i+b\la)\textup{-suitable with range} \ qr)=1 \label{input 1 or} \\
        &\limsup_{L\to\infty}\mathbb{P}(\La_{qL} \ \textup{is} \ (\omega,\theta,\alpha_{i+1}-a\la)\textup{-suitable with range} \ qr)=1. \label{input 2 or}
    \end{align}
\end{thm}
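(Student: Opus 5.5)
We follow the Figotin--Klein strategy: we reduce suitability of a box at a band edge to two probabilistic facts. The first is a Lifshitz-tail type estimate, which says that with high probability $H_{\la,\omega,\La_L}$ has no spectrum within distance $\sim L^{-\kappa}$ of the edge, for some $\kappa<2$ (one could first try $\kappa=2-\eps$). The second is a Combes--Thomas bound, which turns such a spectral gap into decay of the finite-volume Green function from the core to the boundary. Take the bottom edge $E_0=\alpha_1-a\la$ and denote by $E_L$ the spectral distance. Proposition~\ref{combes thomas}, applied to $H_{\la,\omega,\La_L}$ at distance $\eta=L^{-\kappa}$, gives
\begin{equation*}
\norm{G_{\la,\omega,\La_L}(\ga,\xi;E_0)}\le C\eta^{-1}\eu^{-c\,\eta|\ga-\xi|},
\end{equation*}
and $|\ga-\xi|\gtrsim L$ for $\ga$ in the core and $\xi$ on $\partial\La_L^{(r)}$. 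Since $\kappa<1$ is not available, we must be careful here. The decay rate is only $\eta L=L^{1-\kappa}$, which fails when $\kappa\ge1$. We therefore plan to use instead the resolvent bound $\eta^{-1}$ at distance $\eta\gg L^{-\theta}$ only after an improved argument. Namely, we split the box into subboxes of side $\ell=L^{\kappa'}$ with $\kappa'<1$, show each subbox has spectral gap $\ge \ell^{-2+\eps}\ge L^{-\kappa''}$ with $\kappa''<1$, and then apply Combes--Thomas to the big box. Simple boundary conditions make restriction to subboxes raise the bottom of the spectrum, because the perturbation by $\chi_{\La}$ decouples. The result is a decay $\exp(-cL^{1-\kappa''})\ll L^{-\theta}$.

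Lifshitz step at external edges. The lower edge of $\sigma(H_0)$ is $\alpha_1$, which is achieved in the periodic (magnetic) Bloch picture. We use a trial-state / Temple-type argument. Write $H_{\la,\omega,\La_\ell}-E_0=(H_{0,\La_\ell}-\alpha_1)+\la(V_\omega+a)$, where both terms are nonnegative up to boundary errors of order $\ell^{-2}$, since simple boundary conditions only increase energy for the compressed operator ($\chi H_0\chi\ge\alpha_1$ on range). If the ground energy is below $E_0+\ell^{-2+\eps}$, then the ground state is spread and, by a quantitative uncertainty principle, a positive fraction of the $\ell^2$ sites must have $\omega_{\ga,i}+a\le \ell^{-2+\eps}/\la$. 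By \ref{item: R2} each site has this with probability $\le C\ell^{(-2+\eps)\beta}$. A large-deviation/union bound over an $\ell^2\cdot n$-site box then gives probability $\le \ell^{2}\ell^{-(2-\eps)\beta}\to0$ when $\beta>2$ and $\eps$ is small; we also keep only sites near the minimizer. The summability condition after taking a union over $(L/\ell)^2$ subboxes is $2-(2-\eps)\beta\cdot(\text{stuff})<0$, and the exponent choices are tuned exactly by $\beta>2$. The top edge is symmetric.

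Internal edges. For $B=2\pi p/q$, magnetic translations by $q\Ga$ commute, so $H_0$ is genuinely periodic on the sublattice $q\Ga$. We use this to work in boxes $\La_{qL}$ with range $qr$ so that Floquet theory applies. The internal edge $\beta_i$ is then an extremum of finitely many band functions. Using $\la<|\G_i|/(a+b)$, the gap $\G_i(\la)$ is open, and we use a spectral projection onto bands $\le i$ to reduce to an "external edge" problem for the compressed operator. Concretely, we show that near $\beta_i+b\la$, states are dominated by band $\le i$ with cross terms controlled by the gap, via a Feshbach/Schur complement. The main obstacle is this reduction, together with the uncertainty principle near a band extremum that may be degenerate; there we only need a polynomial bound $\ell^{-2}$, obtained from the $C^2$ regularity of the band functions and a quadratic upper bound near the maximum.

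Finally, combining the Lifshitz estimate, the restriction monotonicity, Combes--Thomas, and the Wegner bound (from \ref{item: R1}) shows that the probability of suitability tends to $1$ for any $\theta>2/\tau$. The Wegner bound handles the requirement $E_0\notin\sigma$ and the polynomial prefactors, which is where $\theta>2/\tau$ enters.
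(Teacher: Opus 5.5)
There is a genuine gap, and it is at the internal edges, which are the main content of the statement. At every edge your plan is to get a spectral gap of size $\gg L^{-1}$ around $E_0$ for the \emph{simple-boundary} restriction $H_{\la,\omega,\La_{qL}}$, and then apply Combes--Thomas to that operator. At $E_0=\beta_i+b\la$ this fails in general. Truncation with simple boundary conditions creates boundary states inside $\G_i(\la)$. For a non-trivial Chern insulator, which is exactly the setting of this paper, bulk--edge correspondence spreads edge-state eigenvalues across the whole gap with spacing of order $L^{-1}$. So no gap of size $\gg L^{-1}$, not even $\log L/L$, is available for $H_{\la,\omega,\La_{qL}}$ near $E_0$. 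These eigenvalues come from the boundary, not from the random variables or the bulk bands, so neither a probabilistic estimate nor a Feshbach/Schur reduction onto the bands below $\G_i$ can remove them.

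The missing idea is to avoid the boundary altogether:
\begin{itemize}
\item Use $B=2\pi p/q$ (so $H_0$ is $q\Ga$-periodic) to periodize the random potential of the box, $H^{qL}_{\la,\omega}=H_0+\la V^{qL}_\omega$.
\item Work with the periodic restriction $\mathring{H}^{qL}_{\la,\omega,\La_{qL}}$. On the event that all $\omega_{\ga,j}$, $\ga\in\La_{qL}$, lie in $[-a,b-\Delta)$, its spectrum is contained in $\sigma(H^{qL}_{\la,\omega})\subset\sigma(H_0)+\la[-a,b-\Delta]$. Since $\la<|\G_i|/(a+b)$, this is at distance $\ge\la\Delta$ from $E_0$.
\item The Combes--Thomas bound for periodic restrictions then controls $\mathring{G}$ from the core to $\partial\La^{(qr)}_{qL}$, using $\dist(\ga-\xi,qL\Ga)\ge qL/8$.
\item Return to simple boundary conditions through $G=\mathring{G}+\mathring{G}RG$, with $R$ supported on $\partial\La^{(qr)}_{qL}\times\partial\La^{(qr)}_{qL}$. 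The Wegner estimate is used only to bound $\norm{G_{\la,\omega,\La_{qL}}(E_0)}\le(qL)^{\theta}$ in the last term; this is where $\theta>2/\tau$ enters.
\end{itemize}
In-gap edge eigenvalues are harmless in this scheme, as long as none lies within $(qL)^{-\theta}$ of $E_0$.

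At the external edges your conclusion can be rescued, but the route is unnecessary and partly wrong. Decoupling into subboxes \emph{raises} the bottom of the spectrum, so $\inf\sigma(H_{\la,\omega,\La_L})\le\min_j\inf\sigma(H_{\la,\omega,\La^{(j)}_\ell})$. Gaps in all subboxes therefore give no lower bound for the big box.

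No uncertainty principle is needed either: your premise that ``$\kappa<1$ is not available'' is mistaken. If all $nL^2$ variables in $\La_L$ avoid $[-a,-a+\Delta]$, then $H_{\la,\omega,\La_L}\ge\alpha_1-a\la+\la\Delta$ trivially, because the compression of $H_0$ is $\ge\alpha_1$. By \ref{item: R2} this event fails with probability at most $CnL^2\Delta^\beta$. Take $\Delta=L^{-\kappa}$ with $\kappa\in(2/\beta,1)$, or $\Delta\sim\log L/L$ as the paper does. Then this probability tends to $0$ precisely because $\beta>2$. Combes--Thomas applied directly to $H_{\la,\omega,\La_L}$ then gives $\frac{2}{\la\Delta}\eu^{-c\la\Delta L}\le L^{-\theta}$ for large $L$, with no Wegner estimate needed. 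Your union bound over sites is in fact exactly this crude estimate; the claim that ``a positive fraction of sites'' must be bad is neither justified nor used.
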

The probabilistic condition appearing in \eqref{input 0 or}-\eqref{input 2 or} is the input of the MSA. It was shown in \cite[Thm. 4.2]{Germinet Klein 2004} that, under assumptions \ref{item: P1},\ref{item: P2},\ref{item: P4} and \ref{item: R1}, the verification of the initial length scale estimate at an energy level $E$ is equivalent to $H_{\la,\omega}$ exhibiting DL in $E$. Therefore, under the additional assumptions \ref{item: P3} and \ref{item: R2}, with \ref{item: P1} holding with $B\in 2\pi \Q$, the previous result yields dynamical localization at external and internal band edges energies of the spectrum $\sigma(H_{\la,\omega})$.

\goodbreak
    
\section{Proofs of Theorems \ref{DD energy}, \ref{DD disorder} \& \ref{curve}}\label{proof section}

In this section we prove Theorems \ref{DD energy}, \ref{DD disorder} and \ref{curve}, which show dynamical delocalization, respectively, in the energy parameter at fixed disorder, in the disorder parameter at fixed energy, and on continuous curves in the energy-disorder plane, whenever there is a jump of the Chern character.

\begin{proof}[Proof of Theorem \ref{DD energy}]
    In order to prove Part \eqref{point 1 dd energy} we argue by contradiction. Given $\la<\frac{\min\set{|\G_{i-1}(0)|,|\G_i(0)|}}{a+b}$ assume that $H_{\la,\omega}$ exhibits DL for all $E\in \mathcal{B}_i(\la)$. In particular $H_{\la,\omega}$ exhibits DL in $\G_{i-1}(\la)\sqcup\mathcal{B}_i(\la)\sqcup \G_{i}(\la)$ and thus there exists $C<\infty,\mu>0,\zeta\in (0,1]$ such that
    \begin{equation*}
        \E\bigg[\sup_{E\in [E_-,E_+]}\Tr_{\C^n}|P_{E,\la,\omega}(0,\ga)|\bigg]\leq C\exp{(-\mu|\ga|^\zeta)}
    \end{equation*}
    for some $E_-\in \G_{i-1}(\la)$ and $E_+\in \G_i(\la)$. Therefore, by Theorem \ref{continuity chern energy}, we have
    \begin{equation*}
        \E[C(P_{E_-,\la,\omega})]=\E[C(P_{E_+,\la,\omega})].
    \end{equation*}
    Hence, by Corollary \ref{robusteness}, we get
    \begin{equation*}
        C(P_{E_-,\la=0})=\E[C(P_{E_-,\la,\omega})]=\E[C(P_{E_+,\la,\omega})]=C(P_{E_+,\la=0})
    \end{equation*}
    which is in contradiction with the hypothesis.
    
    Part \eqref{point 2 dd energy} is a consequence of Proposition \ref{rmk characterization}. Indeed for all $g\in C_{c,+}^\infty(\R)$ with $g\equiv 1$ on an open interval containing $E$, for all $\alpha\geq 0$ and $p>\frac{4\alpha}{\tau}+\frac{24}{\tau}+12$ it holds
    \begin{equation*}
        \liminf_{T\to\infty}\frac{1}{T^{\alpha}}\mathcal{M}_\la(p,g,T)=\infty.
    \end{equation*}
    Hence $\alpha <\frac{p\tau}{4}-(3\tau+6)$ and so for all $p>0$ there exists $C_{p,g,\la}>0$ such that
    \begin{equation*}
         \mathcal{M}_\la(p,g,T)\geq C_{p,g,\la}T^{\frac{p\tau}{4}-\kappa}
    \end{equation*}
    for all $\kappa>3\tau +6$ and $T\geq 0$. Lastly, Part \eqref{point 3 dd energy} follows from Theorem \ref{DL regimes}-\eqref{point band edge} as in Part \eqref{point 1 dd energy}.
\end{proof}

Next, we will prove the general delocalization result in Theorem \ref{curve}, from which Theorem \ref{DD disorder} will follow as a corollary.

\begin{proof}[Proof of Theorem \ref{curve}] 
    Throughout the proof we will omit the dependence on $\omega$. Assume by contradiction that for all $t\in [0,T]$ there exists $C_t<\infty, \mu_t>0$ and $\zeta_t\in (0,1]$ such that:
    \begin{equation*}
        \E\bigg[\Tr_{\C^n}|P_{E(t),\la(t)}(0,\ga)|\bigg]\leq C_t\exp{(-\mu_t|\ga|^{\zeta_t})}.
    \end{equation*}
    Let $t_1,t_2\in [0,T]$, then by Remark \ref{off diagonal} and Proposition \ref{ids continuity disorder}-(2) we have that
    \begin{align*}
        &\E\bigg[\Tr_{\C^n}|(P_{E(t_1),\la(t_1)}-P_{E(t_2),\la(t_2)})(0,\ga)|\bigg] \\
        &\leq \E\bigg[\Tr_{\C^n}|(P_{E(t_1),\la(t_1)}-P_{E(t_2),\la(t_1)})(0,\ga)|+\Tr_{\C^n}|(P_{E(t_2),\la(t_1)}-P_{E(t_2),\la(t_2)})(0,\ga)|\bigg] \\
        &\leq \frac{2^{2-\tau}n^2\pi C_\tau(\rho)}{\min_{t\in[0,T]}|\la(t)|}|E(t_1)-E(t_2)|^\tau+C_K|\la(t_1)-\la(t_2)|^{\frac{\tau}{\tau+2}}
    \end{align*}
    where $\min_{t\in[0,T]}|\la(t)|>0$ by assumption, and $C_K$ is the constant appearing in Proposition \ref{ids continuity disorder}-(2) with $K=E([0,T])\times \la([0,T])$. In particular the map
    \begin{equation*}
        [0,T]\ni t\mapsto\E\bigg[\Tr_{\C^n}|P_{E(t),\la(t)}(0,\ga)|\bigg]
    \end{equation*}
    is continuous, and thus there exists $t'\in [0,T]$ such that
    \begin{equation*}
        \sup_{t\in [0,T]}\E\bigg[\Tr_{\C^n}|P_{E(t),\la(t)}(0,\ga)|\bigg]=\E\bigg[\Tr_{\C^n}|P_{E(t'),\la(t')}(0,\ga)|\bigg]\leq C_{t'}\exp{(-\mu_{t'}|\ga|^{\zeta_{t'}})}.
    \end{equation*}
    Hence it holds that
    \begin{equation*}
        \sup_{t\in [0,T]}\sum_{\ga\in\Ga}|\ga|\E\bigg[\Tr_{\C^n}|P_{E(t),\la(t)}(0,\ga)|^3\bigg]^{\frac{1}{3}}<\infty.
    \end{equation*}
    Since condition \eqref{ids disorder} is satisfied with $\alpha_1=\tau$ and $\alpha_2=\frac{\tau}{\tau+2}$, we can apply Theorem \ref{continuity chern disorder} with $\mathcal{R}=\set{(E(t),\la(t)),\, t\in[0,T]}$, obtaining
    \begin{equation*}
        \E[C(P_{E(0),\la(0)})]=\E[C(P_{E(T),\la(T)})]
    \end{equation*}
    which is a contradiction. Indeed by Corollary \ref{robusteness} we have that $\E[C(P_{E(0),\la(0)})]=C(P_{E(0),\la=0})$ and $C(P_{E(0),\la=0})\neq 0$ by assumption. On the other hand, if $\la(T)>\la_\rho(H_0)$ then $\E[C(P_{E(T),\la(T)})]=0$ by Corollary \ref{chern zero strong disorder}, while if $E(T)\in \G_j(\la(T))$, with $i\neq j$, then  $\E[C(P_{E(T),\la(T)})]=C(P_{E(T),\la=0})$ by Corollary \ref{robusteness}, and $C(P_{E(T),\la=0})\neq C(P_{E(0),\la=0})$ by hypothesis. In both cases we have arrive at a contradiction, thus concluding the proof.
\end{proof}

\begin{proof}[Proof of Theorem \ref{DD disorder}] 
    Given $E\in \G_i(0)$ let $\la_-\in (0,\la_0(E))$ and $\la_+\in (\la_\rho(H_0),+\infty)$. Consider the continuous curve
    \begin{equation*}
        [\la_-,\la_+]\ni \la\mapsto C_E(\la)=(E,\la)\in \R\times (0,\infty).
    \end{equation*}
    By assumption we have that $C(P_{E,\la=0})\neq 0$, and since $\la_+>\la_\rho(H_0)$, by Theorem \ref{curve}, there exists $\la^*\in [\la_-,\la_+]$ such that $H_{\la^*,\omega}$ exhibits DD in $E$. Since $H_{\la,\omega}$ exhibits DL in $E$ for all $\la\in [\la_-,\la_0(E))$ and $(\la_\rho(H_0),\la_+]$ it holds that $\la^*\in [\la_0(E),\la_\rho(H_0)]$.
    
    Similarly to the proof of Theorem \ref{DD energy}-(2), \eqref{transport exponent} is a consequence of Proposition \ref{rmk characterization}. The proof of Part \eqref{point 2 dd disorder} follows from Theorem \ref{DL regimes}-\eqref{point band edge} as in Part \eqref{point 1 dd disorder}.
\end{proof} 

\goodbreak

\section{The Anderson metal-insulator transition for the Haldane-Anderson model}\label{section far haldane}

In this section we consider the particular case of the Haldane-Anderson model, as in Definition \ref{defn:haldane-anderson}, with a single-site probability distribution $\rho_a$, for which we prove existence of localized and delocalized energies in the bulk of the perturbed spectrum. For a specific choice of the parameters, we provide estimates on the location of the delocalized energies, showing that the interval for values of $\la$ in Theorem \ref{DD disorder}-(1) and in Corollary \ref{corollary haldane} can be made more precise, and it is at a positive distance from $\la_0(E)$. 
The key idea is to show that there is a region of localization inside the bulk of the gapless spectrum by using Theorem \ref{DL regimes}-\eqref{point 2 DL}. Then the existence of delocalized states follows from Theorem \ref{DD disorder}.

Consider parameters $-3\sqrt 3 t_2< M< 3\sqrt{3}t_2$, $\phi=\pm \frac{\pi}{2}$ and assume \ref{item: P4} holds with an absolutely continuous distribution $\rho_a$ having support in $[-a,a]$ \mbox{with $a\geq 1$. Then}
\begin{gather*}
    \sigma(\Hi_\text{Hal})=\left[-\norm{\Hi_{\textup{Hal}}},-\frac{|\G_1|}{2}\right]\sqcup  \left[\frac{|\G_1|}{2},\norm{\Hi_{\textup{Hal}}}\right] \\
    \sigma(\Hi_{\textup{Hal},\la,\omega})=\left[-\norm{\Hi_{\textup{Hal}}}-a\la,-\frac{|\G_1|}{2}+a\la\right]\cup \left[\frac{|\G_1|}{2}-a\la,\norm{\Hi_{\textup{Hal}}}+a\la\right] \quad \mathbb{P}\textup{-a.s.}
\end{gather*}
where $|\G_1|>0$ is the size of the internal gap. In particular $\sigma(\Hi_{\textup{Hal},\la,\omega})$ exhibits, almost surely, a gap $\G_1(\la)=\left(-\frac{|\G_1|}{2}+a\la,\frac{|\G_1|}{2}-a\la\right)$ which remains open for $\la< \frac{|\G_1|}{2a}$, and we have that $E\in \left(-\frac{|\G_1|}{2},\frac{|\G_1|}{2}\right)$ is in $\G_1(\la)$ if and only if $\la<\frac{\Delta(E)}{a}$, where $\Delta(E)=\dist(E,\sigma(\Hi_{\textup{Hal}}))$. For simplicity, we will focus on the case $E=0$.

Note that with our choice of  $\rho_a$, \ref{item: R1} holds with $\tau=1$. In order to apply Theorem \ref{DL regimes}-\eqref{point 2 DL} to show DL, we need to prove that for $E=0$, there exist values of $\la$ which satisfy condition \eqref{weak far}, in which the left-hand side, corresponding to $\la_0(0)$, takes the value $\frac{\Delta(0)}{a}=\frac{|\G_1|}{2a}$. Note that the values $ |\mathcal G_1|$ and $C_{s,\alpha}$ appearing on the RHS of \eqref{weak far} depend on the unperturbed operator $\Hi_{\textup{Hal}}$ and on some parameters $\alpha,s$, but do not depend on the radius $a$ of the support of the distribution. As discussed in Section \ref{section localization away}, it suffices then to show that the constant $D_{s,1}(\rho_a)$ appearing in \eqref{sigma-moment regularity} can be bounded uniformly in $a$. In this way, we can choose  $a$  large enough such that
\begin{equation}\label{eq:lhs}
    \frac{\Delta(0)^{1+\frac{2}{s}}}{(C_{s,\alpha}D_{s,1}(\rho_a))^{\frac{1}{s}}}=\frac{|\G_1|^{1+\frac{2}{s}}}{2(4C_{s,\alpha}D_{s,1}(\rho_a))^{\frac{1}{s}}}> \frac{|\G_1|}{2a},
\end{equation}
that is,
\begin{equation}\label{eq:lhs-a}
  \left( \frac{|\G_1|^2}{4 C_{s,\alpha}D_{s,1}(\rho_a)}\right)^{\frac{1}{s}}> \frac{1}{a}.
\end{equation}
The latter guarantees that the highest point of the localization curve appearing in Fig. \ref{fig 6} is well inside the spectrum.

In order to have a uniform bound of $D_{s,1}(\rho_a)$ on $a$, we consider the distribution $\rho_a(\di v)=\rho_a(v)\di v$, where $\rho_a(v)$ is a truncated Gaussian function having support in $[-a,a]$, that is
\begin{equation}\label{truncated gaussian}
    \rho_a(v)=\frac{f(v)\chi_{[-a,a]}(v)}{\norm{f}_{L^1[-a,a]}} \quad \textup{where} \quad f(v)=\frac{1}{\sqrt{2\pi}}\eu^{-\frac{v^2}{2}}.
\end{equation}
Recalling that $a\geq 1$, we find a constant $B$ independent of $a$ and, for any $q>0$, we find $C=C_q$ independent of $a$, such that 
\begin{align*}
    &\int|v|\rho_a(v)\di v=\frac{1}{\norm{f}}_{L^1[-a,a]}\frac{1}{\sqrt{2\pi}}\int _{-a}^a|v|\, \eu^{\frac{-v^2}{2}}\di v\leq \sqrt{\frac{2}{\pi}}\frac{1}{\norm{f}_{L^1[-1,1]}}\leq \sqrt \eu= B \\
    & \int \rho_a(v)^{1+q}\di v=\frac{1}{\norm{f}_{L^1[-a,a]}^{q+1}}\frac{1}{(2\pi)^{\frac{1+q}{2}}}\int_{-a}^a \eu^{\frac{-(q+1)v^2}{2}}\di v\\
    &\hspace{2.4cm}\leq \sqrt{\frac{2}{q+1}}\frac{\sqrt{\pi}}{\norm{f}^{q+1}_{L^1[-1,1]}(2\pi)^{\frac{q+1}{2}}}
    \leq \sqrt{\frac{2}{q+1}}\frac{\sqrt{\pi}\eu^{(q+1)/2}}{2^{q+1}}=C.
\end{align*}
Taking $t=1$ in \eqref{explicit constant}, with $s<\bigg[1+\frac{2}{t}+\frac{1}{q}\bigg]^{-1}=\frac{1}{3+\frac{1}{q}}$, we have that 
\begin{equation*}
\sup_{a\geq 1}D_{s,1}(\rho_a)\leq K=K(B,C,s,q)
\end{equation*}
where the constant $K$ is given in  \eqref{explicit constant}, depends only on $B,C,s,q$ and not on $a$, and $K>1$.
Therefore the LHS of \eqref{eq:lhs-a} can be estimated by
\begin{equation}\label{eq:lhs2}
   \left( \frac{|\G_1|^2}{4 C_{s,\alpha}D_{s,1}(\rho_a)}\right)^{\frac{1}{s}}\geq\left( \frac{|\G_1|^2}{4 C_{s,\alpha}K}\right)^{\frac{1}{s}}.
\end{equation}
Next, define
\begin{equation}\label{eq:a_0}
 a_0=  a_0(|\G_1|, s,\alpha, K)=\bigg(\frac{4K C_{s,\alpha}}{|\G_1|^2}\bigg)^{\frac{1}{s}}.
\end{equation}
Recalling the expression for $C_{s,\alpha}$ given in  Theorem \ref{DL regimes}-\eqref{point 2 DL}, we have that $C_{s,\alpha}>|\G_1|^2$, and so $a_0>1$. Thus, for any $a>a_0$, \eqref{eq:lhs-a} holds, so the interval $\left[\la_0(0), \frac{\Delta(0)^{1+\frac{2}{s}}}{(C_{s,\alpha}D_{s,1}(\rho_a))^{\frac{1}{s}}}\right] $ is not empty. By Theorem \ref{DD disorder} and Theorem \ref{DL regimes} Parts \eqref{strong disorder regime thm} and \eqref{point 2 DL}, for all $a>a_0$ the following holds
\begin{enumerate}
    \item for all $\la\in \left[\frac{|\G_1|}{2a},\frac{|\G_1|}{2a_0}\right)\cup(\la_{\rho_a}(\Hi_{\textup{Hal}}),\infty)$ we have that $0\in \sigma(\Hi_{\textup{Hal},\la,\omega})$ almost surely, and $\Hi_{\textup{Hal},\la,\omega}$ exhibits DL in $E=0$, where $\la_{\rho_a}(\Hi_{\textup{Hal}}) $ denotes the strong disorder  threshold;
    \item there exists $\la^*\in \left[\frac{|\G_1|}{2a_0},\la_{\rho_a}(\Hi_{\textup{Hal}})\right]$ such that $\Hi_{\textup{Hal},\la^*,\omega}$ exhibits DD in $E=0$.
\end{enumerate}
    
This result shows that in Theorem \ref{DD disorder} the interval for $\la^*$, the value of the disorder parameter at which DD holds at $E$, can be at a positive distance from $\la_0(E)$. In this example, for all $a>a_0$,  the disorder parameter for which DD holds belongs to an interval with a fixed lower edge $\frac{|\G_1|}{2a_0}$. Choosing, for example, $a=n a_0$, with $n$ an integer, the minimal distance from $\la^*$ to the value $\la_0(E)$ where the gap closes is $(n-1)\frac{|\mathcal G_1|}{2a}$.

Next, we give an example by making a specific choice of the parameters in the model described above, in order to give numerical estimates on the region of localization and delocalization.

{\bf Example}: Take $M=0$, $\phi=\pm\frac{\pi}{2}$ and $t_2=\frac{t_1}{3\sqrt{3}}$. The explicit expression for the Haldane Bloch bands is given in \cite[Section 3]{Marcelli Monaco Moscolari Panati 2018}, where for this choice of parameters one can see that the size of the internal gap is given by $|\G_1|=2t_1$. In order to compute $a_0$ in \eqref{eq:a_0}, we can take: \\
$t=1,\ q=2$ in \eqref{explicit constant}, with $s=\frac{1}{4}<\bigg[1+\frac{2}{t}+\frac{1}{q}\bigg]^{-1}=\frac{2}{7}$, so that $C_q=\frac{\sqrt{2\pi \eu^3}}{8\sqrt{3}}$, $p=\frac{s}{1-2s}=\frac{1}{2}$ and $C_{p,q}=1+\frac{p(2^qC)^{\frac{1}{1+q}}}{\frac{q}{1+q}-p}=1+3\sqrt[3]{4C_q}$, and therefore
\begin{align*}
     K=
    \max\set{5(2B)^s,2^{2s+1}B^{s}(1+B^{s}C_{p,q})}= \sqrt{8}\sqrt[8]{\eu}\Bigg(1+\sqrt[8]{\eu}+3\sqrt[8]{\eu}\sqrt[3]{\frac{\sqrt{2\pi \eu^3}}{2\sqrt{3}}}\Bigg)\cong 22.96.
\end{align*}
The constant $C_{s,\alpha}$ above can be explicitly computed from  $S_\alpha$ with $\alpha>0$ in Proposition \ref{combes thomas}. Consider the dimerization $\Ci\cong\Ga\times\set{0,d_3}$, where $\Ga=\Span_\Z\set{a_1,a_2}$.
\begin{align*}
    S_\alpha&=\sup_{\ga\in\Ga}\sum_{\xi\in\Ga}\norm{\Hi_{\textup{Hal}}(\ga,\xi)}(\eu^{\alpha|\ga-\xi|}-1)\\ &=\Big(\norm{\Hi_{\textup{Hal}}(0,a_3)}+\norm{\Hi_{\textup{Hal}}(0,-a_3)}\Big)(\eu^{\sqrt{2}\alpha}-1) \\
    &+\Big( \norm{\Hi_{\textup{Hal}}(0,a_1)}+\norm{\Hi_{\textup{Hal}}(0,-a_1)}+\norm{\Hi_{\textup{Hal}}(0,a_2)}+\norm{\Hi_{\textup{Hal}}(0,-a_2)}\Big)(\eu^\alpha-1)
\end{align*}
where the hopping matrices are given by
\begin{gather}
    \Hi_{\textup{Hal}}(0,a_1)=
    \begin{bmatrix}
    t_2 \eu^{-\iu\phi} & t_1 \\
    0  & t_2 \eu^{\iu\phi}
    \end{bmatrix}
    = \Hi_{\textup{Hal}}(0,-a_1)^\dagger \label{matrix 1}\\
    \Hi_{\textup{Hal}}(0,a_2)=
    \begin{bmatrix}
    t_2 \eu^{-\iu\phi} & 0 \\
    t_1  & t_2 \eu^{\iu\phi}
    \end{bmatrix}
    =\Hi_{\textup{Hal}}(0,-a_2)^\dagger \label{matrix 2}\\
    \Hi_{\textup{Hal}}(0,a_3)=
    \begin{bmatrix}
    t_2 \eu^{-\iu\phi} & 0 \\
    0 & t_2 \eu^{\iu\phi}
    \end{bmatrix}
    =\Hi_{\textup{Hal}}(0,-a_3)^\dagger \label{matrix 3}
\end{gather}
where $A^\dagger$ is the Hermitian transpose of the matrix $A\in M_n(\C)$.

Using the fact that $\norm{A}\leq n\norm{A}_\infty$, which holds for all $A\in M_n(\C)$, and $\norm{A}=\norm{A}_\infty$ if $A$ is a diagonal matrix, we get
\begin{equation*}
    S_\alpha\leq 10\max\set{t_1,t_2}(\eu^{\sqrt{2}\alpha}-1)=10\,t_1(\eu^{\sqrt{2}\alpha}-1).
\end{equation*}
We choose 
\begin{equation*}
    \alpha=\frac{1}{\sqrt{2}}\ln\bigg(1+\frac{|\G_1|}{40\, t_1} \bigg)=\frac{1}{\sqrt{2}}\ln\bigg(1+\frac{1}{20} \bigg) ,
\end{equation*}
so that $20\,t_1(\eu^{\sqrt{2}\alpha}-1)= \frac{|\G_1|}{2}$, and $2S_\alpha\leq \frac{|\G_1|}{2}$.
Since $s=\frac{1}{4}$ we have that
\begin{equation*}
    C_{s,\alpha}=|\G_1|^2\bigg(1+\frac{32}{s^2\alpha^2}\bigg)=|\G_1|^2\Bigg[1+\Bigg(\frac{32}{\ln\big(1+\frac{1}{20}\big)}\bigg)^2\Bigg].
\end{equation*}
Therefore 
\begin{equation*}
    a_0=\bigg(4K\bigg[1+\bigg(\frac{32}{\ln(1+\frac{1}{20})}\bigg)^2\bigg]\bigg)^{4}\cong 2.4\cdot 10^{30}
\end{equation*}   
and so
\begin{equation*}
    \frac{|\G_1|}{2 a_0}=t_1\bigg(4K\bigg[1+\bigg(\frac{32}{\ln(1+\frac{1}{20})}\bigg)^2\bigg]\bigg)^{-4}\cong \ 4.1\cdot 10^{-31}\, t_1. 
\end{equation*}
Then, taking $a>a_0$, for all $\la \in \left[\frac{|\G_1|}{2 a},\frac{|\G_1|}{2 a_0} \right)$ the operator $\Hi_{\textup{Hal},\la,\omega}$ contains $E=0$ in its almost-sure spectrum and exhibits DL there.

Next, we compute the strong disorder threshold $\la_{\rho_a}(\Hi_{\textup{Hal}})$. Recall that
\begin{gather*}
    \la_{\rho_a}(\Hi_{\textup{Hal}})=\inf_{s\in (0,1)}\inf_{\mu>0}\bigg[C_{s,1}(\rho_a)\sup_{\ga\in\Ga}\sup_{i=1,2}\sum_{\substack{\xi\in \Ga,j=1,2 \\ (\xi,j)\neq (\ga,i)}} |\langle\delta_{\ga,i},\Hi_{\textup{Hal}}\,\delta_{\xi,j}\rangle|^s \eu^{\mu|\ga-\xi|}\bigg]^{\frac{1}{s}} \\
    \textup{where} \quad C_{s,1}(\rho_a)=\frac{C_1(\rho_a)^s}{1-s}=\frac{2^s\norm{\rho_a}^s_\infty}{1-s}=\frac{2^s}{(1-s)(2\pi)^{\frac{s}{2}}\norm{f}_{L^1[-a,a]}^s}.
\end{gather*}
Recalling the expression of the matrices \eqref{matrix 1}, \eqref{matrix 2}, \eqref{matrix 3}, noticing that $\Hi_{\textup{Hal}}(0,0)=
\begin{bmatrix}
        0 & t_1 \\
        t_1 & 0
\end{bmatrix}$ and using the periodicity property \ref{item: P1}, we get
\begin{align*}
    &\la_{\rho_a}(\Hi_{\textup{Hal}})=\inf_{s\in (0,1)}\inf_{\mu>0}\bigg[C_{s,1}(\rho_a)\sup_{i=1,2}\sum_{(\ga,j)\neq (0,i)} |\langle\delta_{0,i},\Hi_{\textup{Hal}}\,\delta_{\ga,j}\rangle|^s \eu^{\mu|\ga-\xi|}\bigg]^{\frac{1}{s}} \\
    &=\inf_{s\in (0,1)}\inf_{\mu>0}\bigg[C_{s,1}(\rho_a)\Big(|t_1|^s+(4|t_2|^s+2|t_1|^s)\eu^\mu+2|t_2|^s\eu^{\sqrt{2}\mu}\Big)\bigg]^{\frac{1}{s}} \\
    &=\inf_{s\in (0,1)}\bigg[C_{s,1}(\rho_a)\Big(3|t_1|^s+6|t_2|^s\Big)\bigg]^{\frac{1}{s}}=t_1\inf_{s\in (0,1)}\bigg[C_{s,1}(\rho_a)\Big(3+6(3\sqrt{3})^{-s}\Big)\bigg]^{\frac{1}{s}}\\
    &=\frac{2t_1}{\sqrt{2\pi}\norm{f}_{L^1[-a,a]}}\inf_{s\in (0,1)}\bigg[\frac{3+6(3\sqrt{3})^{-s}}{1-s}\bigg]^{\frac{1}{s}}<\frac{39.98 \, t_1}{\norm{f}_{L^1[-a,a]}}.
\end{align*}

In particular, if $a>a_0$ we have $\norm{f}_{L^1[-a,a]}>39.98/40$ and so we obtain
\begin{equation*}
    \la_{\rho_a}(\Hi_{\textup{Hal}})<  40\, t_1.
\end{equation*}
Therefore, for all $\la>\la_{\rho_a}(\Hi_{\textup{Hal}})$,  the operator $\Hi_{\textup{Hal},\la,\omega}$ contains $E=0$ in its almost-sure spectrum and exhibits DL there.

Combining the results on localization, we have that there exists $\la^*>0$ satisfying 
\begin{equation*}
    4\cdot 10^{-31}< \frac{\la^*}{t_1}<40   
\end{equation*}
such that $\Hi_{\textup{Hal},\la^*,\omega}$ exhibits DD in $E=0$.

\goodbreak

\appendix

\renewcommand{\thesubsection}{\Alph{subsection}}
\counterwithin{equation}{subsection}
\counterwithin{thm}{subsection}

\section*{Appendix}

\subsection{Deterministic Chern character}\label{appendix deterministic chern}

\begin{proof}[Proof of Proposition \ref{deterministic chern}]
    In order to avoid heavy notation, we will omit the dependence of the projection on $E$ and $\la$ .

    Recall that $H_{\omega}$ is an ergodic operator, since there exists $\set{\tau_\ga}_{\ga\in\Ga}$ an ergodic family of measure preserving automorphisms of $\Omega$, given by $(\tau_\ga\omega)_{\xi,i}=\omega_{\xi+\ga,i}$ for $(\xi,i)\in \Ga\times \set{1,\ldots,n}$, such that
    \begin{equation*}
        T_\ga^*H_{\omega} T_\ga=H_0+\la T_\ga^*V_\omega T_\ga=H_{\tau_\ga\omega} \quad \textup{for all} \ \ga\in\Ga.
    \end{equation*}

    By functional calculus, we have that $P_{\omega}=\chi_{(-\infty,E]}(H_{\omega})$ is an ergodic operator too:
    \begin{equation}\label{ergodicity projection}
        T_\ga^*P_\omega T_\ga=P_{\tau_\ga\omega} \quad \textup{for all} \ \ga\in\Ga.
    \end{equation}
    As a consequence, the operator $[X_j,P_\omega]$ is also ergodic for $j=1,2$, since
    \begin{equation}\label{ergodicity commutator}
        T_\ga^*[X_j,P_\omega]T_\ga=[T_\ga^*X_jT_\ga,T_\ga^*P_\omega T_\ga]=[X_j+\ga_j\Id,P_{\tau_\ga\omega}]=[X_j,P_{\tau_\ga\omega}].
    \end{equation}

    1. Since the box $\Lambda_L$ contains finitely many points, it is enough to show that there exists
    $\Omega_0\subset \Omega$ with $\mathbb{P}(\Omega_0)=1$ such that
    \begin{equation}\label{bound-C}
        |\Tr_{\C^n}\mathfrak{C}_{P_{\omega}}(\ga,\ga)|<\infty \quad \textup{for all} \ \omega\in \Omega_0,\ga\in\Ga.
    \end{equation}
    Recalling that $\mathfrak{C}_{P_\omega}=P_\omega[[X_1,P_\omega],[X_2,P_\omega]]P_\omega$, we can write
    \begin{align}\label{chern cancellation}
        \mathfrak{C}_{P_\omega}(\ga,\ga)
        &=(P_\omega X_1 P_\omega X_2 P_\omega^2-P_\omega^2X_1X_2P_\omega^2-P_\omega X_1 P_\omega^2X_2P_\omega+P_\omega^2 X_1 P_\omega X_2 P_\omega)(\ga,\ga)\nonumber \\
        &-(P_\omega X_2 P_\omega X_1 P_\omega^2-P_\omega^2X_2X_1P_\omega^2-P_\omega X_2 P_\omega^2 X_1 P_\omega+P_\omega^2 X_2 P_\omega X_1 P_\omega)(\ga,\ga).
    \end{align}
    We will show that
    \begin{equation}\label{finite expecation cancellation}
        \E\Big[\Tr_{\C^n}|(P_\omega X_j^mP_\omega^{m'}X_k^{m''}P_\omega)(\ga,\ga)|\Big]<\infty
    \end{equation}
    for all $\ga\in\Ga, m,m',m''\in \set{0,1}, j,k\in\set{1,2}$.

    Indeed, given $\ga\in\Ga$, $m,m',m''\in \set{0,1}$ and $j,k\in\set{1,2}$, we have
    \begin{align*}
        \E\Big[\Tr_{\C^n}|(P_\omega X_j^mP_\omega^{m'}X_k^{m''}&P_\omega)(\ga,\ga)|\Big]=\E\bigg[\Tr_{\C^n}\bigg|\sum_{\xi,\zeta\in\Ga}P_\omega(\ga,\xi)\xi_j^m P^{m'}_\omega(\xi,\zeta) \zeta_k^{m''}P_\omega(\zeta,\ga)\bigg| \bigg]\\
        &\leq n\,\E\bigg[\sum_{\xi,\zeta\in\Ga}|\xi_j|^m|\zeta_k|^{m''}\Big(\Tr_{\C^n}|P_\omega(\ga,\xi)|^2 \Tr_{\C^n}|P_\omega(\zeta,\ga)|^2\Big)^{\frac{1}{2}}\bigg]
    \end{align*}
    where we used $\Tr(|ABC|)\leq \Tr(|A|)\norm{B}\norm{C}\leq n (\Tr|A|^2)^{\frac{1}{2}}(\Tr|C|^2)^{\frac{1}{2}}$ which holds for all matrices $A,B,C\in M_n(\C)$ with $\norm{B}\leq 1$. Using H\"{o}lder inequality for the expectation, the ergodicity property of the projection \eqref{ergodicity projection} and a change of variables, we get
    \begin{align*}
        &\E\Big[\Tr_{\C^n}|(P_\omega X_j^mP_\omega^{m'}X_k^{m''}P_\omega)(\ga,\ga)|\Big]\leq n\bigg(\sum_{\xi\in\Ga}(1+|\xi|)\E\Big[\Tr_{\C^n}|P_\omega(\ga,\xi)|^2\Big]^{\frac{1}{2}}\bigg)^2\\
        &\leq n \Bigg(\sum_{\xi\in \Ga}|\xi|\E\Big[\Tr_{\C^n}|P_\omega(0,\xi)|^2\Big]^{\frac{1}{2}}+(1+|\ga|)\sum_{\xi\in \Ga}\max\set{1,|\xi|}\E\Big[\Tr_{\C^n}|P_\omega(0,\xi)|^2\Big]^{\frac{1}{2}}\Bigg)^2
    \end{align*}
    which is finite by assumption. As a consequence, for every $\ga\in\Ga$,
    \begin{equation}\label{finite expectation}
        \E\Big[\Tr_{\C^n}\mathfrak{C}_{P_\omega}(\ga,\ga)\Big]<\infty .  
    \end{equation}
    Therefore, there exists $\Omega(\ga)\subset \Omega$ with $\mathbb{P}(\Omega(\ga))=1$ such that   $ |\Tr_{\C^n}\mathfrak{C}_{P_\omega}(\ga,\ga)|<\infty$   for all $\omega \in \Omega(\ga)$. Taking  $\Omega_0=\bigcap_{\ga\in\Ga}\Omega(\ga)$ we have that $\mathbb{P}(\Omega_0)=1$ and \eqref{bound-C} holds for this choice of $\Omega_0$.

    2. Consider $\tilde{\Omega}_0=\Omega_0\cap \bigcap_{\ga\in\Ga}\tau_\ga(\Omega(\ga=0))$ with $\mathbb{P}(\tilde{\Omega}_0)=1$. Let $\ga\in\Ga$ and $\omega\in \tilde{\Omega}_0$ then, using the ergodicity property of the projection \eqref{ergodicity projection} and of the commutator \eqref{ergodicity commutator}, we have that
    \begin{align*}
        \Tr_{\C^n}\mathfrak{C}_{P_\omega}(\ga,\ga)&=\sum_{i=1}^n\langle\delta_{\ga,i},\mathfrak{C}_{P_\omega}\delta_{\ga,i}\rangle \\
        &=\sum_{i=1}^n\langle\delta_{0,i},T_\ga^*P_\omega[[X_1,P_\omega],[X_2,P_\omega]]P_\omega T_\ga\delta_{0,i}\rangle \nonumber \\
        &=\sum_{i=1}^n\langle\delta_{0,i},P_{\tau_\ga\omega}[[X_1,P_{\tau_\ga\omega}],[X_2,P_{\tau_\ga\omega}]]P_{\tau_\ga\omega}\delta_{0,i}\rangle =\Tr_{\C^n}\mathfrak{C}_{P_{\tau_\ga\omega}}(0,0).
    \end{align*}
    Therefore, for all $\omega\in\tilde{\Omega}_0$, we get
    \begin{align*}
        C(P_\omega)&=\lim_{\substack{L\to\infty \\ L\in\N}}\frac{2\pi\iu}{|\La_L|}\sum_{\ga\in\La_L}\Tr_{\C^n}\mathfrak{C}_{P_\omega}(\ga,\ga)=\lim_{\substack{L\to\infty \\ L\in\N}}\frac{2\pi\iu}{|\La_L|}\sum_{\ga\in\La_L}\Tr_{\C^n}\mathfrak{C}_{P_{\tau_\ga\omega}}(0,0). 
    \end{align*}
    Since $\E\Big[\Tr_{\C^n}\mathfrak{C}_{P_\omega}(0,0)\Big]<\infty$ by \eqref{finite expectation}, we are in position to apply Birkhoff's ergodic theorem, obtaining
    \begin{equation*}
        C(P_\omega)=2\pi\iu\,\E\Big[\Tr_{\C^n}\mathfrak{C}_{P_\omega}(0,0)\Big] \quad \textup{for all} \ \omega\in\tilde{\Omega}_0.
    \end{equation*} 
    Moreover, by \eqref{chern cancellation} and \eqref{finite expecation cancellation}, we have that
    \begin{align*}
        \E\Big[\Tr_{\C^n}\mathfrak{C}_{P_\omega}(0,0)\Big]& =\E\Big[\Tr_{\C^n}(P_\omega X_1P_\omega X_2 P_\omega-P_\omega X_2P_\omega X_1P_\omega)(0,0)\Big]\\
        &=\sum_{\ga,\xi\in\Ga}\E\Big[\Tr_{\C^n}\big(P_{\omega}(0,\ga)P_{\omega}(\ga,\xi)P_{\omega}(\xi,0)\big)\Big](\ga\wedge \xi).
    \end{align*}
\end{proof}

\subsection{Initial length scale estimate for band edge localization}\label{Appendix band edge}

In this appendix we prove Theorem~\ref{inital length estimate}, namely DL at the (internal) band edges of the spectrum, by showing that the initial length scale estimate holds, following the argument in \cite[Thm. 3']{Figotin Klein 1994}. 
   
For $L\in\N^*$, consider the box $\La_L\subset \Ga$ given in \eqref{box} and define, for $i\in\set{1,\ldots,n}$, the periodic extension of the random variables $\omega_{\ga,i}$ restricted to the box $\La_L$, which is given by 
\begin{gather*}
    \omega^L_{\ga,i}=\omega_{\ga,i} \ \ \ga\in\La_L, \quad \omega^L_{\ga+L\xi,i}=\omega^L_{\ga,i} \ \ \ga,\xi\in\Ga \nonumber.
\end{gather*}
Correspondingly, we define the associated $L\Ga$-periodic random potential
\begin{gather*}
    V^L_{\omega}=\sum_{\ga\in\Ga}\sum_{i=1}^{n}\omega^L_{\ga,i}\ket{\delta_{\ga,i}}\bra{\delta_{\ga,i}}
\end{gather*}
and the corresponding $L\Gamma$-periodization of $H_{\la,\omega}$, as the Hamiltonian 
\begin{equation}\label{periodic extension}
    H_{\la,\omega}^L=H_0+\la V_\omega^L.
\end{equation}

Assuming that $H_0$ satisfies \ref{item: P2}, we define the periodic restriction of $H_{\la,\omega}^L$ to $\La_L$ as
\begin{gather*}
    \mathring{H}^L_{\la,\omega,\La_L}(\ga,\xi)=\sum_{\zeta\in\Ga}H_{\la,\omega}^L(\ga,\xi+L\zeta) \quad \ga, \xi\in\Ga \\
    \mathring{H}^L_{\la,\omega,\La_L}=\Big(\mathring{H}^L_{\la,\omega.\La_L}(\ga,\xi)\Big)_{\ga,\xi\in\La_L}\in M_{\La_L}(M_n(\C)).
\end{gather*}

The next lemma summarizes the properties of $\mathring{H}^L_{\la,\omega,\La_L}$ and its relation to the operator $H_{\la,\omega,\La_{L}}$, the restriction of $H_{\la,\omega}$ to $\La_{L}$ with simple boundary conditions, see \eqref{defn:simple-bc}. In the following, we will say that $H_0$ is $L\Ga$-periodic if $T_\ga H_0T_\ga^*=H_0$ for all $\ga\in L\Ga$, where $T_\ga\in \mathcal{B}(\ell^2(\Ga;\C^n))$ is the ordinary translation operator defined in \ref{item: P1} for $B=0$.
\begin{lemma}\label{property-per}
    Assume that $H_0$ is $L\Gamma$-periodic and \ref{item: P2} holds. Then $H_{\la,\omega}^{L}$ is $L\Ga$-periodic, and the periodic restriction to $\La_{L}$ satisfies the following properties:
    \begin{enumerate}
        \item\label{point 1 per} \textup{\cite[Lemma 2.2]{Figotin Klein 1994}} $\mathring{H}^{L}_{\la,\omega,\La_{L}}$ is a self-adjoint matrix;
        \item\label{point 2 per} \textup{\cite[Lemma 2.11]{Figotin Klein 1994}} Let $L>r-1$ so that $\La_{L}\setminus\partial\La_{L}^{(r)}\neq \emptyset$. If either one of the points $\ga$ or $\xi$ is well inside the box $\La_L$, that is, if $\ga\in \La_{L}\setminus \partial\La_{L}^{(r)}$ or $\xi\in\La_{L}\setminus\partial\La_{L}^{(r)}$, with $\partial\La_{L}^{(r)}$ given in \eqref{boundary-box}, then $\mathring{H}^{L}_{\la,\omega,\La_{L}}(\ga,\xi)=H^{L}_{\la,\omega,\La_{L}}(\ga,\xi)=H_{\la,\omega,\La_{L}}(\ga,\xi)$;
        \item\label{point 3 per} \textup{\cite[Thm. 4]{Figotin Klein 1994}} $\sigma\Big(\mathring{H}^{L}_{\la,\omega,\La_{L}}\Big)\subset \sigma(H_{\la,\omega}^{L})$.
    \end{enumerate}
\end{lemma}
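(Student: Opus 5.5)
The lemma is Lemma \ref{property-per}. I would first check the $L\Ga$-periodicity of $H^L_{\la,\omega}$, and then prove the three properties in turn. Periodicity is immediate. $H_0$ is $L\Ga$-periodic by hypothesis, and $V^L_\omega$ is $L\Ga$-periodic by construction, since $\omega^L_{\ga+L\xi,i}=\omega^L_{\ga,i}$. A translation $T_{L\xi}$ conjugates the diagonal operator $V^L_\omega$ to the diagonal operator with shifted entries, which are the same entries. So $T_{L\xi}H^L_{\la,\omega}T_{L\xi}^*=H^L_{\la,\omega}$.

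\emph{Point (1).} I would show $\mathring H(\xi,\ga)=\mathring H(\ga,\xi)^\dagger$. Using self-adjointness and then periodicity,
\begin{equation*}
\mathring H(\xi,\ga)=\sum_{\zeta}H^L(\xi,\ga+L\zeta)=\sum_\zeta H^L(\ga+L\zeta,\xi)^\dagger=\sum_\zeta H^L(\ga,\xi-L\zeta)^\dagger .
\end{equation*}
Reindexing $\zeta\mapsto-\zeta$ gives $\mathring H(\ga,\xi)^\dagger$. The sums are finite by \ref{item: P2}.

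\emph{Point (2).} Take $\ga\in\La_L\setminus\partial\La_L^{(r)}$ and $\xi\in\La_L$. For $\zeta\neq0$, the point $\xi+L\zeta$ lies outside $\La_L$. Because $\ga$ is not in the inner boundary, every site outside $\La_L$ is at $|\cdot|_\infty$-distance greater than $r$ from $\ga$. Hence $H^L(\ga,\xi+L\zeta)=0$ by \ref{item: P2}; the potential is diagonal and does not contribute. Only $\zeta=0$ survives. For $\ga,\xi\in\La_L$, $H^L(\ga,\xi)=H(\ga,\xi)$, since $\omega^L$ agrees with $\omega$ on $\La_L$, and this equals $H_{\La_L}(\ga,\xi)$. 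The case where $\xi$ is the interior point follows from point (1).

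\emph{Point (3).} This is the main step, and I would do it by Bloch–Floquet. Given an eigenvector $u\in\C^{n|\La_L|}$ of $\mathring H$ with eigenvalue $E$, extend it $L\Ga$-periodically to $\tilde u$ on $\Ga$. Then $H^L\tilde u=E\tilde u$ pointwise: for $\ga\in\La_L$,
\begin{equation*}
\sum_{\xi'\in\Ga}H^L(\ga,\xi')\tilde u(\xi')=\sum_{\xi\in\La_L}\sum_\zeta H^L(\ga,\xi+L\zeta)u(\xi)=(\mathring H u)(\ga).
\end{equation*}
By periodicity the identity extends to all $\ga\in\Ga$. Thus $\tilde u$ is a bounded generalized eigenfunction. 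To conclude $E\in\sigma(H^L)$, I would build a Weyl sequence $\psi_R=\chi_{\La_{RL}}\tilde u/\|\chi_{\La_{RL}}\tilde u\|$. The commutator $[H^L,\chi_{\La_{RL}}]$ is supported in a strip of width $r$ along the boundary of $\La_{RL}$. This gives $\|(H^L-E)\psi_R\|\lesssim (RL)^{1/2}/(RL)=O(R^{-1/2})\to0$. Alternatively, the Floquet fiber of $H^L$ at quasi-momentum $k=0$ is exactly $\mathring H$, which yields the inclusion directly.
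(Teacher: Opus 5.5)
Your proof is correct and complete. The paper gives no argument of its own here and defers all three items to Figotin--Klein; your direct kernel computations for (1)--(2), and the periodic-extension/Weyl-sequence argument for (3) (equivalently, identifying $\mathring H^L_{\la,\omega,\La_L}$ with the $k=0$ Floquet fiber of $H^L_{\la,\omega}$), are the standard proofs behind those citations.

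The only slip is cosmetic. Since $\La_{RL}$ contains exactly $R^2$ representatives of each class of $\Ga/L\Ga$, one has $\|\chi_{\La_{RL}}\tilde u\|=R\|u\|$. The Weyl bound is therefore $O((L/R)^{1/2})$ rather than $O((RL)^{-1/2})$, which still tends to $0$ as $R\to\infty$.
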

We will use the Combes-Thomas estimate, as stated \eg in \cite[Thm. 10.5]{Aizenman Warzel 2016}.
    
\begin{prop}[\protect{Combes-Thomas estimate}]\label{combes thomas}
    Let $H\in\mathcal{B}(\ell^2(\Ga;\C^n))$ be a self-adjoint operator such that for some $\alpha>0$ it holds that
    \begin{equation*}
        S_{\alpha}=\sup_{\ga\in \Ga}\sum_{\xi\in\Ga}\norm{H(\ga,\xi)}(\eu^{\alpha |\ga-\xi|}-1)<\infty
    \end{equation*}
    where $\norm{\cdot}$ is the operator norm on $M_n(\C)$. Let $z\notin\sigma(H_0)$ and denote $\Delta=\textup{dist}(\sigma(H),z)>0$. Define $G(\ga,\xi;z)=(H-z)^{-1}(\ga,\xi)$. Then
    \begin{align}
        &\textup{if} \ \Delta\geq  2 S_\alpha\implies\norm{G(\ga,\xi;z)}\leq \frac{2}{\Delta}\eu^{-\alpha|\ga-\xi|}  \label{ct1}\\
        &  \textup{if} \ \Delta\leq 2 S_\alpha\implies \norm{G(\ga,\xi;z)}\leq\frac{2}{\Delta}\eu^{-\frac{\alpha\Delta}{2 S_\alpha} |\ga-\xi|}.
    \end{align}
\end{prop}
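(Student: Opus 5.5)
The statement to prove is the Combes--Thomas estimate (Proposition \ref{combes thomas}). I would follow the classical conjugation argument. For fixed $\xi_0\in\Ga$ and a real parameter $\beta\in[0,\alpha]$, let $M_\beta$ be the bounded invertible multiplication operator $(M_\beta\psi)(\ga)=\eu^{\beta|\ga-\xi_0|}\psi(\ga)$; if unboundedness is a concern, use the truncated weight $\eu^{\beta\min\{|\ga-\xi_0|,R\}}$ and let $R\to\infty$ at the end. Set $H_\beta=M_\beta H M_\beta^{-1}$ and $B_\beta=H_\beta-H$. Its matrix elements are
$$B_\beta(\ga,\xi)=H(\ga,\xi)\bigl(\eu^{\beta(|\ga-\xi_0|-|\xi-\xi_0|)}-1\bigr).$$
By the triangle inequality, $|\eu^{\beta t}-1|\le \eu^{\beta|t|}-1$, and $|t|\le|\ga-\xi|$.

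The first step is to bound $\norm{B_\beta}$ with the Schur test: $\norm{B_\beta}\le \sup_\ga\sum_\xi\norm{H(\ga,\xi)}(\eu^{\beta|\ga-\xi|}-1)=:S_\beta$, using self-adjointness of $H$ to make the row and column sums agree. Convexity of $\beta\mapsto \eu^{\beta x}-1$ together with its vanishing at $\beta=0$ gives $\eu^{\beta x}-1\le \frac{\beta}{\alpha}(\eu^{\alpha x}-1)$ for $\beta\le\alpha$, hence $S_\beta\le \frac{\beta}{\alpha}S_\alpha$.

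The second step is a Neumann series. Write
$$H_\beta-z=(H-z)\bigl(1+(H-z)^{-1}B_\beta\bigr).$$
Since $H$ is self-adjoint, $\norm{(H-z)^{-1}}=1/\Delta$. If $\norm{B_\beta}\le \Delta/2$, then $H_\beta-z$ is invertible with $\norm{(H_\beta-z)^{-1}}\le 2/\Delta$. Moreover $(H_\beta-z)^{-1}=M_\beta(H-z)^{-1}M_\beta^{-1}$, so
$$\norm{G(\ga,\xi_0;z)}\,\eu^{\beta|\ga-\xi_0|}=\norm{\bigl(M_\beta G M_\beta^{-1}\bigr)(\ga,\xi_0)}\le \frac{2}{\Delta}.$$

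The last step is the choice of $\beta$ in the two regimes. If $\Delta\ge 2S_\alpha$, take $\beta=\alpha$; then $\norm{B_\alpha}\le S_\alpha\le\Delta/2$, which yields \eqref{ct1}. If $\Delta\le 2S_\alpha$, take $\beta=\frac{\alpha\Delta}{2S_\alpha}\le\alpha$; then $\norm{B_\beta}\le \frac{\beta}{\alpha}S_\alpha=\Delta/2$, which gives the second bound. Note that the hypothesis should read $z\notin\sigma(H)$, not $\sigma(H_0)$.

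The main technical point is justifying the conjugation with unbounded weights. The truncated weight settles it: it is bounded, its increments still satisfy $|\min\{a,R\}-\min\{b,R\}|\le|a-b|$, so all the estimates above hold uniformly in $R$, and letting $R\to\infty$ pointwise in $\ga$ recovers the stated bound.
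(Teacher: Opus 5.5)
Your proof is correct. It is the standard Combes--Thomas conjugation argument: a Schur bound on $M_\beta H M_\beta^{-1}-H$, the convexity estimate $\eu^{\beta x}-1\le\frac{\beta}{\alpha}(\eu^{\alpha x}-1)$ to reach the regime $\Delta\le 2S_\alpha$, a Neumann series, and a truncated weight to keep everything bounded. The paper gives no proof of its own and only cites Aizenman--Warzel, Thm.~10.5, which uses this same standard argument. You are also right that the hypothesis should read $z\notin\sigma(H)$ rather than $z\notin\sigma(H_0)$.
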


We will also use a version of the Combes-Thomas estimate suitable for the periodic restrictions introduced above, proved in \cite[Lemma 2.15]{Figotin Klein 1994}. Note that, by Lemma \ref{property-per}-\eqref{point 3 per}, if $z\notin\sigma(H_{\la,\omega}^{L})$ then $z\notin\sigma(\mathring{H}^{L}_{\la,\omega,\La_{L}})$, and thus the following resolvent is well defined
\begin{equation*}
    \mathring{G}^{L}_{\la,\omega,\La_{L}}(\ga,\xi;z)=(\mathring{H}^{L}_{\la,\omega,\La_{L}}-z)^{-1}(\ga,\xi).
\end{equation*}
\begin{thm}[\protect{Combes-Thomas estimate for periodic restrictions}]\label{combes thomas periodic}
    Assume that $H_0$ is $L\Ga$-periodic and \ref{item: P2} holds. Let $z\notin\sigma(H_{\la,\omega}^{L})$ and denote $\Delta=\textup{dist}(\sigma(H_{\la,\omega}^{L}),z)>0$. Then, given $\alpha>0$, for all $\ga,\xi\in\La_{L}$ we have that
    \begin{equation*}
        \norm{\mathring{G}^{L}_{\la,\omega,\La_{L}}(\ga,\xi;z)}\leq \frac{2}{\Delta}\bigg(1+\frac{2}{1-\eu^{-\frac{\alpha }{\sqrt{2}}\min\set{1,\frac{\Delta}{2 S_\alpha}}L}}\bigg)^2\eu^{-\frac{\alpha }{\sqrt{2}}\min\set{1,\frac{\Delta}{2 S_\alpha}} \textup{dist}(\ga-\xi,L\Ga)}
    \end{equation*}
    where $\textup{dist}(\ga,L\Ga)=\min_{\xi\in\Ga}|\ga-L\xi|$ for $\ga\in\Ga$.
\end{thm}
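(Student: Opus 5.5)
The statement to prove is the Combes--Thomas estimate for periodic restrictions (Theorem \ref{combes thomas periodic}). The plan is to reduce it to the ordinary Combes--Thomas estimate, Proposition \ref{combes thomas}, applied to the $L\Ga$-periodic operator $H^L_{\la,\omega}$, and then fold the resulting infinite-volume kernel back onto the torus $\La_L$.

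\emph{Step 1: lifting identity.} Since $H^L_{\la,\omega}$ is $L\Ga$-periodic with finite hopping, its resolvent $G^L(z)=(H^L_{\la,\omega}-z)^{-1}$ commutes with the translations $T_{L\zeta}$. I will show that for $z\notin\sigma(H^L_{\la,\omega})$ and $\ga,\xi\in\La_L$,
\begin{equation*}
\mathring{G}^L_{\la,\omega,\La_L}(\ga,\xi;z)=\sum_{\zeta\in\Ga}G^L(\ga,\xi+L\zeta;z).
\end{equation*}
The sum converges absolutely because of the exponential decay of $G^L$ established in Step 2. To verify the identity, set $M(\ga,\xi)=\sum_\zeta G^L(\ga,\xi+L\zeta;z)$ for $\ga,\xi\in\La_L$. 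Then compute $\sum_{\eta\in\La_L}(\mathring{H}^L-z)(\ga,\eta)M(\eta,\xi)$ by unfolding $\mathring{H}^L(\ga,\eta)=\sum_{\zeta'}H^L(\ga,\eta+L\zeta')$. Using periodicity to reindex $\eta+L\zeta'$ over all of $\Ga$, this becomes $\sum_\zeta\big((H^L-z)G^L\big)(\ga,\xi+L\zeta)=\sum_\zeta\delta_{\ga,\xi+L\zeta}\Id=\delta_{\ga\xi}\Id$, since $\ga,\xi\in\La_L$. Together with Lemma \ref{property-per}-\eqref{point 3 per}, which guarantees invertibility, this identifies $M$ with the inverse of $\mathring{H}^L-z$.

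\emph{Step 2: decay of the lifted kernel.} By Proposition \ref{combes thomas}, with the two cases $\Delta\ge 2S_\alpha$ and $\Delta<2S_\alpha$ combined, and with $S_\alpha$ unchanged because the potential is diagonal, we get
\begin{equation*}
\norm{G^L(\ga,\eta;z)}\le\frac{2}{\Delta}\eu^{-\alpha'|\ga-\eta|},\qquad \alpha'=\alpha\min\set{1,\Delta/(2S_\alpha)}.
\end{equation*}

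\emph{Step 3: summing over the lattice images.} Let $\zeta_0$ attain $\dist(\ga-\xi,L\Ga)$, and write $w=\ga-\xi-L\zeta_0$. I then need to bound $\sum_\zeta \eu^{-\alpha'|w-L(\zeta-\zeta_0)|}$ by $\eu^{-\frac{\alpha'}{\sqrt2}|w|}$ times $(1+2/(1-\eu^{-\alpha' L/\sqrt2}))^2$. The idea is to pass from the Euclidean norm to coordinatewise bounds, using $|v|\ge(|v_1|+|v_2|)/\sqrt2$; this is where the factor $1/\sqrt2$ comes from. The sum then factorizes into two one-dimensional sums of the form $\sum_{k\in\Z}\eu^{-\frac{\alpha'}{\sqrt2}|w_j-Lk|}$. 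Because $w$ is the minimal representative, we have $|w_j|\le L/2$, roughly. In each one-dimensional sum, the $k=0$ term gives $\eu^{-\frac{\alpha'}{\sqrt2}|w_j|}$. The terms with $k\neq0$ satisfy $|w_j-Lk|\ge |w_j|+(|k|-1)L$ after a careful triangle inequality, so they form two geometric tails bounded by $\eu^{-\frac{\alpha'}{\sqrt2}|w_j|}\cdot 2/(1-\eu^{-\alpha'L/\sqrt2})$. Multiplying the two coordinates recovers the factor $\eu^{-\frac{\alpha'}{\sqrt2}(|w_1|+|w_2|)}\le\eu^{-\frac{\alpha'}{\sqrt2}|w|}$ together with the squared prefactor.

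\emph{Main obstacle.} The hardest step is Step 3: the minimal-image bookkeeping in the oblique lattice coordinates, that is, making sure that $|w_j-Lk|\ge|w_j|+(|k|-1)L$ holds, or at least that a harmless shift absorbed into the stated constant suffices. One subtlety is that $\dist(\cdot,L\Ga)$ is taken in the Euclidean norm while the estimate works per coordinate. If the bound $|w_j|\le L/2$ for the Euclidean minimizer fails, I will instead choose $\zeta_0$ coordinatewise. Since $|w|\ge\dist(\ga-\xi,L\Ga)$ for any representative $w$, the final bound in terms of $\dist(\ga-\xi,L\Ga)$ still follows.
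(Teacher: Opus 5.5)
Your proposal is correct. The paper gives no proof of its own and only cites Figotin and Klein (Lemma 2.15). Your argument is the standard route behind that lemma: you fold the infinite-volume Combes--Thomas bound for $H^L_{\la,\omega}$ over the $L\Ga$-images via $\mathring{G}^L_{\la,\omega,\La_L}(\ga,\xi;z)=\sum_{\zeta\in\Ga}G^L(\ga,\xi+L\zeta;z)$, then sum coordinatewise geometric series, and this reproduces exactly the stated prefactor, square and rate. The worry in your Step 3 does not arise: $|\cdot|$ is the Euclidean norm of the lattice coordinates, so $|w-L\zeta|^2=\sum_{j=1}^2(w_j-L\zeta_j)^2$ is minimized coordinatewise, and the Euclidean minimizer already satisfies $|w_j|\le L/2$, which is precisely what gives $|w_j-Lk|\ge |w_j|+(|k|-1)L$ for $k\neq 0$.
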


Lastly, we recall a Wegner estimate that is a consequence of the continuity of the IDS shown in \cite[Thm. 1.3]{Krishna 2007}.

\begin{prop}[\protect{Wegner estimate}]
    Assume that $H_{\la,\omega}$ satisfies \ref{item: P1}, \ref{item: P4} and \ref{item: R1}. Then we have that for any $E\in \mathbb R$,
    \begin{equation*}\label{wegner estimate}
        \mathbb{P}(\dist (E,\sigma(H_{\la,\omega,\La_L}))<\eps)\leq 4\pi nC_\tau(\rho)|\La_L|\frac{\eps^\tau}{\la^\tau}
    \end{equation*}
    for all $L\in \N^*$, $\eps>0$ and $\la>0.$
\end{prop}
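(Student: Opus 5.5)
The plan is to deduce the Wegner estimate from a single-site spectral averaging bound, applied to the finite-volume operator $H_{\la,\omega,\La_L}$ viewed as a matrix on $\ell^2(\La_L;\C^n)$. Set $I=[E-\eps,E+\eps]$. The event $\dist(E,\sigma(H_{\la,\omega,\La_L}))<\eps$ is contained in the event that $\chi_I(H_{\la,\omega,\La_L})\neq 0$. Since this spectral projection has integer trace, Chebyshev's inequality gives
\begin{equation*}
\mathbb{P}\big(\dist(E,\sigma(H_{\la,\omega,\La_L}))<\eps\big)\leq \E\Big[\Tr\,\chi_I(H_{\la,\omega,\La_L})\Big]=\sum_{\ga\in\La_L}\sum_{i=1}^n\E\Big[\langle\delta_{\ga,i},\chi_I(H_{\la,\omega,\La_L})\delta_{\ga,i}\rangle\Big].
\end{equation*}
It therefore suffices to bound each diagonal term by $4\pi C_\tau(\rho)\,\eps^\tau/\la^\tau$, uniformly in $(\ga,i)$ and $L$. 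Summing over the $n|\La_L|$ pairs $(\ga,i)$ then yields the claimed bound.

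For a fixed pair $(\ga,i)$, I would condition on all random variables other than $v=\omega_{\ga,i}$. The operator then takes the rank-one form
\begin{equation*}
H_{\la,\omega,\La_L}=A+\la v\,|\delta_{\ga,i}\rangle\langle\delta_{\ga,i}|,
\end{equation*}
where $A$ is self-adjoint and independent of $v$. This is exactly the setting of the spectral averaging result underlying \cite[Thm. 1.3]{Krishna 2007}. Let $\mu_v$ be the spectral measure of $\delta_{\ga,i}$ for this operator. The Aronszajn--Krein formula expresses its Borel transform as
\begin{equation*}
F_v(z)=\frac{F_0(z)}{1+\la vF_0(z)}.
\end{equation*}
Writing $\mu_v(I)$ as a limit of $\pi^{-1}\int_I\textup{Im}\,F_v(x+\iu\eta)\,\di x$ and integrating first in $v$ against $\rho$, one obtains
\begin{equation*}
\int\mu_v(I)\,\rho(\di v)\leq 2^{2-\tau}\pi\, C_\tau(\rho)\Big(\frac{|I|}{\la}\Big)^\tau .
\end{equation*}
With $|I|=2\eps$ we have $2^{2-\tau}(2\eps)^\tau=4\eps^\tau$, so the bound becomes $4\pi C_\tau(\rho)\,\eps^\tau/\la^\tau$. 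Taking the expectation over the remaining variables preserves this bound, because it is uniform in $A$.

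The main obstacle is this spectral averaging step for a merely $\tau$-H\"older continuous measure $\rho$ rather than one with a bounded density. The averaged quantity involves $\int\textup{Im}\,\frac{-1}{F_0(x+\iu\eta)^{-1}+\la v}\,\rho(\di v)$, which has to be controlled by $\sup_u\rho([u,u+t])$ at scale $t\sim|I|/\la$. I would first try to quote Krishna's bound directly, checking that his hypotheses only require a rank-one perturbation of an arbitrary self-adjoint operator. Those hypotheses are satisfied by $\chi_{\La_L}H_0\chi_{\La_L}$ plus the potential at the other sites, so neither periodicity nor \ref{item: P1} is needed for the finite-volume statement. If that does not apply verbatim, the fallback is a direct layer-cake estimate of the Poisson-kernel integral against $\rho$. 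The remaining steps, namely Chebyshev's inequality and summation over sites, are routine.
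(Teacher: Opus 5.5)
Your proposal is correct and is essentially the paper's argument. The paper gives no separate proof; it invokes Krishna's single-site spectral averaging bound, the same one behind \eqref{ids continuity energy}. Your Chebyshev bound plus the sum over the $n|\La_L|$ diagonal entries, with $|I|=2\eps$, reproduces exactly the constant $4\pi nC_\tau(\rho)|\La_L|\eps^\tau/\la^\tau$.

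If you carry out the averaging by hand, do not let $\eta\to 0$: for $\tau<1$ the pointwise layer-cake bound degenerates like $(\textup{Im}(-1/F_0))^{\tau-1}$. Instead use $\mu_v(I)\le 2\eps\,\textup{Im}\,F_v(E+\iu\eps)$ together with $\textup{Im}(-1/F_0(z))\ge \textup{Im}\,z$. The layer-cake estimate then gives the required per-site bound with room to spare, namely at most $2^\tau\pi C_\tau(\rho)\eps^\tau/\la^\tau$.
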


In order to prove Theorem \ref{inital length estimate}, we show the following (slightly) stronger result, in which we only deal with internal band edges. For external bands edges we can repeat the same proof, replacing periodic restrictions with simple boundary restrictions, without any commensurability assumption on the magnetic flux.

\begin{thm}\label{appendix band edge thm}
    Assume that $H_{\la,\omega}$ satisfies hypothesis \ref{item: P1}, with $B=\frac{2\pi p}{q}$ for some $p\in\N,q\in\mathbb N^*$, \ref{item: P2}-\ref {item: P4}, \ref{item: R1} and \ref{item: R2}. Then, for all $\la<\frac{|\G_i|}{a+b}$ with $i\in\set{1,\ldots, N-1}$, for all $\theta>\frac{2}{\tau}$ it holds that
    \begin{align}
        &\lim_{\substack{L\to\infty\\ L\in 3\N^*+4r}}\mathbb{P}(\La_{qL} \ \textup{is} \ (\omega,\theta,\beta_i+b\la)\textup{-suitable with range} \ qr)=1 \label{input 1} \\
        &\lim_{\substack{L\to\infty\\L\in 3\N^*+4r}}\mathbb{P}(\La_{qL} \ \textup{is} \ (\omega,\theta,\alpha_{i+1}-a\la)\textup{-suitable with range} \ qr)=1. \label{input 2}
    \end{align}
\end{thm}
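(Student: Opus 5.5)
We follow the Figotin--Klein strategy: the initial length scale estimate at the internal band edge $E_0=\beta_i+b\la$ (the other edge $\alpha_{i+1}-a\la$ is symmetric, using the left tail of $\rho$). The idea is that, with high probability, the random variables in a box of side $qL$ avoid the top of the support by a margin $\eps_L$. Then $E_0$ lies at a definite distance from the spectrum of a suitable periodic approximant, and Combes--Thomas gives decay of the finite-volume resolvent.

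The first step sets up the periodic approximant. Since $B=2\pi p/q$, the phases $\eu^{-\iu B(\ga\wedge\xi)}$ are trivial for $\ga\in q\Ga$, so $H_0$ is $q\Ga$-periodic, hence $qL\Ga$-periodic. Lemma \ref{property-per} therefore applies to the periodization $H^{qL}_{\la,\omega}$ and its periodic restriction $\mathring H^{qL}_{\la,\omega,\La_{qL}}$. Working at scale $qL$ with range $qr$ is what makes the lattice of periods compatible with the magnetic translations.

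The second step is the probabilistic event. Fix $\eps=\eps_L=L^{-\kappa}$ and let $\mathcal E_L$ be the event that $\omega_{\ga,j}\le b-\eps$ for all $(\ga,j)\in\La_{qL}\times\set{1,\ldots,n}$. By \ref{item: R2},
\begin{equation*}
\mathbb P(\mathcal E_L^c)\le n q^2L^2 C\eps^\beta = nq^2C L^{2-\kappa\beta}.
\end{equation*}
This tends to $0$ provided $\kappa>2/\beta$, and $\beta>2$ leaves room to take $\kappa<1$.

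The third step is the spectral gap and the resolvent decay on $\mathcal E_L$. The periodized potential is again bounded by $b-\eps$ (and below by $-a$), so by the perturbation bound already used (\cite[Thm. V.4.10]{Kato 1995}),
\begin{equation*}
\sigma(H^{qL}_{\la,\omega})\subset\sigma(H_0)+\la[-a,b-\eps].
\end{equation*}
The condition $\la<|\G_i|/(a+b)$ keeps the band $\mathcal B_{i+1}(\la)$ away, so $\dist(E_0,\sigma(H^{qL}_{\la,\omega}))\ge\la\eps$. By Lemma \ref{property-per}-\eqref{point 3 per} the same holds for $\mathring H^{qL}_{\la,\omega,\La_{qL}}$, and Theorem \ref{combes thomas periodic} gives, with $\Delta=\la\eps$ small,
\begin{equation*}
\|\mathring G(\ga,\xi;E_0)\|\lesssim\Delta^{-1}\Bigl(1+\frac{C}{\Delta L}\Bigr)^2 \eu^{-c\Delta\,\dist(\ga-\xi,qL\Ga)}.
\end{equation*}
For $\ga$ in the core and $\xi$ in the boundary strip, $\dist(\ga-\xi,qL\Ga)\gtrsim L$, so the right-hand side is at most $\mathrm{poly}(L)\,\eu^{-cL^{1-\kappa}}$. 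This beats $(qL)^{-\theta}$ for large $L$ because $\kappa<1$.

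The fourth step, which I expect to be the main obstacle, transfers this bound from the periodic restriction to the simple-boundary resolvent $G_{\la,\omega,\La_{qL}}(E_0)$ appearing in Definition \ref{definition good box}. Lemma \ref{property-per}-\eqref{point 2 per} says the two matrices agree except on boundary--boundary entries. The second resolvent identity then gives
\begin{equation*}
G=\mathring G+\mathring G\,W\,G,
\end{equation*}
where $W$ is supported on $\partial\La^{(qr)}_{qL}\times\partial\La^{(qr)}_{qL}$ and has bounded norm. For $\ga$ in the core and $\xi$ on the boundary:
\begin{enumerate}
\item the first term $\mathring G(\ga,\xi)$ is already small by the third step;
\item the correction term is bounded by $\sum_{\zeta,\zeta'\in\partial}\|\mathring G(\ga,\zeta)\|\,\|W\|\,\|G(\zeta',\xi)\|$, in which $\|\mathring G(\ga,\zeta)\|$ is exponentially small and $\|G\|\le\dist(E_0,\sigma(H_{\la,\omega,\La_{qL}}))^{-1}$.
\end{enumerate}
To control that last distance I would intersect $\mathcal E_L$ with the Wegner event $\dist(E_0,\sigma(H_{\la,\omega,\La_{qL}}))\ge L^{-\theta'}$. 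Its failure probability is at most $C L^2 L^{-\tau\theta'}/\la^\tau$, which tends to $0$ once $\theta'>2/\tau$. On the intersection, the correction term is $\mathrm{poly}(L)\,\eu^{-cL^{1-\kappa}}\le (qL)^{-\theta}$, which proves \eqref{input 1}. The requirement $\theta>2/\tau$ in the statement matches this Wegner input, so the main care needed is choosing $\kappa\in(2/\beta,1)$ and $\theta'$ consistently.
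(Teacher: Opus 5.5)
Your proposal is correct and follows the paper's Figotin--Klein argument step for step: the event that all $\omega_{\ga,j}$ in $\La_{qL}$ stay a margin below $b$ (controlled by \ref{item: R2}), the $qL\Ga$-periodization with the periodic Combes--Thomas bound, the resolvent identity $G=\mathring G+\mathring G R G$ with $R$ supported on boundary--boundary entries, and a Wegner event to bound $\norm{G}$. The only difference is the size of the margin. Your $\eps_L=L^{-\kappa}$ with $\kappa\in(2/\beta,1)$ gives stretched-exponential decay $\eu^{-cL^{1-\kappa}}$ and lets you choose the Wegner exponent independently of $\theta$; the paper's $\Delta_{qL}\propto\log(qL)/(qL)$, tuned by the factor $3\theta+5$, instead gives exactly polynomial decay $(qL)^{-(3\theta+5)}$, ties the Wegner threshold to $(qL)^{\theta}$, and yields the better tail estimate $(qL)^{2-\beta}(\log qL)^{\beta}$, whose explicit thresholds in $L$ it reuses to quantify the band-edge localization region.
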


\begin{proof}
    We only prove \eqref{input 1} for the right spectral band edge $E_0=\beta_i+b\la$, since the same proof applies to the case of the left spectral band edge. Note that, since \ref{item: P1} hold with $B=\frac{2\pi p}{q}$, we have that $H_0$ is $q\Ga$-periodic and so we can consider  $H_{\la,\omega}^{qL}$, the $qL\Gamma$-periodization of $H_{\la,\omega}$. Note also that $H_0$ satisfies \ref{item: P2} with radius $qr$. This motivates the use of boxes $\La_{qL}$. We will  consider the restriction of $H_{\la,\omega}$ to $\La_{qL}$ with simple boundary conditions, and show that for all $\theta>\frac{2}{\tau}$, the event 
    \begin{equation*}
    \left\{ \omega \in \Omega:\,\, \norm{G_{\la,\omega,\La_{qL}}(\ga,\xi;E_0)}\leq \frac{1}{(qL)^{\theta}}   \mbox{ for all }\,\ga\in\La_{\frac{qL+2qr}{3}},\xi\in\partial\La_{qL}^{(qr)}  \right\}
    \end{equation*}
    has a probability that tends to 1 as $L$ tends to infinity. Let $L\in 3\N^*+4r$, and consider 
    \begin{equation}\label{decomp}
            H_{\la,\omega,\La_{qL}}=H^{qL}_{\la,\omega,\La_{qL}}=\mathring{H}^{qL}_{\la,\omega,\La_{qL}}+R_{\La_{qL}}
    \end{equation}
    where $R_{\La_{qL}}$ is a matrix representing the difference between restrictions with periodic and simple boundary conditions, that satisfies:
    \begin{enumerate}
        \item if $\ga\in \La_{qL}\setminus \partial\La_{qL}^{(qr)}$ or $\xi\in\La_{qL}\setminus\partial\La_{qL}^{(qr)}$ then $R_{\La_{qL}}(\ga,\xi)=0$ by Lemma \ref{property-per}-\eqref{point 2 per};
        \item $\norm{R_{\La_{qL}}(\ga,\xi)}\leq 8\norm{H_0}$ for all $\ga,\xi\in\partial\La^{(qr)}_{qL}$.
    \end{enumerate}
    
    Given $\alpha>0$ and $\theta>\frac{2}{\tau}$ define
    \begin{equation}\label{delta}
        \Delta_{qL}=\frac{2\sqrt 2 S_\alpha(3\theta+5)}{\alpha\la}\frac{8\log (qL)}{qL},
    \end{equation}
    with $S_\alpha$ given in Proposition \ref{combes thomas}. Let $qL$ be large enough, depending on $\alpha,H_0,a,b,\theta,\la$, such that
    \begin{equation}\label{restriction 2}
        \Delta_{qL}<b+a
    \end{equation}
    and define the event
    \begin{equation}\label{event R2}
        \mathcal{E}_{qL}=\{\omega\in\Omega: \omega_{\ga,j}\in [-a,b-\Delta_{qL}), \ \textup{for all} \ \ga\in\La_{qL},j\in\{1,\dots,n\}\}.
    \end{equation}
    If $\omega\in \mathcal{E}_L$ we have, by \cite[Thm. V.4.10]{Kato 1995}, that $\sigma(H^{qL}_{\la,\omega})\subset\sigma(H_0)+\la[-a,b-\Delta_{qL}]$. So, taking $qL$ large enough, depending on $\alpha,H_0,a,b,\theta,\la$, so that 
    \begin{equation}\label{restriction 3}
        \la\Delta_{qL}\leq |\G_i(\la)|=|\G_i|-\la(a+b)
    \end{equation}
    we obtain
    \begin{equation}\label{dist-delta}
        \la\Delta_{qL}\leq \textup{dist}(\sigma(H_{\la,\omega}^{qL}),E_0).
    \end{equation}
    Therefore, for $\omega\in \mathcal{E}_L$, $E_0\notin \sigma(H_{\la,\omega}^{qL})$, and, by Lemma \ref{property-per}-(3), $E_0\notin \sigma(\mathring{H}^{qL}_{\la,\omega,\La_{qL}})$. 
         
    Using the resolvent identity we get, from \eqref{decomp}, for all $\gamma,\xi\in\La_{qL}$
    \begin{align*}
        &G_{\la,\omega,\La_{qL}}(E_0)=\mathring{G}^{qL}_{\la,\omega,\La_{qL}}(E_0)+\mathring{G}^{qL}_{\la,\omega,\La_{qL}}(E_0)R_{\La_{qL}}G_{\la,\omega,\La_{qL}}(E_0) \\
        &G_{\la,\omega,\La_{qL}}(\ga,\xi;E_0)=\mathring{G}^{qL}_{\la,\omega,\La_{qL}}(\ga,\xi;E_0)\\
        &\hspace{2.6cm}+\sum_{\zeta,\zeta'\in\partial\La_{qL}^{(qr)}}\mathring{G}^{qL}_{\la,\omega,\La_{qL}}(\ga,\zeta;E_0)R_{\La_{qL}}(\zeta,\zeta')G_{\la,\omega,\La_{qL}}(\zeta',\xi;E_0). 
    \end{align*}
    To estimate the matrix elements of $\mathring{G}^{qL}_{\la,\omega,\La_{qL}}$ in the right-hand side, we can use Theorem \ref{combes thomas periodic} with $\Delta_{qL}$ replacing $\Delta$ there, since we have \eqref{dist-delta}. Then, for all  $\ga,\xi\in\La_{qL}$, and $qL$ large enough, depending on $\alpha,H_0,\theta$, such that 
    \begin{equation}\label{restriction 4}
        \lambda \Delta_{qL}\leq 2S_\alpha
    \end{equation} 
    we obtain by the Combes-Thomas estimate for the periodic restriction
    \begin{equation*}
        \norm{\mathring{G}^{qL}_{\la,\omega,\La_{qL}}(\ga,\xi;E_0)}\leq \frac{2}{\la\Delta_{qL}}\bigg(1+\frac{2}{1-\eu^{-\frac{\alpha\la\Delta_{qL}}{ 2\sqrt 2S_\alpha}qL}}\bigg)^2\eu^{-\frac{\alpha\la\Delta_{qL}}{ 2\sqrt 2S_\alpha} \textup{dist}(\ga-\xi,qL\Ga)}.
    \end{equation*}
    Taking $qL$ large enough, depending on $\theta$, such that
    \begin{equation}\label{restriction 5}
        \eu^{-\frac{\alpha \la\Delta_{qL} qL}{ 2\sqrt 2S_\alpha}}\leq \frac{1}{3}
    \end{equation}
    we obtain for all $\ga,\xi\in\La_L$ that
    \begin{equation*}
        \norm{\mathring{G}^{qL}_{\la,\omega,\La_{qL}}(\ga,\xi;E_0)}\leq \frac{2^5}{\la\Delta_{qL}}\eu^{-\frac{\alpha\la\Delta_{qL}}{ 2\sqrt 2S_\alpha}\textup{dist}(\ga-\xi,qL\Ga)}.
    \end{equation*}
       
    Since we are interested in the suitable box condition given in Definition \ref{definition good box}, we are only considering $\ga\in\La_{\frac{qL+2qr}{3}},\xi\in\partial\La_{qL}^{(qr)}$, for which
    \begin{align*}
        \textup{dist}(\ga-\xi,qL\Ga)&=\min_{\eta\in\Ga}|\ga-\xi-qL\eta|\geq \min_{\eta\in\Ga}|\ga-\xi-qL\eta|_\infty \\
        &\geq \min_{\eta\in\Ga}\Big||\ga-\xi|_\infty-qL|\eta|_\infty\Big|=\min\set{|\ga-\xi|_\infty,qL-|\ga-\xi|_\infty}\geq  \frac{qL}{3}-\frac{4qr}{3}.
    \end{align*}
    Thus for all $\ga\in \La_{\frac{qL+2qr}{3}},\xi\in\partial\La_{qL}^{(qr)}$ we obtain
    \begin{equation*}
        \norm{G_{\la,\omega,\La_{qL}}(\ga,\xi;E_0)}\leq \frac{2^5}{\la\Delta_{qL}}\eu^{-\frac{\alpha\la\Delta_{qL}}{ 2\sqrt 2S_\alpha}\left(\frac{qL}{3}-\frac{4qr}{3}\right)}\Big(1+8\norm{H_0}(qL)^{4}\norm{G_{\la,\omega,\La_{qL}}(E_0)}\Big).
    \end{equation*}
    Define the event
    \begin{equation*}
        \mathcal{W}_{qL}(E_0)=\set{\omega\in\Omega:\norm{G_{\la,\omega.\La_{qL}}(E_0)}\leq (qL)^\theta}.
    \end{equation*}
    Let $qL\geq \eu$ and $qL\geq \frac{32}{5}qr$, so that $\frac{qL}{3}-\frac{4qr}{3}\geq \frac{qL}{8}$. If $\omega\in\mathcal{E}_{qL}\cap\mathcal{W}_{qL}(E_0)$, recalling the definition of $\Delta_{qL}$ in \eqref{delta}, we get
    \begin{align}
        &\norm{G_{\la,\omega,\La_{qL}}(\ga,\xi;E_0)} \nonumber\\
        &\leq \frac{8\sqrt 2\alpha}{S_\alpha(3\theta+5)}\frac{qL}{8\log(q L)}\eu^{-(3\theta+5)\frac{8\log (qL)}{qL}\left(\frac{qL}{3}-\frac{4qr}{3}\right)}\Big(1+8\norm{H_0}(qL)^{4+\theta}\Big) \nonumber\\
        &\leq\frac{\sqrt 2\alpha}{S_\alpha(3\theta+5)} \frac{qL}{(qL)^{3\theta+5}}\Big(1+8\norm{H_0}(qL)^{4+\theta}\Big) \nonumber\\
        &=\frac{\sqrt 2\alpha(1+8\norm{H_0})}{S_\alpha(3\theta+5)}\frac{(qL)^{5+\theta}}{(qL)^{3\theta+5}}=\frac{\sqrt 2\alpha(1+8\norm{H_0})}{S_\alpha(3\theta+5)}\frac{1}{(qL)^{2\theta}}\leq \frac{1}{(qL)^{\theta}} \label{restriction 6}
    \end{align}
    which holds for $qL$ large enough depending on $\alpha,H_0,r,\theta$. Therefore the probability of the box $\La_{qL}$ not being $(\theta,E_0)$-suitable can be estimated as
    \begin{align}\label{bound-G}
        \mathbb{P}\bigg(\norm{G_{\la,\omega,\La_{qL}}(\ga,\xi;E_0)}>\frac{1}{(qL)^\theta} \ \text{for some} \ \ga\in\La_{\frac{qL+2qr}{3}},\xi\in\partial\La_{qL}^{(qr)}\bigg)\leq\mathbb{P}(\mathcal{E}_{qL}^{\complement})+\mathbb{P}(\mathcal{W}_{qL}(E_0)^\complement).
    \end{align}
    We estimate the first term in the right-hand side using hypothesis \ref{item: R2}, to get
    \begin{align}\label{restriction 7}
        \mathbb{P}(\mathcal{E}_{qL}^\complement)&=1-\mathbb{P}(\mathcal{E}_{qL})=1-n(qL)^2\rho([-a,b-\Delta_{qL})) \nonumber \\
        &\leq n(qL)^2(1-\rho([-a,b-\Delta_{qL}))=n(qL)^2\rho([b-\Delta_{qL},b]) \nonumber \\
        &\leq Cn\bigg(\frac{2\sqrt 2 S_\alpha(3\theta+5)}{\alpha\la}\bigg)^\beta(qL)^2\bigg(\frac{8\log (qL)}{qL}\bigg)^\beta
    \end{align}
    which tends to zero as $L$ tends to $\infty$,  since $\beta>2$. For the second term instead we use the Wegner estimate \eqref{wegner estimate}
    \begin{equation}\label{restriction 8}
        \mathbb{P}(\mathcal{W}_{qL}(E_0)^\complement)=\mathbb{P}\bigg(\text{dist}(\sigma(H_{\la,\omega,\La_{qL}}),E_0)<\frac{1}{(qL)^\theta}\bigg)\leq \frac{4\pi nC_\tau(\rho)}{\la^\tau} \frac{(qL)^{2}}{(qL)^{\tau\theta}}
    \end{equation}
    which tends to zero as $L$ tends to $\infty$,  since $\theta>\frac{2}{\tau}$. Inserting \eqref{restriction 7} and \eqref{restriction 8} in \eqref{bound-G} yields \eqref{input 1}.
\end{proof}

\subsubsection{Region of band edge localization}\label{band edge region}

It was shown in \cite[Thm. 4.2]{Germinet Klein 2004} that, given $\theta>\frac{2}{\tau}$, the verification of the initial length scale estimate, appearing in \eqref{input 0 or}-\eqref{input 2 or}, at an energy level $E$ is equivalent to the existence of a finite scale $L^{(0)}=L_\theta^{(0)}(E)>0$ such that
\begin{equation}\label{841}
    \mathbb{P}(\La_{qL} \ \textup{is} \ (\omega,\theta,E)\textup{-suitable with range} \ qr)>1-p_0
\end{equation}
for some $L\geq L^{(0)}$ with $L\in 3\N^*+4r$, and for some $p_0\in (0,1)$ independent of the length scale $L^{(0)}$, see \cite[Thm. 5.1]{Germinet Klein 2001}, \cite[Lemma 36]{Figotin Klein 1996}. 

Given $\la\Delta_{qL}$ as defined in \eqref{delta} consider the interval
\begin{equation*}
    I_{\la,qL}=\left[\beta_i+b\la-\frac{\la\Delta_{qL}}{2}, \beta_i+b\la\right].
\end{equation*}
Let $\Delta_{qL}<b+a$ and consider $\omega\in\mathcal{E}_{qL} $, as defined in \eqref{event R2}. If $\frac{\la\Delta_{qL}}{2}\leq |\G_i|-\la(a+b)$ we get
\begin{equation*}
    \sup_{E\in I_{\la,qL}}\dist (E,\sigma(H_{\la,\omega}^{qL}))\geq \frac{\la \Delta_{qL}}{2}.
\end{equation*}
Since the last estimate is uniform in $E\in I_{\la,L}$, we can repeat the same proof of Proposition \ref{appendix band edge thm}, replacing $\la\Delta_{qL}$ with $\frac{\la\Delta_{qL}}{2}$ and imposing \eqref{restriction 7}, \eqref{restriction 8} to be smaller than $\frac{p_0}{2}$, obtaining, for all $\theta>\frac{\tau}{2}$, the existence of a finite scale $L^{(0)}=L^{(0)}_{\alpha,H_0,a,b,\theta,\beta,\tau,\rho,n}(\la)>0$ such that
\begin{equation}\label{841 uniform}
    \sup_{E\in I_{\la,qL}}\mathbb{P}(\La_{qL} \ \textup{is} \ (\omega,\theta,E)\textup{-suitable with range} \ qr)> 1-p_0
\end{equation}
for all $L\geq L^{(0)}$ with $L\in 3\N^*+4r$. As a consequence, given $\la<\frac{|\G_i|}{a+b}$, we have that $H_{\la,\omega}$ exhibits DL in $I_{\la,qL^{(0)}}=\left[\beta_i+b\la-\frac{\la\Delta_{qL^{(0)}}}{2},\beta_i+b\la\right]$. Hence, in order to give more information on how the region of band edge localization depends on  the disorder parameter $\la$, we need to understand the explicit dependence of $qL^{(0)}$ on $\la$. To do so,  we have to keep track of all the restrictions on the value of $L$ that we imposed throughout the proof of Proposition \ref{appendix band edge thm}. 

These restrictions are given in \eqref{restriction 2}, \eqref{restriction 3}, \eqref{restriction 4}, \eqref{restriction 5}, \eqref{restriction 6}, \eqref{restriction 7}, \eqref{restriction 8} (notice that in \eqref{restriction 3}, \eqref{restriction 4}, \eqref{restriction 5}, \eqref{restriction 6} we have to replace $\la\Delta_{qL}$ with $\frac{\la\Delta_{qL}}{2}$). From these conditions, we can derive a lower bound for the value of $L$ in order for \eqref{841 uniform} to hold. To do so, let $\eps\in (0,1)$, $c_\eps>0$, $c_{\eps,\beta}>0$ be such that
\begin{equation*}
    \log x\leq c_\eps\,x^\eps, \quad \log x\leq c_{\eps,\beta}\,x^{\frac{\eps(\beta-2)}{\beta}}, \quad \textup{for all} \ x>0.
\end{equation*}
Then a lower bound for $L$ is given by the maximum of the following thresholds, coming respectively from \eqref{restriction 2}, \eqref{restriction 3}, \eqref{restriction 4}, \eqref{restriction 5}, \eqref{restriction 6}, \eqref{restriction 7}, \eqref{restriction 8}:
\begin{gather*}
    qL^{(1)}=qL^{(1)}_{\alpha,H_0,a,b,\theta,\eps}(\la)=\bigg(\frac{32\sqrt2S_\alpha (3\theta+5)c_\eps}{\alpha(a+b)}\bigg)^{\frac{1}{1-\eps}}\bigg(\frac{1}{\la}\bigg)^{\frac{1}{1-\eps}} \\
    qL^{(2)}=qL^{(2)}_{\alpha,H_0,a,b,\theta,\eps}(\la)=\bigg(\frac{8\sqrt2S_\alpha (3\theta+5)c_\eps}{\alpha(a+b)}\bigg)^{\frac{1}{1-\eps}}\left(\frac{1}{\frac{|\G_i|}{a+b}-\la}\right)^{\frac{1}{1-\eps}} \\
    qL^{(3)}=qL^{(3)}_{\alpha,H_0,\theta,\eps}=\bigg(\frac{4\sqrt2 (3\theta+5)c_\eps}{\alpha}\bigg)^{\frac{1}{1-\eps}},\quad qL^{(4)}=qL^{(4)}_\theta=3^{\frac{1}{4(3\theta+5)}} \\
    qL^{(5)}=qL^{(5)}_{\alpha,H_0,\theta}=\max\set{\eu,16qr,\bigg(\frac{2\sqrt{2} \alpha (1+8\norm{H_0}}{S_\alpha (3\theta+5)}\bigg)^{\frac{1}{\theta}}} \\
    qL^{(6)}=qL^{(6)}_{\alpha,H_0,\beta,\theta,n,\eps}(\la)=\left(2 C n p_0\bigg(\frac{16\sqrt{2}S_\alpha(3\theta+5)c_{\eps,\beta}}{\alpha}\bigg)^\beta \right)^{\frac{1}{(\beta-2)(1-\eps)}}\bigg(\frac{1}{\la}\bigg)^{\frac{\beta}{(\beta-2)(1-\eps)}} \\
    qL^{(7)}=qL^{(7)}_{\theta,\tau,\rho,n}(\la)=(8\pi n C_\tau(\rho)p_0)^{\frac{1}{\tau\theta-2}}\bigg(\frac{1}{\la}\bigg)^{\frac{\tau}{\tau\theta-2}}.
\end{gather*}

Let $K_\eps=K_{\alpha,H_0,a,b,\theta,\beta,\tau,\rho,n,\eps}$ be the maximum of all the numerical constants appearing in each $qL^{(i)}$, for $i=1,\ldots,7$, which do not depend on $\la$. Define 
\begin{align*}
    qL^{(0)}&=qL^{(0)}_{\alpha,H_0,a,b,\theta,\beta,\tau,\rho,n,\eps}(\la)\\
    &=K_\eps\max\set{1, \bigg(\frac{1}{\la}\bigg)^{\frac{1}{1-\eps}},\bigg(\frac{1}{\la}\bigg)^{\frac{\beta}{(\beta-2)(1-\eps)}},\bigg(\frac{1}{\la}\bigg)^{\frac{\tau}{\tau\theta-2}},\left(\frac{1}{\frac{|\G_i|}{a+b}-\la}\right)^{\frac{1}{1-\eps}}}.
\end{align*}
By definition we have $L^{(0)}\geq \max\set{L^{(1)},L^{(2)},L^{(3)},L^{(4)},L^{(5)},L^{(6)},L^{(7)}}$. We choose $\theta>\frac{2}{\tau}$ such that
\begin{equation*}
    \frac{\beta}{(\beta-2)(1-\eps)}=\frac{\tau}{\tau\theta-2}\iff \theta =\frac{2}{\tau}+\bigg(1-\frac{2}{\beta}\bigg)(1-\eps).
\end{equation*}
Since $\frac{\beta}{\beta-2}>1$ for all $\beta>2$, we have
    $\la^{-\frac{1}{1-\eps}}\leq \la^{-\frac{\beta}{(\beta-2)(1-\eps)}}$ for all $\la\leq 1$.
Hence it holds that
\begin{equation*}
    qL^{(0)}=K_\eps\max\set{1, \bigg(\frac{1}{\la}\bigg)^{\frac{\beta}{(\beta-2)(1-\eps)}},\left(\frac{1}{\frac{|\G_i|}{a+b}-\la}\right)^{\frac{1}{1-\eps}}}.
\end{equation*}
Recalling the expression \eqref{delta} of $\Delta_{qL}$ we obtain
\begin{align*}
    \la\Delta_{qL^{(0)}}&=\frac{16\sqrt2S_\alpha(3\theta+5)}{\alpha}\frac{\log(qL^{(0)})}{qL^{(0)}} \\
    &=\frac{16\sqrt2S_\alpha(3\theta+5)}{\alpha}\frac{\log\left(K_\eps\max\set{1,\la^{-\frac{\beta}{(\beta-2)(1-\eps)}},\left(\frac{|\G_i|}{a+b}-\la\right)^{-\frac{1}{1-\eps}}}\right)}{K_\eps\max\set{1,\la^{-\frac{\beta}{(\beta-2)(1-\eps)}},\left(\frac{|\G_i|}{a+b}-\la\right)^{-\frac{1}{1-\eps}}}} \\
    &\geq \frac{16\sqrt2S_\alpha(3\theta+5)}{\alpha}\frac{\log K_\eps}{K_\eps}\min\set{1,\la^{\frac{\beta}{(\beta-2)(1-\eps)}},\left(\frac{|\G_i|}{a+b}-\la\right)^{\frac{1}{1-\eps}}}=f(\la).
\end{align*}
Notice that the last inequality is meaningful because $\log K_\eps\geq1$, since $K_\eps\geq qL^{(5)}\geq \eu$.
So, if $\la<\frac{|\G_i|}{a+b}$, it holds that $H_{\la,\omega}$ exhibits DL in 
\begin{equation*}
    \left[\beta_i+b\la-\frac{f(\la)}{2},\beta_i+b\la\right]\subset \left[\beta_i+b\la-\frac{\la\Delta_{qL^{(0)}}}{2},\beta_i+b\la\right].
\end{equation*}
Therefore we obtain the following region of band edge localization
\begin{equation*}
    \set{(E,\la):\beta_i+b\la-\frac{D\log K_\eps}{K_\eps}\min\set{1,\la^{\frac{\beta}{(\beta-2)(1-\eps)}},\left(\frac{|\G_i|}{a+b}-\la\right)^{\frac{1}{1-\eps}}}\leq E\leq \beta_i+b\la}
\end{equation*}
where $D=\frac{16\sqrt{2}S_\alpha(3\theta+5)}{2\alpha}$. From the previous expression we see that, for sufficiently small value of $\la$, the dominating term in $f(\la)$ is $\la^{\frac{\beta}{(\beta-2)(1-\eps)}}$, while if $\la$ is close enough to $\frac{|\G_i|}{a+b}$ the dominating term is given by $\left(\frac{|\G_i|}{a+b}-\la\right)^{\frac{1}{1-\eps}}$. For intermediate values of $\la$ the function $f(\la)$ remains constant.

\bigskip
\flushleft
{\footnotesize
\begin{tabular}{ll}

(G. Panati)
&  \textsc{Dipartimento di Matematica, \virg{La Sapienza} Universit\`{a} di Roma} \\
&  Piazzale Aldo Moro 2, 00185 Rome, Italy \\
&  {E-mail address}: \href{mailto:panati@mat.uniroma1.it}{\texttt{panati@mat.uniroma1.it}} \\
\\
(C. Rojas-Molina)
&  \textsc{Laboratoire de Mathématiques AGM, UMR CNRS 8088} \\
& \textsc{CY Cergy Paris Université} \\
&  2 Avenue Adolphe Chauvin, 95300 Cergy, France \\
&  {E-mail address}: \href{mailto:crojasmo@cyu.fr}{\texttt{crojasmo@cyu.fr}} \\
\\
(V. Rossi)
&  \textsc{GSSI - Gran Sasso Science Institute} \\
&  Viale Francesco Crispi 7, 67100 L'Aquila, Italy \\
&  {E-mail address}: \href{mailto:vincenzo.rossi@gssi.it}{\texttt{vincenzo.rossi@gssi.it}} \\
\\

\end{tabular}
}

\end{document}